\documentclass{article}
\usepackage{style} 
\usepackage[style=alphabetic,maxalphanames=4,minalphanames=4]{biblatex}

\bibliography{main.bib}

\DeclareMathOperator{\supp}{supp}

\title{An (almost) efficient classical algorithm for sampling from typical Gibbs states}
\author{Saleh Naghdi\thanks{Institute for Quantum Information and Matter, Caltech; e-mail: {\href{mailto:mnaghdi@caltech.edu}{\texttt{mnaghdi@caltech.edu}}}} \and Eric R.\ Anschuetz\thanks{Department of Physics, National University of Singapore; Department of Computer Science, National University of Singapore; Institute for Quantum Information and Matter, Caltech; Walter Burke Institute for Theoretical Physics, Caltech; e-mail: {\href{mailto:e-ans@nus.edu.sg}{\texttt{e-ans@nus.edu.sg}}}}}
\date{}

\begin{document}

\maketitle
\begin{abstract}
Gibbs state preparation has attracted significant interest as a potential route to demonstrating quantum computational advantage in the study of physical systems. Although quantum advantage is known for computing local properties of worst-case Gibbs states, recent results drawing on spin glass theory suggest a different picture for average-case local systems: at temperatures where their Gibbs states can be efficiently prepared quantumly, quasi-polynomial time classical algorithms can also compute local expectation values. These results, however, leave open the possibility of an advantage under a stronger notion of simulation: \emph{sampling}. Gibbs sampling has been used as a subroutine in Boltzmann machine-based quantum learning algorithms, quantum supremacy proposals, and classifying phases in quantum many-body physical systems, though its classical average-case complexity for physically relevant models remains unresolved.

Here, we give strong evidence that there is \emph{no} exponential quantum advantage in Gibbs sampling for an ensemble of local systems known as the quantum $p$-spin model. We describe a classical algorithm for sampling from the Gibbs state of this model to any polynomially small error in total variation distance. Our algorithm is based on a meta-algorithm known as algorithmic stochastic localization (ASL), which was recently used to construct a polynomial-time sampling algorithm for the classical Sherrington--Kirkpatrick model to temperatures down to the spin glass phase transition. Our main technical result is a generalized convergence theorem for ASL which reduces approximate sampling from any distribution on the hypercube to estimating the mean of a related ``tilted distribution'' to sufficiently low additive error. We then extend known quasi-polynomial time algorithms for computing local expectation values of quantum spin glass systems to give a classical algorithm for computing these ``tilted means'' for the quantum $p$-spin model. Finally, we give a non-rigorous physics argument that this mean-estimation algorithm works down to the spin glass transition temperature of the model. As the spin glass transition is precisely where efficient quantum algorithms also fail, this suggests that efficient quantum algorithms cannot Gibbs sample the quantum $p$-spin model at temperatures below that for which classical algorithms are also (almost) efficient.
\end{abstract}

\tableofcontents

\section{Introduction}

Quantum simulation is widely touted as one of the most promising use-cases for quantum computers in the coming decades~\cite{Feynman_1982,Lloyd_1996,McArdle_2020}. One setting which has attracted particular attention in recent years is the study of low-temperature properties of quantum many-body systems. This focus is partly driven by the industrial relevance of this task---many problems in chemistry, biology, and physics can be considered as special cases of this general problem---but also driven by theoretical evidence of there being a quantum-classical separation in the computational complexity of this task. In particular, conditioned on widely-believed complexity theoretic assumptions, it is known that there exist ``worst case'' instances for which estimating local observables in low-temperature quantum thermal states is classically hard yet quantumly easy~\cite{rouze2025efficient,chen2024local}.

That said, just because a problem is difficult in the worst case does not mean it is difficult in practice. For instance, the Sachdev--Ye--Kitaev (SYK) model is an ensemble of local fermionic quantum many-body systems for which, for worst-case draws of the system from the ensemble, it is expected to be classically (and quantumly) hard to estimate local properties of its low-energy space~\cite{PhysRevLett.98.110503}. However, recent work has shown that with high probability over draws from this ensemble, there is a classical quasi-polynomial time algorithm for computing local expectation values of Gibbs states~\cite{zlokapa2026sykthermalexpectationsclassically,zlokapa2026rigorousquasipolynomialtimeclassicalalgorithm}. These algorithms are part of a class of algorithms for estimating local observables based on Barvinok's interpolation method~\cite{v011a013}, and for disordered systems are widely expected to work at temperatures down to the \emph{spin glass phase transition}~\cite{10.1145/3357713.3384322}---namely, to temperatures where this problem also becomes quantumly difficult~\cite{gamarnik2024slowmixingquantumgibbs,rakovszky2026bottlenecks,anschuetz2025efficientlearningimpliesquantum,zlokapa2025averagecasequantumcomplexityglassiness}. This suggests that for typical local Hamiltonians drawn from this ensemble, there is no exponential quantum advantage in estimating local properties of low-temperature physical systems; classical algorithms can efficiently probe the same temperatures that quantum algorithms can.

This previous line of work leaves open whether quantum algorithms are able to outperform classical algorithms under a stronger notion of simulation: \emph{sampling}. While this is perhaps less natural from a physical perspective, many quantum algorithms use sampling from thermal states as a subroutine, including Boltzmann machine-based learning algorithms~\cite{Amin_2018,Kieferova_2017,wiebe2019quantumlanguageprocessing}, quantum supremacy proposals~\cite{10756075,Rajakumar_2026}, and classifying phases in quantum many-body physical systems~\cite{Chiu_2019,Keesling_2019,Zhang_2025,Manovitz_2025}.

Here, we give strong evidence that there is \emph{not} a quantum advantage in sampling from the Gibbs state of typical local Hamiltonians. As a concrete example of an ensemble of local systems, we consider the \emph{quantum $p$-spin model}:
\begin{equation}
    H = \sum_{I \in \binom{[N]}{p}\times {\{X,Y,Z\}^p}} J_I \hat{\sigma}_I,
\end{equation}
where the $J_I$ denote i.i.d.\ Gaussian random variables normalized such that $H$ has operator norm $\operatorname{\Theta}(n)$ w.h.p.\ and the sum runs over all $p$-local Pauli operators. We then consider a meta-algorithm known as \emph{algorithmic stochastic localization} (ASL), which has seen much recent success as a classical technique for sampling from classical disordered systems in their paramagnetic phase~\cite{9996629,davies2026potentialhessianascentiii,lee2026potentialhessianascentiv}. We prove a general convergence theorem for this meta-algorithm to reduce the problem of sampling from a Gibbs state to estimating local observables in a ``tilted'' Gibbs state. We then show that recent quasi-polynomial time algorithms based on Barvinok's interpolation method~\cite{v011a013,10.1145/3357713.3384322,zlokapa2026sykthermalexpectationsclassically,zlokapa2026rigorousquasipolynomialtimeclassicalalgorithm} can be extended to compute local observables of these tilted systems. While---as is the case for all papers in this line of work---we are unable to compute the exact temperatures Barvinok's method can probe, we give physics-based evidence that this tilted mean estimation algorithm succeeds at any temperature above the spin glass phase transition. As the spin glass phase transition temperature is exactly where the problem also becomes quantumly difficult~\cite{gamarnik2024slowmixingquantumgibbs,rakovszky2026bottlenecks,anschuetz2025efficientlearningimpliesquantum,zlokapa2025averagecasequantumcomplexityglassiness}, this suggests that not only does no exponential quantum advantage exist for estimating local expectation values, but also none exists for sampling.

\subsection{Informal Presentation of Technical Results}

\subsubsection{General Convergence of Algorithmic Stochastic Localization}

Our main technical result is a reduction from Gibbs sampling for \emph{any} system to computing expectation values of a ``tilted'' Gibbs distribution. Formally, we show that the \emph{stochastic localization} process introduced in \cite{eldan2020taming,eldan2022109392} can be algorithmized under very general conditions.

Before we continue, we briefly describe the stochastic localization process and its interesting properties. Let $p\left(\bm{z}\right)$ be an arbitrary distribution on the $N$-dimensional hypercube, and consider the \emph{tilted distribution} for any $\bm{y}\in\mathbb{R}^N$:
\begin{equation}\label{eq:tilt_dist}
    p_{\bm{y}}\left(\bm{z}\right)\propto\exp\left(\bm{y}^\intercal\cdot\bm{z}\right)p\left(\bm{z}\right)
\end{equation}
with mean:
\begin{equation}
    \bm{m}_{\bm{y}}=\sum_{\bm{z}\in\left\{-1,1\right\}^N}p_{\bm{y}}\left(\bm{z}\right)\bm{z}.
\end{equation}
Finally, let $\dd{\bm{B}_t}$ denote standard Brownian motion. Then, the stochastic localization process is defined as:
\begin{equation}\label{eq:stoch_loc_informal}
    \dd{\bm{u}_t}=\bm{m}_{\bm{u}_t}\dd{t}+\dd{\bm{B}_t},\quad\bm{u}_0=0
\end{equation}
over $t\in\mathbb{R}_{\geq 0}$. What makes stochastic localization interesting is that, in the $t\to\infty$ limit, $\bm{m}_{\bm{u}_t}$ almost surely localizes to a point $\bm{m}_\infty$ on the hypercube; furthermore, over the randomness of the Brownian motion, $\bm{m}_\infty$ is distributed according to the untilted distribution $p$. These properties can be seen either through direct computation (see \cite{eldan2022109392}) or via a Bayesian interpretation of the stochastic localization process \cite[Section~4]{AFST_2023_6_32_5_939_0}. These properties suggest that if one were able to:
\begin{enumerate}
    \item exactly compute $\bm{m}_{\bm{y}}$ at any $\bm{y}$, and
    \item simulate Eq.~\eqref{eq:stoch_loc_informal} exactly for ``infinite time,''
\end{enumerate}
then one would be able to produce samples from $p\left(\bm{z}\right)$.

Of course, these two requirements are impossible to satisfy for any algorithm operating at finite precision. However, in certain settings---particularly, for sampling from the classical Sherrington--Kirkpatrick (SK) model---it has been shown that stochastic localization can be \emph{algorithmized} into what is known as algorithmic stochastic localization (ASL)~\cite{9996629,kellermann2026totalvariationguaranteessampling,davies2026potentialhessianascentiii,lee2026potentialhessianascentiv}. The tilted and untilted SK models have the nice property that their means can be described as the stationary point of what is known as the TAP free energy~\cite{Thouless_1977}, which in turn can be efficiently approximated outside of the spin glass phase using an algorithm known as approximate message passing (AMP)~\cite{doi:10.1137/20M132016X,chen2023}. This characterization allows the authors of \cite{9996629,davies2026potentialhessianascentiii,lee2026potentialhessianascentiv} to show that for the SK model, outside of the spin glass phase, the stochastic localization process converges quickly and is robust to the error in using AMP to estimate $\bm{m}_{\bm{y}}$.

However, no analogue of AMP is known to exist for quantum many-body systems. To circumvent this, we use a trick involving Girsanov's Theorem to show that ASL converges for \emph{any} choice of algorithm for estimating local expectation values of the tilted distribution, as long as it is sufficiently accurate. This results in our main technical result, proven in Section~\ref{sec:asl_conv}.
\begin{theorem}[Convergence of ASL, informal version of Theorem~\ref{thm:num_guar_sl}]\label{thm:inf_num_guar_sl}
Let $p$ be a distribution on $\{\pm1\}^N$, and let $\bm{m}(\bm u)$ denote the mean of its tilted distribution (Eq.~\eqref{eq:tilt_dist}) at tilt $\bm u\in\mathbb R^N$. Let $\hat{\bm{m}}:\mathbb R^N\to[-1,1]^N$ be an estimator with the error guarantee:
\begin{align}
    \frac{1}{\sqrt{N}}\sup_{\bm u\in\mathbb R^N}\norm{\hat{\bm{m}}(\bm u)-\bm{m}(\bm u)}_2\leq \rho.
\end{align}
Define the ASL process for a step size $\delta$ as
\begin{align}
    \hat{\bm u}_{k+1}
    =\hat{\bm u}_k+\delta\,\hat{\bm{m}}(\hat{\bm u}_k)+\bm{W}_k,
    \qquad \hat{\bm u}_0=0,
\end{align}
where $\bm{W}_k\sim\mathcal{N}\left(0,\delta\right)^{\otimes N}$ i.i.d. Let $R:[-1,1]^N\to\{\pm1\}^N$ denote coordinate-wise randomized rounding. Then, for every $\epsilon>0$, there exist
\begin{align}
    T=\poly(N,1/\epsilon),\qquad
    \delta^{-1}=\poly(N,1/\epsilon),\qquad
    \rho^{-1}=\poly(N,1/\epsilon)
\end{align}
such that, with $K=T/\delta=\poly(N,1/\epsilon)$,
\begin{align}
    \TVD\!\left(
        p,\,
        \mathcal L\!\left(R\!\left(\hat{\bm{m}}(\hat{\bm u}_K)\right)\right)
    \right)
    \leq \epsilon.
\end{align}
\end{theorem}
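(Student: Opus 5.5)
The plan is to split the total variation error into three pieces via the triangle inequality: (i) the error from stopping the exact stochastic localization at finite time $T$; (ii) the error from discretizing the SDE with step $\delta$; and (iii) the error from replacing $\bm m$ by $\hat{\bm m}$. Pieces (ii) and (iii) will be handled together by a single Girsanov/KL argument on path space.

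\textbf{Step 1 (exact process, finite time).} Use the Bayesian representation of stochastic localization: $\bm u_t = t\bm z + \bm B_t$ with $\bm z\sim p$, so that $\bm m(\bm u_t)=\mathbb E[\bm z\mid \bm u_t]$. Set $\bm x:=R(\bm m(\bm u_T))$. We have $\mathbb E[\bm x]$ conditional on $\bm u_T$ equal to $\bm m(\bm u_T)$, but the law of $\bm x$ is not exactly $p$. Instead I would bound $\TVD(p,\mathcal L(\bm x))\le \Pr[\bm x\ne\bm z]\le \sum_i \mathbb E[(1-z_i m_i(\bm u_T))/2]$. For each coordinate, the posterior of $z_i$ given the full observation is at least as concentrated as the posterior given only the scalar channel $u_{T,i}=Tz_i+B_{T,i}$. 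Since the MMSE is monotone under additional information, $\mathbb E[1-m_i^2]\le \mathbb E[1-\tanh^2(\cdot)]\lesssim e^{-T/2}$, and $\mathbb E[1-z_im_i]=\mathbb E[1-m_i^2]$. Summing gives error $N e^{-cT}$, so $T=O(\log(N/\epsilon))$ suffices. This is already within $\poly$.

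\textbf{Step 2 (Girsanov comparison).} Interpolate $\hat{\bm u}_k$ as a continuous process with piecewise-constant drift $\hat{\bm m}(\hat{\bm u}_{\lfloor t/\delta\rfloor\delta})$, driven by the same Brownian motion. Both processes have unit diffusion, so Girsanov gives
\[
\mathrm{KL}(\mathcal L(\bm u_{[0,T]})\,\|\,\mathcal L(\hat{\bm u}_{[0,T]}))=\tfrac12\,\mathbb E\int_0^T\norm{\bm m(\bm u_t)-\hat{\bm m}(\bm u_{\lfloor t\rfloor_\delta})}_2^2\,\dd t .
\]
The key convenience is that the expectation is taken under the \emph{exact} process, whose structure is known; this avoids needing any stability of the estimator. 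Split the integrand as $2\norm{\bm m(\bm u_t)-\bm m(\bm u_{\lfloor t\rfloor_\delta})}^2+2\norm{\bm m-\hat{\bm m}}^2$. The second term contributes at most $2N\rho^2T$. For the first, $\bm m(\bm u_t)$ is a martingale (the posterior mean) with $\dd\bm m=\mathrm{Cov}(\bm u_t)\,\dd\bm B_t$, so
\[
\mathbb E\norm{\bm m_t-\bm m_s}^2=\mathbb E\int_s^t\norm{\mathrm{Cov}}_F^2\le N^2(t-s)\le N^2\delta .
\]
The total is therefore $\lesssim N^2\delta T+N\rho^2T$. Pinsker, together with data processing through the map $\bm u\mapsto R(\hat{\bm m}(\bm u_T))$, then gives the final TV bound. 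One subtlety remains: the output uses $\hat{\bm m}(\hat{\bm u}_K)$ rather than $\bm m$. Applying the same map to the exact process produces $R(\hat{\bm m}(\bm u_T))$, which differs from $R(\bm m(\bm u_T))$ in TV by at most $\mathbb E\norm{\hat{\bm m}-\bm m}_1/2\le N\rho/2$, via coupling the roundings coordinatewise.

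\textbf{Step 3 (parameters).} Choose $T=C\log(N/\epsilon)$, $\delta=\epsilon^2/(N^2T)$, and $\rho=\epsilon/(N\sqrt T)$. All of these are polynomial, and $K=T/\delta$ is polynomial as well. I expect the main obstacle to be Step 2's discretization term: confirming the martingale/Itô bound with the Frobenius norm of the covariance (crude $N^2$), and carefully justifying that Girsanov applies to a drift depending on the discretized past. This requires Novikov, which holds since the drifts are bounded in $[-1,1]^N$.
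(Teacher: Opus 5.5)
Your proposal is correct, but it takes a genuinely different route from the paper's.

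The paper works mostly in $\operatorname{W}_2$. It treats the path-space term as you do (data processing, Pinsker, Girsanov under the exact law). For the finite-time term it uses the prior-agnostic covariance decay $\mathbb{E}\,\bm{\varSigma}_{\bm u_t}\preceq t^{-1}\mathbf{I}_N$, i.e.\ $\operatorname{W}_2(p,\mathcal{L}(\bm m(\bm u_T)))\le\sqrt{N/T}$, and then the Wasserstein rounding lemma \cite[Lemma~VII.3]{9996629}. It also converts between $\operatorname{W}_2$ and total variation at both ends, and the resulting square-root losses force $T=\Theta(N^2/\epsilon^2)$ and $\rho=\Theta(\epsilon^2N^{-5/2})$.

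You instead stay in total variation throughout. For the finite-time term you use the planted representation $\bm u_T=T\bm z+\bm B_T$ with an explicit coupling of $\bm z$ and $R(\bm m(\bm u_T))$. This exploits the discreteness of the hypercube to get exponential localization $\lesssim Ne^{-T/2}$, so $T=O(\log(N/\epsilon))$ suffices. The terminal $\hat{\bm m}$-versus-$\bm m$ mismatch then costs only $N\rho/2$ via the coordinatewise rounding coupling, with no Wasserstein detour.

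Conversely, the paper's discretization bound is sharper than yours. Rather than your It\^o/Frobenius estimate $N^2(t-s)$, it uses that $\mathbb{E}\lVert\bm m(\bm u_s)\rVert_2^2$ is nondecreasing and telescopes over the grid, giving a total of $N\delta$ with no factor of $T$. It also removes the cross term by the martingale property instead of using $(a+b)^2\le 2a^2+2b^2$. With your logarithmic $T$ the extra $NT$ factor is harmless, so your overall parameters are better. The price is relying on the Bayesian representation, which is legitimate here because $\bm m$ is Lipschitz (its Jacobian is the covariance), so the SDE is well posed.

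Two small tightenings are worth making:
\begin{itemize}
\item The scalar-channel posterior mean is $\tanh(u_{T,i}+h_i)$, with $h_i$ the prior half-log-odds of $z_i$, not $\tanh(u_{T,i})$. It is cleaner to write directly $\mathbb{E}[1-m_i(\bm u_T)^2]=\mathbb{E}[(z_i-m_i(\bm u_T))^2]\le\mathbb{E}[(z_i-\operatorname{sign}(u_{T,i}))^2]\le 4e^{-T/2}$, which holds uniformly in $p$.
\item Data processing through $\bm u\mapsto R(\hat{\bm m}(\bm u_T))$ requires $\hat{\bm m}$ to be measurable, as the formal Theorem~\ref{thm:num_guar_sl} assumes.
\end{itemize}
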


\subsubsection{Application to the Quantum $p$-Spin Model}

As an application of Theorem~\ref{thm:num_guar_sl}, we consider the problem of Gibbs sampling the quantum $p$-spin model~\cite{swingle_bosonic_2023}. This is an ensemble of local Hamiltonians that has previously been used to probe the average-case complexity of the ground state problem~\cite{Anschuetz_2025,cbqf-d24r,bhattacharya2026parisiformulagroundstate} as well as of Gibbs state preparation~\cite{basso2024optimizingrandomlocalhamiltonians,zlokapa2025averagecasequantumcomplexityglassiness}.
\begin{defn}[Quantum $p$-spin model]
    The \emph{quantum $p$-spin model} is an ensemble of local Hamiltonians:
   \begin{equation}
        \bm{H} = \sqrt{\frac{1}{3^p\binom{N-1}{p-1}}}\sum_{I \in \binom{[N]}{p}\times {\{X,Y,Z\}^p}} J_I \hat{\sigma}_I;
    \end{equation}
    the sum runs over all $p$-local Pauli operators, the $J_I$ are i.i.d.\ standard Gaussian random variables, and the normalization chosen such that $\left\lVert H\right\rVert=\operatorname{\Theta}\left(N\right)$ w.h.p.
\end{defn}
We here consider the problem of sampling from the thermal state of $\bm{H}$:
\begin{defn}[Gibbs sampling the quantum $p$-spin model]
    Given $\bm{J}$ and inverse temperature $\beta$, the task of \emph{Gibbs sampling to error $\epsilon$} is to produce samples from a distribution $q$ approximating:
    \begin{equation}\label{eq:gibbs_dist_intro}
        p\left(\bm{z}\right)=\frac{\bra{\bm{z}}e^{-\beta\bm{H}}\ket{\bm{z}}}{\Tr\left(e^{-\beta\bm{H}}\right)}
    \end{equation}
    to error $\epsilon$ total variation distance. Here, $\ket{\bm{z}}$ denotes a computational basis state, though by the rotational invariance of the quantum $p$-spin model our results hold for any choice of product state basis.
\end{defn}
By Theorem~\ref{thm:inf_num_guar_sl}, the classical complexity of this problem entirely hinges on constructing an estimator $\hat{\bm{m}}(\bm u)$ of the ``tilted mean'' $\bm{m}\left(\bm{u}\right)$, i.e., the mean of the tilted distribution:
\begin{equation}
    p_{\bm{u}}\left(\bm{z}\right)=\frac{e^{\bm{u}^\intercal\cdot\bm{z}}p\left(\bm{z}\right)}{\sum\limits_{\bm{z}\in\left\{-1,1\right\}^N}e^{\bm{u}^\intercal\cdot\bm{z}}p\left(\bm{z}\right)}.
\end{equation}
We show in Section~\ref{sec:quasi_poly_est_alg} that recently-developed quasi-polynomial time classical algorithms for computing thermal expectation values~\cite{10.1145/3357713.3384322,zlokapa2026sykthermalexpectationsclassically,zlokapa2026rigorousquasipolynomialtimeclassicalalgorithm} can be adapted to estimate tilted means of the quantum $p$-spin model. These quasi-polynomial time algorithms are based on Barvinok's interpolation method~\cite{v011a013}, which essentially works by analytically continuing the partition function defined on the complex-$\beta$ plane in a zero-free region around $\beta=0$. The main difficulty in proving convergence, then, is proving the existence of such a zero-free region with high probability over the randomness of the ensemble. Physically, it is widely believed that this zero-free region corresponds to the high-temperature phase of the model~\cite{10.1145/3357713.3384322}. More recently, for the SYK model it was rigorously shown that a zero-free region exists at all sufficiently high constant temperature~\cite{zlokapa2026rigorousquasipolynomialtimeclassicalalgorithm}, though as non-rigorous physics heuristics show the SYK model has no finite-temperature spin glass phase transition this is not expected to be tight~\cite{zlokapa2026sykthermalexpectationsclassically}.

Here, we show these methods can be adapted in the following way.
\begin{prop}[Barvinok's applied to quantum $p$-spin, informal versions of Propositions~\ref{claim:mag_from_finite_difference} and~\ref{prop:barvinoks}]\label{prop:barv_inf}
    Define the tilted partition function:
    \begin{equation}
        Z_{\bm{y}}\left(\beta\right):=\Tr\left(\exp\left(\sum_{i=1}^N y_i\bm{\sigma}_i^Z\right)\exp\left(-\beta\bm{H}\right)\right).
    \end{equation}
    If there exist $\delta>0$ and $\beta^\ast>0$ such that $Z_{\bm{y}}\left(\beta\right)$ is, w.h.p., zero-free whenever $\Re\left\{\beta\right\}\leq\beta^\ast$ and $\Im\left\{\beta\right\}\leq\delta$, then for any choice of $\epsilon>0$ there is an $N^{\operatorname{O}\left(\ln\left(N/\epsilon\right)\right)}$-time algorithm for estimating $\bm{m}\left(\bm{u}\right)$ to additive error $\epsilon$ in each component that works w.h.p. over the draw of $\bm{H}$.
\end{prop}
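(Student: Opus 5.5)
Two reductions are needed. First, express the tilted mean as a finite difference of log tilted partition functions (the content of Proposition~\ref{claim:mag_from_finite_difference}). Second, approximate each log partition function by Barvinok interpolation in a zero-free region (Proposition~\ref{prop:barvinoks}).

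\textbf{Step 1: the mean as a log-derivative.} Since $p(\bm z)=\bra{\bm z}e^{-\beta\bm H}\ket{\bm z}/\Tr(e^{-\beta\bm H})$ and $e^{\sum_i y_i\bm\sigma^Z_i}$ is diagonal, we have
\begin{equation}
\sum_{\bm z}e^{\bm y^\intercal\bm z}\bra{\bm z}e^{-\beta\bm H}\ket{\bm z}=Z_{\bm y}(\beta).
\end{equation}
Hence $m_i(\bm y)=\partial_{y_i}\ln Z_{\bm y}(\beta)$. I would approximate this derivative by a central difference $[\ln Z_{\bm y+h\bm e_i}-\ln Z_{\bm y-h\bm e_i}]/(2h)$. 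The second derivative of the log partition function in $y_i$ is a variance of a $\pm1$ variable, so it lies in $[0,1]$, and the third derivative is similarly $O(1)$. The truncation error is therefore $O(h^2)$, and $h=\Theta(\sqrt\epsilon)$ suffices provided each log is computed to additive error $O(\epsilon h)=O(\epsilon^{3/2})$. This loses only polynomial factors in $1/\epsilon$, which are absorbed into $N^{O(\ln(N/\epsilon))}$. The zero-free hypothesis must hold at the shifted tilts $\bm y\pm h\bm e_i$; I read the hypothesis as applying to all tilts.

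\textbf{Step 2: Barvinok interpolation.} Set $f(\beta)=\ln Z_{\bm y}(\beta)-\ln Z_{\bm y}(0)$. The normalization $Z_{\bm y}(0)=\prod_i 2\cosh y_i$ is explicit. The Taylor coefficients of $f$ at $\beta=0$ follow from the moments
\begin{equation}
\Tr\left(e^{\sum y_i\sigma^Z_i}\bm H^k\right)/Z_{\bm y}(0)=\mathbb E_{\bm z\sim\mathrm{prod}}\left[\bra{\bm z}\bm H^k\ket{\bm z}\right],
\end{equation}
where the expectation is over the product measure with marginals $\propto e^{y_iz_i}$. Expanding $\bm H^k$ into Pauli strings, only strings whose product is diagonal contribute, and each contributes a product of single-site $\tanh y_i$ factors. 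The naive enumeration has $N^{O(pk)}$ terms, so order $k=O(\ln(N/\epsilon))$ costs $N^{O(\ln(N/\epsilon))}$. The power moments are converted to log coefficients via Newton identities.

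To handle the finite zero-free strip $\{\Re\beta\le\beta^*,|\Im\beta|\le\delta\}$ rather than a disk, I compose with a polynomial map $\phi$ sending the unit disk into the strip, with $\phi(0)=0$ and $\phi(1)=\beta$ (Barvinok's standard lemma). The function $g=f\circ\phi$ is then analytic on a disk of radius $1+\Omega(\delta/\beta^*)$. Cauchy estimates bound the truncation error by $\max|g|\cdot e^{-\Omega(k\delta/\beta^*)}$. Since $|f|\le O(\beta\|\bm H\|)=O(N)$ on the region (using $\|\bm H\|=\Theta(N)$ w.h.p.), we get $\max|g|=\mathrm{poly}(N)$. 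Hence $k=O(\ln(N/\epsilon))$ suffices, with the constant depending on $\beta^*/\delta$. The coefficients of $g$ are computed from those of $f$ with polynomial overhead.

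\textbf{Main obstacle.} The bound on $\max|g|$ over the complex region is the delicate point. A zero-free region gives analyticity of $\ln Z$, but $\Re\ln Z$ is only upper bounded by $|\beta|\|\bm H\|+\ln Z_{\bm y}(0)$, and a lower bound on $|Z|$ is not automatic. The standard remedy is Borel–Carathéodory: bound $|g|$ on a slightly smaller disk using the upper bound on $\Re g$ and $g(0)=0$, at the cost of shrinking the radius by a constant factor. I would also verify that the tilt does not spoil this, since the tilt factors out exactly at $\beta=0$. Finally, the moment enumeration must be checked to stay within $N^{O(k)}$ time despite the non-commuting Pauli products; this is done by tracking only a phase and the support per string.
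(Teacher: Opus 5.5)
Your proposal is correct. Step 1 differs from the paper's argument, and Step 2 reconstructs what the paper only cites.

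\textbf{Step 1 (the mean).} The paper does not differentiate numerically. Since $e^{t\sigma^z_i}=\cosh t\,\mathbf{I}+\sinh t\,\sigma^z_i$ commutes with the tilt, it uses the exact identity $Z_{\bm y+t\bm e_i}=Z_{\bm y}\left(\cosh t+m_i(\bm y)\sinh t\right)$. Equivalently, $m_i(\bm y)=\bigl(e^{\ln Z_{\bm y+t\bm e_i}-\ln Z_{\bm y}}-\cosh t\bigr)/\sinh t$ with a \emph{constant} step $t$. There is no truncation error, and one extra tilt per coordinate suffices. Log-accuracy $O(\epsilon)$ gives magnetization accuracy $O(\epsilon)$ directly, with amplification factor $e^{|t|}/|\sinh t|$. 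Your central difference instead needs the cumulant bounds, $h=\Theta(\sqrt{\epsilon})$, and log-accuracy $O(\epsilon^{3/2})$. That polynomial loss is absorbed into $N^{O(\ln(N/\epsilon))}$, so nothing breaks. Still, the paper's identity is simpler and sharper because it exploits $z_i\in\{\pm1\}$; yours is the generic exponential-family argument.

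\textbf{Step 2 (Barvinok).} The paper invokes Barvinok's method from prior work, stated for $\Tr e^{-\beta\bm H}$, and applies it to $Z_{\bm y}$ without comment. Your version makes the adaptation to the tilted partition function explicit, which is a real gain in completeness. You compute the Taylor coefficients as product-state expectations of Pauli words, which produces the $\tanh y_i$ factors. Your Borel--Carath\'eodory step uses $|Z_{\bm y}(\beta)|\le Z_{\bm y}(0)\,e^{|\Re\beta|\,\|\bm H\|}$ to bound $|\ln Z_{\bm y}|$ on the complex region, where no lower bound on $|Z|$ is available.

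\textbf{Two minor corrections.}
\begin{enumerate}
\item The excess radius $\Omega(\delta/\beta^*)$ is too optimistic. When $\beta\gg\delta$, any analytic map $\phi$ with $\phi(0)=0$, $\phi(1)=\beta$ from a disk into a strip of half-width $\delta$ has excess radius only $e^{-\Theta(\beta/\delta)}$, by Schwarz--Pick; Barvinok's polynomial gives roughly $e^{-\beta/\delta}$. This is still an $N$-independent constant, so $k=O(\ln(N/\epsilon))$ stands, just with a worse constant.
\item The target $\beta$ must lie strictly inside the zero-free region, so that a neighbourhood of $[0,\beta]$ is contained in it.
\end{enumerate}
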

Just as other work in this area, we unfortunately are unable to rigorously show the zero freeness of $Z_{\bm{y}}$ down to the spin glass phase transition. Instead, we give a mixture of rigorous and physics-based evidence to support this conclusion, all shown in Section~\ref{sec:quasi_poly_est_alg}:
\begin{enumerate}
    \item We show that $\E Z_{\bm{y}}\left(\beta\right)$ is zero-free in a constant-radius disk around the origin.
    \item We show that $Z_{\bm{y}=\bm{0}}\left(\beta\right)$ concentrates, and therefore is zero-free in a constant-radius disk around the origin with high probability.
    \item We use physics heuristics to claim that $Z_{\bm{y}}\left(\beta\right)$ concentrates when $Z_{\bm{y}=\bm{0}}\left(\beta\right)$ concentrates; if this were true, then our first statement would also upgrade to a w.h.p. statement.
    \item Assuming Fisher zeroes exactly correspond to phase transitions~\cite{10.1145/3357713.3384322}, $Z_{\bm{y}}$ is indeed zero-free down to the spin glass pase transition.
\end{enumerate}

Whatever inverse temperature $\beta^\ast$ one chooses to believe Barvinok's interpolation method succeeds in producing local expectation values, combining Theorem~\ref{thm:inf_num_guar_sl} and Proposition~\ref{prop:barvinoks} yields our main result.
\begin{corollary}[Gibbs sampling the quantum $p$-spin model, informal]
    Let $\beta^\ast$ be the inverse temperature at which the zero-freeness condition of Proposition~\ref{prop:barv_inf} holds. Let $p_\beta$ be as in Eq.~\eqref{eq:gibbs_dist_intro}. For any choice of $\epsilon>0$, with high probability over the quantum $p$-spin ensemble, ASL with Barvinok interpolation as a subroutine has time complexity $N^{\operatorname{O}\left(\log\left(N/\epsilon\right)\right)}$ and produces samples from a distribution $q$ satisfying:
    \begin{equation}
        \TVD\left(p_\beta,q\right)\leq\epsilon.
    \end{equation}
\end{corollary}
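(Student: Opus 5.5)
The plan is to compose the two earlier results. Theorem~\ref{thm:inf_num_guar_sl} reduces sampling to tilted-mean estimation, and Proposition~\ref{prop:barv_inf} supplies the estimator; what remains is bookkeeping of errors, running time, and failure probabilities.

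\textbf{Step 1: fix the ASL parameters.} Fix the target $\epsilon$ and apply Theorem~\ref{thm:inf_num_guar_sl} with target error $\epsilon/2$ to $p=p_\beta$ from Eq.~\eqref{eq:gibbs_dist_intro}. This gives $T$, $\delta^{-1}$ and $\rho^{-1}$, all $\poly(N,1/\epsilon)$, and hence a number of steps $K=T/\delta=\poly(N,1/\epsilon)$.

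\textbf{Step 2: build the estimator.} We need $\hat{\bm m}$ with $N^{-1/2}\|\hat{\bm m}(\bm u)-\bm m(\bm u)\|_2\le\rho$. It suffices to estimate each coordinate to additive error $\rho$. The tilted distribution $p_{\bm u}$ has mean
\begin{equation}
\bm m(\bm u)_i=\frac{\Tr\left(\bm\sigma^Z_i e^{\sum_j u_j\bm\sigma_j^Z}e^{-\beta\bm H}\right)}{Z_{\bm u}(\beta)},
\end{equation}
which is a derivative in $u_i$ of $\log Z_{\bm u}(\beta)$. Proposition~\ref{prop:barv_inf}, via finite differences, gives this to error $\rho$ in time $N^{O(\ln(N/\rho))}=N^{O(\log(N/\epsilon))}$, since $\rho^{-1}$ is polynomial. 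After clipping the output to $[-1,1]$ (which does not increase the error), $\hat{\bm m}$ satisfies the hypothesis of the theorem.

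\textbf{Step 3: handle the supremum over tilts.} The theorem asks for a guarantee uniform over $\bm u\in\mathbb R^N$, while the proposition holds w.h.p.\ over $\bm H$ for a given tilt. I would use that the zero-free hypothesis is assumed to hold for every tilt simultaneously (the quantifier over $\bm y$ sits inside the w.h.p.\ event). On that single event, Barvinok's method is deterministic and correct at every $\bm u$, so no union bound over the continuum is needed.

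If the assumption is instead only per-tilt, I would do two things:
\begin{itemize}
\item restrict to $\|\bm u\|_\infty\le\poly(N,1/\epsilon)$, which holds for the ASL trajectory except with probability $\epsilon/4$ by Gaussian tails over $K$ steps;
\item discretize that box on a fine grid, using that $\bm m$ is Lipschitz in $\bm u$ (its Jacobian is a covariance matrix, whose entries are bounded by $1$).
\end{itemize}
I would then union bound over the grid. This needs the per-tilt failure probability to be $e^{-\Omega(N\log N)}$ or so. I expect this to be the main obstacle, and it is probably why the corollary is phrased in terms of the zero-freeness assumption.

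\textbf{Step 4: conclude.} Running $K$ steps, each with one call to $\hat{\bm m}$ costing $N^{O(\log(N/\epsilon))}$ plus $\poly(N)$ arithmetic, gives total time $\poly(N,1/\epsilon)\cdot N^{O(\log(N/\epsilon))}=N^{O(\log(N/\epsilon))}$. The resulting TVD is at most $\epsilon/2$ plus the excursion probability $\epsilon/4$. Any estimator failure probability lies in the randomness of $\bm H$, so it is absorbed into the w.h.p.\ statement. This yields $\TVD(p_\beta,q)\le\epsilon$.
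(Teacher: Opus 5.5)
Your proposal is correct and follows essentially the paper's argument. The paper obtains the corollary in one line by composing Theorem~\ref{thm:inf_num_guar_sl} with Barvinok's method (Proposition~\ref{prop:barvinoks}, turned into a tilted-mean estimator via the exact finite-difference identity of Proposition~\ref{claim:mag_from_finite_difference}); your Step~3 makes explicit the uniformity-in-tilt issue that the paper leaves implicit, and your first resolution is the intended one, because Conjecture~\ref{claim:para_zero_free} holds deterministically for all $\bm y$, so zero-freeness at every tilt holds on the single high-probability event of Theorem~\ref{thm:zero_freeness}.
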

If indeed $\beta^\ast$ corresponds to the spin glass phase transition for the quantum $p$-spin model, as the spin glass phase transition is widely believed to be where quantum open system dynamics exhibit mixing times exponential in $N$~\cite{sachdev2024quantumspinglassessachdevyekitaev,basso2024optimizingrandomlocalhamiltonians,zlokapa2026sykthermalexpectationsclassically}, this suggests there is no exponential quantum advantage in Gibbs sampling from the quantum $p$-spin model.

\subsection{Discussion}

Our results follow a long string of evidence that there is no exponential quantum advantage in computing static properties of physical systems~\cite{Lee_2023,zlokapa2026sykthermalexpectationsclassically,zlokapa2026rigorousquasipolynomialtimeclassicalalgorithm}, upgrading previous dequantization results in computing local expectation values of Gibbs states to dequantization of sampling from Gibbs states (at temperatures where the former is classically easy). We argue that rather than hoping for exponential separations in the computation of static properties of quantum many-body physical systems, one should instead look in one of two places for quantum advantage in quantum simulation: either settling for \emph{polynomial separations} in estimating static properties, or searching for an exponential advantage in computing \emph{dynamical properties}.

Whether there remain substantive polynomial quantum-classical separations in either estimating local observables or Gibbs sampling remains an open question. The current best-known algorithms for estimating local observables are based on Barvinok's interpolation method which inherently carries with it a quasi-polynomial overhead~\cite{zlokapa2026sykthermalexpectationsclassically,zlokapa2026rigorousquasipolynomialtimeclassicalalgorithm}. Physics heuristics suggest that iteratively solving the quantum Thouless--Anderson--Palmer (TAP) equations might be a rigorous way to achieve polynomial time complexity in estimating local observables~\cite{PhysRevB.64.014206}. For classical spin glass systems this has been made rigorous~\cite{bolthausen2014iterative}, and currently forms the basis for the best-known classical algorithms for computing local expectation values up to the spin glass phase transition~\cite{10.1002/cpa.21922,doi:10.1137/20M132016X}, though the rigor of the quantum analogue remains an open question. If it were made rigorous, that would tighten the gap between quantum and classical algorithms---indeed, our classical algorithm here would become polynomial time, rather than quasi-polynomial time---though even then a polynomial separation remains possible. Further progress in understanding the fine-grained average-case complexity of this problem is needed to make further progress on this question.

Another strategy in the search for quantum advantage is to think about dynamics problems rather than statics problem. This category includes questions such as understanding transport properties in physical systems, measuring some local property after appropriate time dynamics, and so forth. This includes recent experiments by Google in computing out-of-time-order correlators~\cite{GoogleQuantumAI_2025,king2025simplifiedversionquantumotoc2}, which remains classically hard. What makes dynamical properties different from static properties is that all currently-known classical algorithms for this problem require approximately implementing the time dynamics of the system. One potential way to begin to make this rigorous is to understand how the low-energy physics of these systems constrain algorithms which take this approach. For instance, previous work on the SYK model demonstrated that all low-energy states of the SYK model have lower-bounded classical descriptions, including bond dimension, circuit complexity, and neural network representation complexity~\cite{cbqf-d24r,Ding_2026}; this gives further evidence that computing dynamical properties of such states may retain a large quantum advantage. We hope to investigate this further in future work.

Our results reduce simulating sampling from a quantum system to estimating local observables in a related model. While we here focus on Gibbs sampling as an application of this general reduction, it would be interesting to consider other sampling problems for which this reduction might be useful. As an illustrative example as to why this technique cannot dequantize \emph{everything}, consider the problem of sampling from shallow-depth quantum circuits. More specifically, consider the ensemble of states $\ket{\psi}=\bm{U}\ket{\bm{0}}$ where $\bm{U}$ is a 2D shallow circuit of sufficiently large constant depth with local unitaries chosen Haar randomly. This distribution is conjectured to be difficult to sample from classically~\cite{PhysRevX.12.021021}. Algorithmic stochastic localization reduces the sampling from this distribution to estimating expectation values of the tilted distribution:
\begin{equation}
    p_{\bm{y}}\left(\bm{z}\right)=\frac{\exp\left(\bm{y}^\intercal\cdot\bm{z}\right)\left\lvert\bra{\bm{z}}\bm{U}\ket{\bm{0}}\right\rvert^2}{\sum\limits_{\bm{z}\in\left\{-1,1\right\}^N}\exp\left(\bm{y}^\intercal\cdot\bm{z}\right)\left\lvert\bra{\bm{z}}\bm{U}\ket{\bm{0}}\right\rvert^2}.
\end{equation}
While expectation values of the local observables are easy to compute when $\bm{y}=\bm{0}$ due to $\bm{U}$ being shallow---i.e., while the numerator is easy to compute---the tilted partition function $\sum_{\bm{y}}\exp\left(\bm{y}^\intercal\cdot\bm{z}\right)\left\lvert\bra{\bm{z}}\bm{U}\ket{\bm{0}}\right\rvert^2$ is generally difficult to estimate when $\bm{y}\neq\bm{0}$. We hope to investigate other sampling problems using our techniques in future work.

\section*{AI Usage Statement}

In the original version of this work, drafted pre-ChatGPT 5.6, we proved our main result (Theorem~\ref{thm:num_guar_sl}) modulo requiring an extra assumption on the Lipschitz continuity of the tilted mean estimator $\hat{\bm{m}}_{\bm{y}}$ with respect to the tilt $\bm{y}$. We prompted ChatGPT 5.6 to circumvent this Lipschitz continuity assumption, which led to it telling us the main idea behind the current proof of Theorem~\ref{thm:num_guar_sl}. This new strategy leverages a result in the theory of stochastic processes known as Girsanov's Theorem; ChatGPT's stated source for this idea was \cite{kellermann2026totalvariationguaranteessampling}, which proves convergence of an algorithmized version of stochastic localization in the special setting of a subroutine used in certain AI diffusion models.

More closely related in setting are the techniques used in the very recent paper \cite{lee2026potentialhessianascentiv}, which gives an efficient sampling algorithm for the classical Sherrington--Kirkpatrick (SK) model down to the spin glass phase transition. While this paper was posted publicly only after our query to ChatGPT, the idea for using this technique to prove the convergence of ASL originates with \cite{davies2026potentialhessianascentiii}, the preceding paper of the series culminating in \cite{lee2026potentialhessianascentiv}. There, the use of Girsanov's Theorem was stated as a possible proof technique but could not be used directly due to some technical conditions on the SK TAP equations that we do not run into in our analysis.

Finally, we also used ChatGPT 5.6 to check our proofs---we made minor corrections based on these discussions with the model.

\section*{Acknowledgments}
This manuscript was prepared in connection with the QIP 2027 submission deadline. We thank Alexander Zlokapa for many enlightening conversations regarding this work. E.R.A.\ was funded in part by the Walter Burke Institute for Theoretical Physics at Caltech. E.R.A.\ is supported by the National University of Singapore start-up grant A-0010665-00-00 and strategic hire fund A-0010665-01-00. We wish to thank Nadine Meister for their \LaTeX{} typesetting template used in this paper.

\section{Convergence of Algorithmic Stochastic Localization}\label{sec:asl_conv}

We here summarize the \emph{stochastic localization} process and its properties when applied to measures on the Boolean hypercube~\cite{eldan2020taming,eldan2022109392}. For a Bayesian interpretation of stochastic localization, we refer the reader to \cite[Section~4]{AFST_2023_6_32_5_939_0}.
\begin{prop}[Stochastic localization on the Boolean hypercube~{\cite[Proposition~10 and $v=0$ case of Proposition~11]{eldan2022109392}}]\label{prop:stoch_loc}
    Let $p$ be a probability distribution on $\left\{-1,1\right\}^N$. Let $\bm{B}_t$ be standard Brownian motion over $\mathbb{R}^N$. Define the \emph{tilted distribution} for any $\bm{y}\in\mathbb{R}^N$:
    \begin{equation}\label{eq:tilted_dist}
        p_{\bm{y}}\left(\bm{z}\right):=Z\left(\bm{y}\right)^{-1}\exp\left(\bm{y}^\intercal\cdot\bm{z}\right)p\left(\bm{z}\right),
    \end{equation}
    where:
    \begin{equation}
        Z\left(\bm{y}\right):=\sum_{\bm{z}\in\left\{-1,1\right\}^N}\exp\left(\bm{y}^\intercal\cdot\bm{z}\right)p\left(\bm{z}\right).
    \end{equation}
    Finally, let:
    \begin{equation}
        \bm{m}\left(\bm{y}\right)=\bm{m}_{\bm{y}}=\sum_{\bm{z}\in\left\{-1,1\right\}^N}p_{\bm{y}}\left(\bm{z}\right)\bm{z}
    \end{equation}
    denote the mean of the tilted distribution.

    The \emph{stochastic localization process}:
    \begin{equation}\label{eq:stoch_loc_process}
        \dd{\bm{u}_t}=\bm{m}\left(\bm{u}_t\right)\dd{t}+\dd{\bm{B}_t},\quad\bm{u}_0=0,
    \end{equation}
    over $t\in\mathbb{R}_{\geq 0}$ satisfies:
    \begin{enumerate}
        \item For all $A\subseteq\left\{-1,1\right\}^N$, $\left(p_{\bm{u}_t}\left(A\right):=\sum_{\bm{z}\in A}p_{\bm{u}_t}\left(\bm{z}\right)\right)_{t\geq 0}$ is a martingale.
         \item $\bm{m}_\infty:=\lim_{t\to\infty}\bm{m}\left(\bm{u}_t\right)$ exists.
         \item $\bm{m}_\infty\in\left\{-1,1\right\}^N$ almost surely.
         \item $\bm{m}_\infty$ is distributed according to $p$.
    \end{enumerate}
\end{prop}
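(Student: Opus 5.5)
The plan is to use the Bayesian (planted observation) picture of stochastic localization rather than compute with the SDE directly. Draw $\bm{x}\sim p$ and an independent standard Brownian motion $\bm{W}_t$, and set $\bm{y}_t := t\bm{x}+\bm{W}_t$. The first step is to compute the posterior of $\bm{x}$ given $\mathcal{F}_t:=\sigma(\bm{y}_s:s\le t)$. By Girsanov (or the elementary Gaussian likelihood computation), the likelihood of the path given $\bm{x}=\bm{z}$ is proportional to $\exp(\bm{y}_t^\intercal\bm{z}-\tfrac{t}{2}\norm{\bm{z}}_2^2)$. On the hypercube $\norm{\bm{z}}_2^2=N$ is constant, so the posterior is exactly $p_{\bm{y}_t}$ from Eq.~\eqref{eq:tilted_dist}. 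Its mean is therefore $\E[\bm{x}\mid\mathcal{F}_t]=\bm{m}(\bm{y}_t)$.

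The second step is to identify $\bm{y}_t$ with $\bm{u}_t$ in law. For this I will use the innovation process $\bm{B}_t:=\bm{y}_t-\int_0^t\bm{m}(\bm{y}_s)\,\dd{s}$. I need to show that $\bm{B}_t$ is an $\mathcal{F}_t$-Brownian motion. It is continuous, and its quadratic variation equals that of $\bm{W}_t$, namely $t I$. Its increments have zero conditional mean: $\E[\bm{y}_{t'}-\bm{y}_t-\int_t^{t'}\bm{m}(\bm{y}_s)\dd{s}\mid\mathcal{F}_t] = \int_t^{t'}(\E[\bm{x}\mid\mathcal{F}_t]-\E[\bm{m}(\bm{y}_s)\mid\mathcal{F}_t])\dd{s}=0$ by the tower property. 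Lévy's characterization then gives the claim. Hence $\bm{y}_t$ solves Eq.~\eqref{eq:stoch_loc_process}. The drift $\bm{m}$ is smooth with Jacobian equal to the tilted covariance, which is bounded by $I$ in operator norm, so it is globally Lipschitz. Strong existence and pathwise uniqueness therefore hold, and Yamada--Watanabe makes the law of $(\bm{u}_t)$ equal that of $(\bm{y}_t)$. This transfers every property below to $\bm{u}_t$.

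The four claims then follow in order. (1) $p_{\bm{y}_t}(A)=\Pr[\bm{x}\in A\mid\mathcal{F}_t]$ is a Doob martingale. (2) Both $p_{\bm{y}_t}(\bm{z})$ and $\bm{m}(\bm{y}_t)$ are bounded martingales, so they converge a.s. by the martingale convergence theorem. The limit is $\E[\bm{x}\mid\mathcal{F}_\infty]$ by Lévy's upward theorem. (3) Since $\bm{y}_t/t=\bm{x}+\bm{W}_t/t\to\bm{x}$ a.s., $\bm{x}$ is $\mathcal{F}_\infty$-measurable, so $\bm{m}_\infty=\bm{x}\in\{-1,1\}^N$. (4) Consequently $\bm{m}_\infty=\bm{x}\sim p$.

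I expect the main obstacle to be the rigorous identification in the second step: verifying the innovation process is Brownian with respect to the observation filtration, and justifying the transfer of the a.s. statements (which concern $\bm{x}$, an object defined only in the planted construction) to the SDE solution. The fix is to note that every statement in (1)--(4) is a statement about the law of the path $(\bm{m}(\bm{u}_t))_t$, which is determined by the law of $(\bm{u}_t)$. If one wants to avoid filtering theory entirely, the fallback is direct Itô calculus. One shows $\dd{p_{\bm{u}_t}(\bm{z})}=p_{\bm{u}_t}(\bm{z})(\bm{z}-\bm{m}(\bm{u}_t))^\intercal\dd{\bm{B}_t}$, which gives the martingale property. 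Then $\E\int_0^\infty \Tr\operatorname{Cov}(p_{\bm{u}_t})^2\dd{t}<\infty$ from boundedness of $\norm{\bm{m}}$, which forces localization to a vertex. This fallback is essentially \cite{eldan2022109392}'s route.
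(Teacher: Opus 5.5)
Your argument is correct. Note that the paper gives no proof of its own: it cites the direct It\^o-calculus argument of \cite{eldan2022109392} and mentions the Bayesian picture of \cite[Section~4]{AFST_2023_6_32_5_939_0} only as an alternative. That alternative is the route you carry out, so your proof differs from the one the paper relies on.

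In the direct route one works with the SDE solution itself. It\^o's formula gives $\dd{p_{\bm u_t}(\bm z)}=p_{\bm u_t}(\bm z)(\bm z-\bm m(\bm u_t))^\intercal\dd{\bm B_t}$. This yields (1), and, after summing against $\bm z$, the identity $\dd{\bm m_{\bm u_t}}=\bm\varSigma_{\bm u_t}\dd{\bm B_t}$ that the paper reuses in Lemma~\ref{lem:diff_exp_cov}. Bounded martingale convergence then yields a limiting measure. It\^o's isometry gives $\E\int_0^\infty\Tr\bm\varSigma_{\bm u_t}^2\,\dd{t}\le N$, which together with the a.s.\ convergence of $\bm\varSigma_{\bm u_t}$ forces the limit to be a point mass at a vertex. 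Finally, (4) follows from $\E[p_{\bm u_t}(\bm z)]=p(\bm z)$ by bounded convergence.

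Your planted construction makes (3) and (4) essentially free, since $\bm m_\infty=\E[\bm x\mid\mathcal F_\infty]=\bm x$, and it makes (1) a Doob martingale. The price is the innovation-process argument and the uniqueness-in-law transfer from $(\bm y_t)$ to $(\bm u_t)$. It also uses essentially that $\norm{\bm z}_2^2=N$ is constant on the hypercube, so the posterior is exactly the linear tilt $p_{\bm y_t}$; on a general support an extra $-\tfrac{t}{2}\norm{\bm z}_2^2$ would appear in the exponent. The direct route needs no auxiliary variable and directly produces the SDE for $\bm m_{\bm u_t}$ that the paper uses later.

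Two small fixes. First, the tilted covariance is not bounded by $\mathbf I_N$ in operator norm. For $p$ uniform on $\{\pm(1,\dots,1)\}$, $\bm\varSigma_{\bm 0}$ is the all-ones matrix, of norm $N$. What does hold is $\bm\varSigma_{\bm y}\preceq(\Tr\bm\varSigma_{\bm y})\mathbf I_N\preceq N\mathbf I_N$, so $\bm m$ is $N$-Lipschitz. That still suffices for strong existence, pathwise uniqueness and Yamada--Watanabe. Second, your conditional-mean computation silently drops the term $\E[\bm W_{t'}-\bm W_t\mid\mathcal F_t]$. It vanishes because the increment is independent of $\sigma(\bm x,\bm W_s:s\le t)\supseteq\mathcal F_t$.
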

To use stochastic localization as a sampling algorithm, one must then ensure that:
\begin{enumerate}
    \item There exists an efficient algorithm for computing means of the tilted distribution Eq.~\eqref{eq:tilted_dist}.
    \item The exact stochastic localization process Eq.~\eqref{eq:stoch_loc_process} converges quickly.
    \item The stochastic localization process is robust to errors induced by discretization of the time evolution and error in approximating the tilted means $\bm{m}_{\bm{y}}$.
\end{enumerate}
Condition 1 is problem-dependent and will be analyzed for the quantum $p$-spin model in Section~\ref{sec:quasi_poly_est_alg}. In the remainder of this section, we demonstrate Conditions 2 and 3 generally. In Section~\ref{sec:conv_rate_exact_stoch_loc}, we show that the exact stochastic localization process converges quickly in Wasserstein distance, which also bounds the convergence in total variation distance; this result is mostly adapted from prior work. Then, in Section~\ref{sec:stoch_loc_robust} we prove that stochastic localization is robust to errors in estimating $\bm{m}$ as well as errors induced by discretizing the process, and combine this result with the results of Section~\ref{sec:conv_rate_exact_stoch_loc} to prove our main convergence theorem (Theorem~\ref{thm:num_guar_sl}).

\subsection{Convergence rate of the exact stochastic localization process}\label{sec:conv_rate_exact_stoch_loc}

We first prove that Condition 2 holds under very general conditions; our proof is adapted from \cite[Lemmas~III.3 and~VII.3]{9996629}, where this Condition was demonstrated for the Sherrington--Kirkpatrick model. In what follows we use $\mathcal{L}\left(\cdot\right)$ to denote the law of $\cdot$, $\operatorname{W}_2$ to denote the (classical) Wasserstein distance of order $2$, and $\TVD$ to denote the total variation distance.
\begin{prop}[Convergence rate of stochastic localization, adapted from~{\cite[Lemmas~III.3 and~VII.3]{9996629}}]\label{prop:stoch_loc_conv_rate}
    Consider the stochastic localization process Eq.~\eqref{eq:stoch_loc_process}. For any $t>0$,
    \begin{equation}
        \operatorname{W}_2\left(p,\mathcal{L}\left(\bm{m}\left(\bm{u}_t\right)\right)\right)\leq\sqrt{\frac{N}{t}}.
    \end{equation}
\end{prop}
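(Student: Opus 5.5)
We will couple $\bm{m}(\bm{u}_t)$ with $\bm{m}_\infty$, whose law is $p$ by Proposition~\ref{prop:stoch_loc}, and bound $\E\norm{\bm{m}_\infty-\bm{m}(\bm{u}_t)}_2^2$. Since $\operatorname{W}_2$ is an infimum over couplings, this gives
\begin{equation}
\operatorname{W}_2\left(p,\mathcal{L}\left(\bm{m}\left(\bm{u}_t\right)\right)\right)^2\leq\E\norm{\bm{m}_\infty-\bm{m}\left(\bm{u}_t\right)}_2^2 ,
\end{equation}
so it suffices to show that the right-hand side is at most $N/t$.

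The first step is to use the Bayesian (planted) interpretation of stochastic localization. The law of the process $(\bm{u}_s)_{s\geq 0}$ equals the law of $\bm{u}_s = s\bm{x}+\bm{B}'_s$, where $\bm{x}\sim p$ and $\bm{B}'$ is an independent Brownian motion. Under this coupling, $p_{\bm{u}_t}$ is the posterior of $\bm{x}$ given $\bm{u}_t$, so $\bm{m}(\bm{u}_t)=\E[\bm{x}\mid\bm{u}_t]$. Moreover, $\bm{m}_\infty=\bm{x}$ almost surely, because $\bm{u}_t/t\to\bm{x}$. Equivalently, one can argue purely from the martingale property in Proposition~\ref{prop:stoch_loc}: $\bm{m}(\bm{u}_t)$ is a bounded martingale converging to $\bm{m}_\infty$, hence $\bm{m}(\bm{u}_t)=\E[\bm{m}_\infty\mid\mathcal{F}_t]$.

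Either way, $\E\norm{\bm{m}_\infty-\bm{m}(\bm{u}_t)}_2^2$ is the Bayes-optimal mean squared error for estimating $\bm{x}$ from the Gaussian observation $\bm{u}_t=t\bm{x}+\bm{B}'_t$. Since the posterior mean minimizes mean squared error among all estimators measurable with respect to $\bm{u}_t$, we may compare it with the naive estimator $\bm{u}_t/t$:
\begin{equation}
\E\norm{\bm{x}-\E[\bm{x}\mid\bm{u}_t]}_2^2\leq\E\norm{\bm{x}-\bm{u}_t/t}_2^2=\E\norm{\bm{B}'_t}_2^2/t^2=N/t.
\end{equation}
Taking square roots gives the stated bound $\sqrt{N/t}$.

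The main obstacle is justifying the planted representation, i.e.\ that the SDE solution has the same path law as $t\bm{x}+\bm{B}'_t$. The standard route is to check that $\bm{u}_t-\int_0^t\E[\bm{x}\mid\mathcal{F}^{\bm{u}}_s]\,ds$ is a Brownian motion in the filtration generated by $\bm{u}$ (the innovations theorem), and that $\E[\bm{x}\mid\mathcal{F}^{\bm{u}}_s]=\bm{m}(\bm{u}_s)$ because the likelihood depends only on $\bm{u}_s$. Uniqueness of strong solutions then follows since $\bm{m}$ is smooth with Jacobian the covariance, hence globally Lipschitz with constant at most $1$ entrywise-bounded. If this identification is cumbersome to carry out, I would instead cite \cite[Section~4]{AFST_2023_6_32_5_939_0} for the representation and keep only the martingale/orthogonal-projection argument. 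The identity $\bm{m}(\bm{u}_t)=\E[\bm{m}_\infty\mid\bm{u}_t]$ follows from item 1 of Proposition~\ref{prop:stoch_loc}, and the comparison with $\bm{u}_t/t$ requires only that $\bm{u}_t-t\bm{m}_\infty$ is conditionally centered Gaussian with variance $t$ per coordinate.
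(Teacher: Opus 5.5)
Your argument is correct, but it reaches the key estimate by a different route from the paper.

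Both proofs bound the same quantity. Conditionally on $\mathcal{F}_t$, the limit $\bm{m}_\infty$ has law $p_{\bm{u}_t}$, so $\mathbb{E}\lVert\bm{m}_\infty-\bm{m}(\bm{u}_t)\rVert_2^2=\mathbb{E}\operatorname{tr}\bm{\varSigma}_{\bm{u}_t}$. Your coupling $(\bm{m}_\infty,\bm{m}(\bm{u}_t))$ is exactly the one implicit in the paper's joint-convexity/Jensen step, which averages the (unique) couplings of $p_{\bm{u}_t}$ with $\delta_{\bm{m}_{\bm{u}_t}}$.

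The difference lies in how the bound on this quantity is obtained. The paper works only with the SDE: It\^o's isometry and the martingale property give $\frac{d}{dt}\mathbb{E}\bm{\varSigma}_{\bm{u}_t}=-\mathbb{E}\bm{\varSigma}_{\bm{u}_t}^2\preceq-(\mathbb{E}\bm{\varSigma}_{\bm{u}_t})^2$ (Lemma~\ref{lem:diff_exp_cov}). Integrating $\frac{d}{dt}(\mathbb{E}\bm{\varSigma}_{\bm{u}_t})^{-1}\succeq\mathbf{I}_N$ then gives $\mathbb{E}\bm{\varSigma}_{\bm{u}_t}\preceq\mathbf{I}_N/t$ (Lemma~\ref{lem:exp_cov_bound}). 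You instead pass to the planted channel $\bm{u}_t=t\bm{x}+\bm{B}'_t$ and compare the posterior mean with the suboptimal estimator $\bm{u}_t/t$. Applied to error-covariance matrices rather than their traces, this comparison yields the same Loewner bound $\mathbb{E}\bm{\varSigma}_{\bm{u}_t}\preceq\mathbf{I}_N/t$ in one line. You never differentiate $\mathbb{E}\bm{\varSigma}_{\bm{u}_t}$ or have to handle its kernel.

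The price is that all the work moves into identifying the path law of the SDE with that of the planted process, which the paper never needs. Your sketch of this identification is the standard one and is sound, so citing it is fine:
\begin{itemize}
\item the innovations theorem;
\item sufficiency of $\bm{u}_t$, because $\lVert\bm{x}\rVert_2^2=N$ on the hypercube;
\item pathwise uniqueness, because $\nabla\bm{m}(\bm{y})=\bm{\varSigma}_{\bm{y}}$ has operator norm at most $\operatorname{tr}\bm{\varSigma}_{\bm{y}}\le N$;
\item Yamada--Watanabe, for uniqueness in law.
\end{itemize}

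Your ``purely martingale'' alternative does not remove this dependence. The martingale property in Proposition~\ref{prop:stoch_loc} yields $\bm{m}(\bm{u}_t)=\mathbb{E}[\bm{m}_\infty\mid\mathcal{F}_t]$, but not $\mathbb{E}\lVert\bm{u}_t-t\bm{m}_\infty\rVert_2^2=Nt$. Computing the latter directly from the SDE, via $\bm{m}_\infty-\bm{m}(\bm{u}_s)=\int_s^\infty\bm{\varSigma}_{\bm{u}_r}\,d\bm{B}_r$, goes through precisely the identity $\frac{d}{dt}\mathbb{E}\operatorname{tr}\bm{\varSigma}_{\bm{u}_t}=-\mathbb{E}\operatorname{tr}\bm{\varSigma}_{\bm{u}_t}^2$ of Lemma~\ref{lem:diff_exp_cov}. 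At that point the paper's route is the shorter one.
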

Later, in Section~\ref{sec:stoch_loc_robust}, we will consider a randomized rounding procedure for $\bm{m}$ to force it onto the hypercube which will allow us to upgrade our Wasserstein distance guarantees into total variation distance guarantees. For now, note that convergence in the normalized Wasserstein distance $\frac{1}{\sqrt{N}}\operatorname{W}_2\left(p,q\right)$ implies convergence in distribution with respect to the normalized Euclidean metric $\frac{1}{\sqrt{N}}\left\lVert\cdot\right\rVert_2$.

Before proving this main result, we prove some supplemental results. In what follows, we let
\begin{equation}
    \bm{\varSigma}_{\bm{y}}=\sum_{\bm{z}\in\left\{-1,1\right\}^N}\left(\bm{z}-\bm{m}_{\bm{y}}\right)\otimes\left(\bm{z}-\bm{m}_{\bm{y}}\right)^\intercal p_{\bm{y}}\left(\bm{z}\right)
\end{equation}
denote the covariance matrix of $p_{\bm{y}}\left(\bm{z}\right)$.
\begin{lemma}[Differential equation governing the expected covariance]\label{lem:diff_exp_cov}
    Let $\bm{u}_t$ be governed by the stochastic localization process. The following relations hold:
    \begin{align}
        \dd{\bm{m}_{\bm{u}_t}}&=\bm{\varSigma}_{\bm{u}_t}\dd{\bm{B}_t};\label{eq:m_diff_eq}\\
        \dv{t}\mathbb{E}\left[\bm{\varSigma}_{\bm{u}_t}\right]&\preceq -\mathbb{E}\left[\bm{\varSigma}_{\bm{u}_t}\right]^2.\label{eq:cov_diff_ineq}
    \end{align}
\end{lemma}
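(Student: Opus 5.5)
The plan is a direct Itô computation that uses only the exponential-family structure of the tilt. Write $\bm{m}(\bm{y})=\nabla_{\bm{y}}\log Z(\bm{y})$ and $\bm{\varSigma}_{\bm{y}}=\nabla_{\bm{y}}\bm{m}(\bm{y})=\nabla^2_{\bm{y}}\log Z(\bm{y})$. Both identities follow from differentiating $Z(\bm{y})=\sum_{\bm{z}}e^{\bm{y}^\intercal\bm{z}}p(\bm{z})$ under the finite sum. Differentiating once more, the third derivative is the third cumulant tensor
\begin{equation}
    \partial_i\partial_j\partial_k\log Z=\mathbb{E}_{p_{\bm{y}}}\left[(z_i-m_i)(z_j-m_j)(z_k-m_k)\right].
\end{equation}
All of these are smooth and bounded in $\bm{y}$, since the sum is finite and the coordinates satisfy $\lvert z_i\rvert\le 1$. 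Because $\bm{m}$ is bounded and Lipschitz (its Jacobian $\bm{\varSigma}$ satisfies $\bm{0}\preceq\bm{\varSigma}\preceq N\bm{I}$), the SDE \eqref{eq:stoch_loc_process} has a unique strong solution, so Itô calculus applies.

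For \eqref{eq:m_diff_eq}, apply Itô's formula to $m_i(\bm{u}_t)$:
\begin{equation}
    \dd{m_i}=\sum_j\partial_j m_i\,\dd{u_j}+\tfrac12\sum_{j,k}\partial_j\partial_k m_i\,\dd t.
\end{equation}
The drift part does not obviously vanish, so the key step is to identify it. Rather than fight it in this form, I would switch to the tilted measure directly. For each $\bm{z}$, compute $\dd{p_{\bm{u}_t}(\bm{z})}$ by Itô applied to $e^{\bm{u}^\intercal\bm{z}}p(\bm{z})/Z(\bm{u})$:
\begin{align}
    \dd{\log\left(e^{\bm{u}^\intercal\bm{z}}\right)}&=\bm{z}^\intercal\dd{\bm{u}},\\
    \dd{\log Z}&=\bm{m}^\intercal\dd{\bm{u}}+\tfrac12\left(\operatorname{tr}\bm{\varSigma}\right)\dd t.
\end{align}
The second line uses $\nabla^2 Z/Z=\bm{\varSigma}+\bm{m}\bm{m}^\intercal$. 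It gives
\begin{equation}
    \dd{\log Z}=\bm{m}^\intercal\dd{\bm{u}}+\tfrac12\left(\operatorname{tr}\bm{\varSigma}\right)\dd t-\tfrac12\lVert\bm{m}\rVert^2\dd t+\tfrac12\lVert\bm{m}\rVert^2\dd t,
\end{equation}
where the quadratic-variation correction for $\log$ is written out explicitly. Carrying this through, with $\dd{\bm{u}}=\bm{m}\,\dd t+\dd{\bm{B}}$, the Itô corrections cancel precisely because the drift equals $\bm{m}$. The result is
\begin{equation}
    \dd{p_{\bm{u}_t}(\bm{z})}=p_{\bm{u}_t}(\bm{z})(\bm{z}-\bm{m}_{\bm{u}_t})^\intercal\dd{\bm{B}_t}.
\end{equation}
This is the martingale property of Proposition~\ref{prop:stoch_loc} in differential form. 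Multiplying by $\bm{z}$ and summing over $\bm{z}$ then gives $\dd{\bm{m}}=\bm{\varSigma}\,\dd{\bm{B}_t}$. This cancellation is the main bookkeeping obstacle. If it proves messy, I will instead cite the martingale property from Proposition~\ref{prop:stoch_loc}: since $\bm{m}_{\bm{u}_t}$ is a linear combination of the martingales $p_{\bm{u}_t}(\{\bm{z}\})$, its drift vanishes, and Itô's formula then forces the diffusion coefficient to be $\nabla\bm{m}=\bm{\varSigma}$.

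For \eqref{eq:cov_diff_ineq}, write $\bm{\varSigma}_{\bm{u}_t}=\sum_{\bm{z}}p_{\bm{u}_t}(\bm{z})\bm{z}\bm{z}^\intercal-\bm{m}\bm{m}^\intercal$. The first term is a martingale, because each $p_{\bm{u}_t}(\bm{z})$ is, and it is bounded, so it has constant expectation. For the second term, Itô's product rule together with \eqref{eq:m_diff_eq} gives
\begin{equation}
    \dd(\bm{m}\bm{m}^\intercal)=(\text{martingale})+\bm{\varSigma}\bm{\varSigma}^\intercal\dd t=(\text{martingale})+\bm{\varSigma}^2\,\dd t.
\end{equation}
The martingale terms are genuine martingales, not just local ones, since $\bm{\varSigma}$ is bounded. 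Taking expectations therefore yields
\begin{equation}
    \dv{t}\mathbb{E}[\bm{\varSigma}_{\bm{u}_t}]=-\mathbb{E}[\bm{\varSigma}_{\bm{u}_t}^2].
\end{equation}
Finally, the map $\bm{A}\mapsto\bm{A}^2$ is operator convex on Hermitian matrices, so matrix Jensen gives $\mathbb{E}[\bm{\varSigma}^2]\succeq\mathbb{E}[\bm{\varSigma}]^2$, which is \eqref{eq:cov_diff_ineq}. The only delicate points here are justifying differentiation of the expectation in $t$, which follows by integrating and using the continuity of $t\mapsto\mathbb{E}\bm{\varSigma}^2$ from bounded coefficients, and the matrix Jensen step. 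Both are standard.
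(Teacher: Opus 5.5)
Your proof is correct and follows essentially the same route as the paper. You write $\dd{\bm{m}}=\sum_{\bm{z}}\bm{z}\,\dd{p_{\bm{u}_t}(\bm{z})}$ with $\dd{p_{\bm{u}_t}(\bm{z})}=p_{\bm{u}_t}(\bm{z})(\bm{z}-\bm{m}_{\bm{u}_t})^\intercal\dd{\bm{B}_t}$, then use the martingale property together with the quadratic-variation term of $\bm{m}\bm{m}^\intercal$ to get $\dv{t}\mathbb{E}[\bm{\varSigma}_{\bm{u}_t}]=-\mathbb{E}[\bm{\varSigma}_{\bm{u}_t}^2]$, and finish with Jensen; the paper phrases the middle step as It\^o's isometry applied to $\bm{m}_{\bm{u}_t}-\bm{m}_{\bm{u}_0}$, which is the same computation as your product rule.

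One small correction: in your direct It\^o computation of $\dd{p_{\bm{u}_t}(\bm{z})}$, the $\dd t$ terms cancel not only because the drift is $\bm{m}$. The cancellation also needs $\lVert\bm{z}\rVert_2^2=N$ on the hypercube, equivalently $\operatorname{tr}\bm{\varSigma}=N-\lVert\bm{m}\rVert_2^2$; for a general measure the tilt would need an extra $-t\lVert\bm{z}\rVert_2^2/2$ term.
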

\begin{proof}
    Eq.~\eqref{eq:m_diff_eq} was shown in \cite[Eq.~(11)]{eldan2022109392}; explicitly, by definition of $\bm{m}_{\bm{u}_t}$,
    \begin{equation}
        \begin{aligned}
            \dd{\bm{m}_{\bm{u}_t}}&=\sum_{\bm{z}\in\left\{-1,1\right\}^N}\bm{z}\dd{p_{\bm{u}_t}\left(\bm{z}\right)}\\
            &=\sum_{\bm{z}\in\left\{-1,1\right\}^N}\left(\bm{z}\otimes\left(\bm{z}-\sum_{\bm{w}\in\left\{-1,1\right\}^N}\bm{w}p_{\bm{u}_t}\left(\bm{w}\right)\right)^\intercal\right)p_{\bm{u}_t}\left(\bm{z}\right)\dd{\bm{B}_t}\\
            &=\left(\sum_{\bm{z}\in\left\{-1,1\right\}^N}\bm{z}\otimes\bm{z}^\intercal p_{\bm{u}_t}\left(\bm{z}\right)-\bm{m}_{\bm{u}_t}\otimes\bm{m}_{\bm{u}_t}^\intercal\right)\dd{\bm{B}_t}\\
            &=\bm{\varSigma}_{\bm{u}_t}\dd{\bm{B}_t}.
        \end{aligned}
    \end{equation}

    We now show Eq.~\eqref{eq:cov_diff_ineq}, following a similar strategy to \cite[Eq.~(11)]{eldan2020taming}. Integrating Eq.~\eqref{eq:m_diff_eq} gives:
    \begin{equation}\label{eq:m_squared_rel}
        \begin{aligned}
            \bm{m}_{\bm{u}_t}-\bm{m}_{\bm{u}_0}&=\int_{s=0}^t\bm{\varSigma}_{\bm{u}_s}\dd{\bm{B}_s}\\
            \implies\mathbb{E}_{\bm{B}}\left[\left(\bm{m}_{\bm{u}_t}-\bm{m}_{\bm{u}_0}\right)\otimes\left(\bm{m}_{\bm{u}_t}-\bm{m}_{\bm{u}_0}\right)^\intercal\right]&=\mathbb{E}_{\bm{B}}\left[\left(\int_{s=0}^t\bm{\varSigma}_{\bm{u}_s}\dd{\bm{B}_s}\right)\otimes\left(\int_{s=0}^t\bm{\varSigma}_{\bm{u}_s}\dd{\bm{B}_s}\right)^\intercal\right].
        \end{aligned}
    \end{equation}
    Now, as $p_{\bm{u}_t}\left(A\right)$ is a martingale for all $A\subseteq\left\{-1,1\right\}^N$ by Proposition~\ref{prop:stoch_loc},
    \begin{equation}
        \dv{t}\mathbb{E}_{\bm{B}}\left[\sum_{\bm{z}\in\left\{-1,1\right\}^N}\bm{z}\otimes\bm{z}^\intercal p_{\bm{u}_t}\left(\bm{z}\right)\right]=\bm{0}
    \end{equation}
    and
    \begin{equation}
        \dv{t}\mathbb{E}_{\bm{B}}\left[\bm{m}_{\bm{u}_t}\right]=\bm{0}.
    \end{equation}
    In particular,
    \begin{equation}\label{eq:m_cov_rel}
        \begin{aligned}
            -\dv{t}\mathbb{E}_{\bm{B}}\left[\left(\bm{m}_{\bm{u}_t}-\bm{m}_{\bm{u}_0}\right)\otimes\left(\bm{m}_{\bm{u}_t}-\bm{m}_{\bm{u}_0}\right)^\intercal\right]&=-\dv{t}\mathbb{E}_{\bm{B}}\left[\bm{m}_{\bm{u}_t}\otimes\bm{m}_{\bm{u}_t}^\intercal\right]\\
            &=\dv{t}\mathbb{E}_{\bm{B}}\left[\sum_{\bm{z}\in\left\{-1,1\right\}}\left(\bm{z}\otimes\bm{z}^\intercal-\bm{m}_{\bm{u}_t}\otimes\bm{m}_{\bm{u}_t}^\intercal\right)p_{\bm{u}_t}\left(\bm{z}\right)\right]\\
            &=\dv{t}\mathbb{E}_{\bm{B}}\left[\bm{\varSigma}_{\bm{u}_t}\right].
        \end{aligned}
    \end{equation}
    Furthermore, by It{\^o}'s isometry~\cite[Corollary~3.1.7]{oksendal2003} followed by Fubini's theorem,
    \begin{equation}\label{eq:ito_iso}
        \mathbb{E}_{\bm{B}}\left[\left(\int_{s=0}^t\bm{\varSigma}_{\bm{u}_s}\dd{\bm{B}_s}\right)\otimes\left(\int_{s=0}^t\bm{\varSigma}_{\bm{u}_s}\dd{\bm{B}_s}\right)^\intercal\right]=\mathbb{E}_{\bm{B}}\left[\int_0^t\bm{\varSigma}_{\bm{u}_s}^2\dd{s}\right]=\int_0^t\mathbb{E}_{\bm{B}}\left[\bm{\varSigma}_{\bm{u}_s}^2\right]\dd{s}.
    \end{equation}
    Taking the time derivative of Eq.~\eqref{eq:m_squared_rel} and applying Eqs.~\eqref{eq:m_cov_rel} and~\eqref{eq:ito_iso} thus gives:
    \begin{equation}
        \dv{t}\mathbb{E}_{\bm{B}}\left[\bm{\varSigma}_{\bm{u}_t}\right]=-\mathbb{E}_{\bm{B}}\left[\bm{\varSigma}_{\bm{u}_s}^2\right].
    \end{equation}
    Eq.~\eqref{eq:cov_diff_ineq} then follows by noting $\mathbb{E}_{\bm{B}}\left[\bm{\varSigma}_{\bm{u}_s}^2\right]\succeq\mathbb{E}_{\bm{B}}\left[\bm{\varSigma}_{\bm{u}_s}\right]^2$.
\end{proof}
We can use Lemma~\ref{lem:diff_exp_cov} to compute an explicit bound on the operator norm of $\bm{\varSigma}_t$ as a function of $t$.
\begin{lemma}[Bound on the expected covariance]\label{lem:exp_cov_bound}
    Let $\bm{u}_t$ be governed by the stochastic localization process. For all $t\in\mathbb{R}^+$,
    \begin{equation}\label{eq:cov_spec_bound}
        \mathbb{E}_{\bm{B}}\left[\bm{\varSigma}_{\bm{u}_t}\right]\preceq\frac{1}{t}\mathbf{I}_N.
    \end{equation}
\end{lemma}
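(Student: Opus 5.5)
We prove the bound by comparing $\mathbb{E}_{\bm{B}}\left[\bm{\varSigma}_{\bm{u}_t}\right]$ with the solution of the matrix Riccati equation $\dot{\bm{S}}=-\bm{S}^2$, using the differential inequality Eq.~\eqref{eq:cov_diff_ineq} of Lemma~\ref{lem:diff_exp_cov}. Write $\bm{S}_t:=\mathbb{E}_{\bm{B}}\left[\bm{\varSigma}_{\bm{u}_t}\right]$. This is a positive semidefinite matrix with $\bm{S}_0=\bm{\varSigma}_{\bm{0}}\preceq N\mathbf{I}_N$, since its trace is at most $N$. The target $\bm{S}_t\preceq\frac{1}{t}\mathbf{I}_N$ is exactly what the scalar comparison $\dot{s}\leq -s^2$ gives: there $\frac{\dd}{\dd t}(1/s)\geq 1$, so $1/s_t\geq 1/s_0+t\geq t$.

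To avoid the noncommutativity of matrix inversion, we reduce to scalars through a quadratic form. Fix a unit vector $\bm{v}$ and set $f(t):=\bm{v}^\intercal\bm{S}_t\bm{v}\geq 0$. By Eq.~\eqref{eq:cov_diff_ineq},
\begin{equation}
f'(t)\leq -\bm{v}^\intercal\bm{S}_t^2\bm{v}=-\left\lVert\bm{S}_t\bm{v}\right\rVert_2^2\leq -\left(\bm{v}^\intercal\bm{S}_t\bm{v}\right)^2=-f(t)^2,
\end{equation}
where the last step is Cauchy--Schwarz. Now consider two cases. On any interval where $f>0$, we have $\frac{\dd}{\dd t}\left(1/f\right)=-f'/f^2\geq 1$, so $1/f(t)\geq 1/f(t_0)+(t-t_0)$. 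If instead $f(t_0)=0$ at some time, then $f'\leq 0$ together with $f\geq 0$ forces $f\equiv 0$ afterwards.

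Combining the cases with $f(0)\leq N<\infty$ gives $f(t)\leq \frac{1}{t+1/f(0)}\leq \frac{1}{t}$ for every $t>0$. Since $\bm{v}$ was arbitrary, this yields $\bm{S}_t\preceq\frac{1}{t}\mathbf{I}_N$.

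The main obstacle is regularity. We need $t\mapsto\bm{S}_t$ to be differentiable, or at least absolutely continuous, so that Eq.~\eqref{eq:cov_diff_ineq} holds in the sense used above. This follows from the integral form established in the proof of Lemma~\ref{lem:diff_exp_cov}: $\bm{S}_t=\bm{S}_0-\int_0^t\mathbb{E}\left[\bm{\varSigma}_{\bm{u}_s}^2\right]\dd{s}$ with a bounded integrand, since $\left\lVert\bm{\varSigma}\right\rVert\leq N$. We can therefore run the comparison in integrated form, applying it to $1/f$ wherever $f>0$. This makes the case analysis at zeros of $f$ rigorous.
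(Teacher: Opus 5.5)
Your proof is correct, but it takes a different route from the paper.

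The paper stays at the matrix level. It first argues that $\overline{\bm{\varSigma}}_t$ may be taken invertible. It then conjugates $\frac{\mathrm{d}}{\mathrm{d}t}\overline{\bm{\varSigma}}_t\preceq-\overline{\bm{\varSigma}}_t^2$ by $\overline{\bm{\varSigma}}_t^{-1}$ to obtain $\frac{\mathrm{d}}{\mathrm{d}t}\overline{\bm{\varSigma}}_t^{-1}\succeq\mathbf{I}_N$. Integrating gives $\overline{\bm{\varSigma}}_t^{-1}\succeq\overline{\bm{\varSigma}}_0^{-1}+t\mathbf{I}_N\succeq t\mathbf{I}_N$. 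You instead fix a unit direction $\bm{v}$ and use Cauchy--Schwarz, $\bm{v}^\intercal\bm{S}_t^2\bm{v}\geq(\bm{v}^\intercal\bm{S}_t\bm{v})^2$, to reduce to the scalar Riccati inequality $f'\leq-f^2$.

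What your route buys:
\begin{itemize}
\item It never needs invertibility or the derivative of a matrix inverse. Degenerate directions are handled by the elementary fact that a nonnegative nonincreasing $f$ stays at zero once it reaches it.
\item By contrast, the paper handles degenerate directions with an informal perturbation on the kernel. That step becomes rigorous if one restricts to the range of $\overline{\bm{\varSigma}}_t$, which is independent of $t$ because tilting does not change the support of $p$.
\item You also make the needed absolute continuity explicit through the integral identity $\bm{S}_t=\bm{S}_0-\int_0^t\mathbb{E}[\bm{\varSigma}_{\bm{u}_s}^2]\,\mathrm{d}s$.
\end{itemize}

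What the paper's route buys is a matrix-valued comparison, $\overline{\bm{\varSigma}}_t\preceq\bigl(\overline{\bm{\varSigma}}_0^{-1}+t\mathbf{I}_N\bigr)^{-1}$ on the range. By concavity of $\lambda\mapsto\lambda/(1+t\lambda)$, this is at least as strong as your directional bound $\bm{v}^\intercal\bm{S}_t\bm{v}\leq\bigl(t+(\bm{v}^\intercal\bm{S}_0\bm{v})^{-1}\bigr)^{-1}$. For the lemma as stated, both yield $\frac{1}{t}\mathbf{I}_N$.
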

\begin{proof}
    To simplify notation we define:
    \begin{equation}
        \overline{\bm{\varSigma}}_t:=\mathbb{E}_{\bm{B}}\left[\bm{\varSigma}_{\bm{u}_t}\right]\succeq\bm{0}.
    \end{equation}
    Note that any $0$ eigenvectors $\bm{v}_i$ of $\overline{\bm{\varSigma}}_t$ trivially satisfy:
    \begin{equation}
        \bm{v}_i^\intercal\overline{\bm{\varSigma}}_t\bm{v}_i=0\leq\frac{1}{t}\left\lVert\bm{v}_i\right\rVert_2^2,
    \end{equation}
    so we can assume $\overline{\bm{\varSigma}}_t$ is invertible (i.e., positive definite) WLOG by taking $\overline{\bm{\varSigma}}_t\to\overline{\bm{\varSigma}}_t+\sum_i\bm{v}_i\otimes\bm{v}_i^\intercal$ and noting this perturbation does not affect the nonzero eigenvectors.

    Now, we rewrite Eq.~\eqref{eq:cov_diff_ineq} from Lemma~\ref{lem:diff_exp_cov}:
    \begin{equation}
        \begin{aligned}
            \dv{t}\overline{\bm{\varSigma}}_t&\preceq-\overline{\bm{\varSigma}}_t^2\\
            \iff-\overline{\bm{\varSigma}}_t^{-1}\left(\dv{t}\overline{\bm{\varSigma}}_t\right)\overline{\bm{\varSigma}}_t^{-1}&\succeq\mathbf{I}_n\\
            \iff\dv{t}\left(\overline{\bm{\varSigma}}_t^{-1}\right)&\succeq\mathbf{I}_n.
        \end{aligned}
    \end{equation}
    Integrating over any interval $\left[0,t\right]$ with $t>0$ and noting that $\overline{\bm{\varSigma}}_0^{-1}$ is positive semidefinite then yields:
    \begin{equation}
        \begin{aligned}
            \overline{\bm{\varSigma}}_t^{-1}-\overline{\bm{\varSigma}}_0^{-1}&\succeq t\mathbf{I}_n\\
            \implies\overline{\bm{\varSigma}}_t^{-1}&\succeq t\mathbf{I}_n\\
            \implies\overline{\bm{\varSigma}}_t&\preceq\frac{1}{t}\mathbf{I}_n.
        \end{aligned}
    \end{equation}
\end{proof}

We now have the tools to prove Proposition~\ref{prop:stoch_loc_conv_rate}.
\begin{proof}[Proof of Proposition~\ref{prop:stoch_loc_conv_rate}]
    Using Lemma~\ref{lem:exp_cov_bound} and taking the trace of both sides of Eq.~\eqref{eq:cov_spec_bound},
    \begin{equation}
        \mathbb{E}_{\bm{B}}\left[\sum_{\bm{z}\in\left\{-1,1\right\}^N}\left\lVert\bm{z}-\bm{m}_{\bm{u}_t}\right\rVert_2^2 p_{\bm{u}_t}\left(\bm{z}\right)\right]\leq\frac{N}{t}.
    \end{equation}
    By definition, $\sum_{\bm{z}\in\left\{-1,1\right\}^N}\left\lVert\bm{z}-\bm{m}_{\bm{u}_t}\right\rVert_2^2 p_{\bm{u}_t}\left(\bm{z}\right)$ upper bounds the Wasserstein distance of order $2$ between $p_{\bm{u}_t}$ and the Dirac delta function $\delta_{\bm{m}_{\bm{u}_t}}$ centered at $\bm{m}_{\bm{u}_t}$. That is,
    \begin{equation}
        \mathbb{E}_{\bm{B}}\left[\operatorname{W}_2\left(p_{\bm{u}_t},\delta_{\bm{m}_{\bm{u}_t}}\right)^2\right]\leq\frac{N}{t}.
    \end{equation}
    As $\operatorname{W}_2\left(\cdot,\cdot\right)^2$ is jointly convex in its arguments, we have by Jensen's inequality:
    \begin{equation}
        \operatorname{W}_2\left(\mathbb{E}_{\bm{B}}\left[p_{\bm{u}_t}\right],\mathbb{E}_{\bm{B}}\left[\delta_{\bm{m}_{\bm{u}_t}}\right]\right)^2\leq\frac{N}{t}.
    \end{equation}
    The final result then follows by noting that $\mathbb{E}_{\bm{B}}\left[p_{\bm{u}_t}\right]=p$ and $\mathbb{E}_{\bm{B}}\left[\delta_{\bm{m}_{\bm{u}_t}}\right]=\bm{m}_{\bm{u}_t}$.
\end{proof}

\subsection{Stochastic localization is robust to discretization and tilted-mean approximation}\label{sec:stoch_loc_robust}

 The prescription of Stochastic Localization (SL) outlined in the previous section requires us to run the exact process below for long enough time $T$ (set according to Proposition~\ref{prop:stoch_loc_conv_rate}):
\begin{equation}
    \bm{u}_T = \int_0^T \bm{m}(\bm{u}_t)dt + \bm{W}_T,
\end{equation}
where $\bm{W}_T$ denotes Brownian motion. 
Let us call the law of paths $\bm{u}:=(\bm{u}_t)_{t\in[0,T]}$ generated by the exact SL process above $P$. Observe that the outcome space of this law is the space of continuous functions $C([0,T], \R^n)$.

In practice, we are only able to run a numerical approximation of the exact SL process and this is different in two important ways. First, the process must be discretized into $N$ steps (where $N$ is distinguished from the number of spins by context) of size $\delta = T/N$. We denote these discrete grid points $\left\{t_k=k\delta\right\}_{k=0}^N$. Secondly, in practice we have access only to an estimator $\hat{\bm{m}} (\cdot)$ of the magnetizations $\bm{m} (\cdot)$ as a function of tilt $\bm{u}_t$. As a result, the numerical SL process is as follows
\begin{equation}
   \hat{\bm{u}}_{k+1} = \hat{\bm{u}}_{k} + \hat{\bm{m}}(\hat{\bm{u}}_{k})\delta + \bm{W}_{t_{k+1}} - \bm{W}_{t_{k}}.
\end{equation}
Because this stochastic process is discrete, the law over paths $(\hat{\bm{u}}_{k})_{k=0}^{N}$ that it generates is over a different outcome space than that of the exact SL process above. For our proofs, it will also be important to consider an ``interpolated'' version of the discretized numerical process which shares a sample space with the exact SL process by ensuring that both their laws are over $C([0,T], \R^n)$. As such, we interpolate between the grid points $(\hat{\bm{u}}_k)_{i=0}^{k}$ in the following specific way---which we will motivate later---to make the paths continuous across $[0,T]$. Define $\eta$ as the function tracking the most recent grid point, i.e., $\eta(t) = t_k$ for $t_k\leq t<t_{k+1}$. We define the interpolated numerical process $(\hat{\bm{u}}_t)_{t\in[0,T]}$ as that obtained by continuing the Brownian motion but freezing the drift to that at $\eta(t)$. That is,
\begin{equation}
\hat{\bm{u}}_\tau =    \int_0^\tau \hat{\bm{m}}(\hat{\bm{u}}_{\eta(t)})dt + \bm{W}_\tau
\end{equation}
for all $\tau\in\left[0,T\right]$. Notice that the set of tilts from this process evaluated at the grid times still follows the same law as the paths generated by the discrete numerical process above, that is, $(\hat{\bm u}_{t_k})_{k=0}^N\overset{d}{=} (\hat{\bm{u}}_{k})_{k=0}^{N}$. 
In analogy to exact SL, let $Q$ be the law over (now continuous) paths $\hat{\bm{u}}:=(\hat{\bm{u}}_t)_{t\in[0,T]}$ generated by this interpolated numerical SL process. 

Taking these numerical errors into account and using the convergence rate of the exact stochastic localization process (Proposition~\ref{prop:stoch_loc_conv_rate}) we now prove our main result on the convergence of algorithmic stochastic localization.
\begin{theorem}[General convergence of algorithmic stochastic localization]\label{thm:num_guar_sl}
Let $p$ be a distribution on $\{\pm1\}^N$, and let $\bm{m}(\bm u)$ denote the mean of its tilted distribution at tilt $\bm u\in\mathbb R^N$. Suppose $\bm{m}$ is continuous. Let $\hat{\bm{m}}:\mathbb R^N\to[-1,1]^N$ be a measurable estimator satisfying the accuracy bound
\begin{align}
    \frac{1}{\sqrt{N}}\sup_{\bm u\in\mathbb R^N}\norm{\hat{\bm{m}}(\bm u)-\bm{m}(\bm u)}_2\leq \rho.
\end{align}
For a grid $\left\{t_k=k\delta\right\}_{k=0}^N$, define the algorithmic stochastic localization process by
\begin{align}
    \hat{\bm u}_{k+1}
    =\hat{\bm u}_k+\delta\,\hat{\bm{m}}(\hat{\bm u}_k)+\bm W_{t_{k+1}}-\bm W_{t_k},
    \qquad \hat{\bm u}_0=0,
\end{align}
and let $R:[-1,1]^N\to\{\pm1\}^N$ denote coordinate-wise randomized rounding. Then, for every $\epsilon>0$, there exist
\begin{align}
    T=\poly(N,1/\epsilon),\qquad
    \delta^{-1}=\poly(N,1/\epsilon),\qquad
    \rho^{-1}=\poly(N,1/\epsilon)
\end{align}
such that, with $K=T/\delta=\poly(N,1/\epsilon)$,
\begin{align}
    \TVD\!\left(
        p,\,
        \mathcal L\!\left(R\!\left(\hat{\bm{m}}(\hat{\bm u}_K)\right)\right)
    \right)
    \leq \epsilon.
\end{align}
Consequently, if $\hat{\bm{m}}\left(\cdot\right)$ can be evaluated in polynomial time to inverse-polynomial accuracy, then stochastic localization yields a polynomial-time sampler whose output distribution is polynomially close to $p$ in total variation distance.
\end{theorem}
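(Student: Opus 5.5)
The proof splits the error into three pieces via the triangle inequality for $\TVD$:
\begin{align}
\TVD\big(p,\mathcal L(R(\hat{\bm m}(\hat{\bm u}_K)))\big)\le \TVD\big(p,\mathcal L(R(\bm m(\bm u_T)))\big)+\TVD\big(\mathcal L(R(\bm m(\bm u_T))),\mathcal L(R(\bm m(\hat{\bm u}_T)))\big)+\TVD\big(\mathcal L(R(\bm m(\hat{\bm u}_T))),\mathcal L(R(\hat{\bm m}(\hat{\bm u}_T)))\big).
\end{align}
Here $\bm u_T$ is the exact SL process of Eq.~\eqref{eq:stoch_loc_process}, and $\hat{\bm u}_T$ is the interpolated numerical process. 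The interpolated process agrees in law with $\hat{\bm u}_K$ at grid times.

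\textbf{First term (exact SL versus $p$).} Since $R$ is an exact randomized rounding, $R(\bm z)=\bm z$ for $\bm z$ on the hypercube. Using a coupling, $\Pr[R(\bm m)\neq \bm z]\le\sum_i|m_i-z_i|/2$, since coordinate $i$ flips with probability $|z_i-m_i|/2$. Take the optimal $\operatorname{W}_2$ coupling from Proposition~\ref{prop:stoch_loc_conv_rate} and apply Cauchy--Schwarz:
\begin{align}
\TVD\le \tfrac12\,\mathbb E\norm{\bm m-\bm z}_1\le\tfrac{\sqrt N}{2}\operatorname{W}_2\le \tfrac{N}{2\sqrt t}.
\end{align}
Choosing $T=N^2/\epsilon^2$ makes this at most $\epsilon/3$.

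\textbf{Third term (estimator error).} The same coupling argument, with both roundings driven by shared uniform variables, bounds this term by $\tfrac12\,\mathbb E\norm{\bm m(\hat{\bm u}_T)-\hat{\bm m}(\hat{\bm u}_T)}_1\le \tfrac12\sqrt N\cdot\sqrt N\rho=N\rho/2$. Taking $\rho=\epsilon/N$ makes it at most $\epsilon/3$.

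\textbf{Middle term (path laws).} By data processing, this term is at most $\TVD(P,Q)$ on $C([0,T],\mathbb R^N)$. Both processes are driven by the same Brownian motion with adapted drifts, $\bm m(\bm u_t)$ and $\hat{\bm m}(\hat{\bm u}_{\eta(t)})$ respectively. Both drifts are bounded by $\sqrt N$ in norm, so Novikov's condition holds and Girsanov's theorem applies. This gives
\begin{align}
\mathrm{KL}(Q\|P)=\tfrac12\,\mathbb E_Q\int_0^T\norm{\bm m(\hat{\bm u}_t)-\hat{\bm m}(\hat{\bm u}_{\eta(t)})}_2^2\,dt.
\end{align}
Pinsker then converts KL to TVD. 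This is the reason for freezing the drift but continuing the Brownian motion in the interpolation. I split the integrand as
\begin{align}
\norm{\bm m(\hat{\bm u}_t)-\bm m(\hat{\bm u}_{\eta(t)})}+\norm{\bm m(\hat{\bm u}_{\eta(t)})-\hat{\bm m}(\hat{\bm u}_{\eta(t)})}.
\end{align}
The second piece is at most $\sqrt N\rho$, contributing $O(TN\rho^2)$.

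The main obstacle is the first piece, the discretization error, because it cannot depend on the (unknown) regularity of $\hat{\bm m}$. However, $\bm m$ itself is smooth: its Jacobian is the covariance $\bm\varSigma_{\bm u}$, whose operator norm is at most $N$ (each entry is bounded by 1). So $\bm m$ is $N$-Lipschitz. Over a step, $\norm{\hat{\bm u}_t-\hat{\bm u}_{\eta(t)}}_2^2$ has expectation at most $2(N\delta^2+N\delta)$. The first piece therefore contributes $O(TN^3\delta)$ to the KL.

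The total is $\mathrm{KL}\lesssim TN^3\delta+TN\rho^2$. Choosing $\delta=\epsilon^2/(TN^3)$ and shrinking $\rho$ to $\epsilon/\sqrt{TN}$ (still polynomial) makes the Pinsker bound at most $\epsilon/3$. The continuity of $\bm m$ and the measurability of $\hat{\bm m}$ are used only to make the SDE and the Girsanov change of measure well defined. The concluding sentence of the theorem is immediate, since the algorithm makes $K=T/\delta$ calls to $\hat{\bm m}$ plus Gaussian sampling.
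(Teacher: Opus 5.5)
Your proof is correct, but it takes a genuinely different route from the paper's in two places.

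\textbf{Outer decomposition.} You triangulate directly in $\TVD$ through the intermediate law $\mathcal L(R(\bm m(\hat{\bm u}_T)))$. You handle the rounding with explicit couplings: $\Pr[R(\bm m)\neq\bm z]\le\frac12\norm{\bm m-\bm z}_1$, and shared uniform seeds for the estimator term. The paper instead triangulates in $\operatorname{W}_2$ through $\mathcal L(R(\hat{\bm m}(\bm u_T)))$. It then invokes the rounding lemma \cite[Lemma~VII.3]{9996629}, which loses a square root, and only at the end converts back via $\TVD\le\operatorname{W}_2^2$ on the hypercube, which loses a square. Your version is more self-contained and less lossy. It only needs $\rho\sim\epsilon^2/N^{3/2}$, versus the paper's $\rho\sim\epsilon^2/N^{5/2}$.

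\textbf{Girsanov step.} You compute $\mathrm{KL}(Q\|P)$, so the integrand is evaluated along the numerical path. There $\bm m(\hat{\bm u}_t)$ has no martingale structure, so you must use the Lipschitz bound $\norm{\bm\varSigma_{\bm u}}_{\mathrm{op}}\le N$ and pay $O(TN^3\delta)$. The paper computes $\mathrm{KL}(P\|Q)$ along the exact path, where $\bm m(\bm u_t)$ is a bounded martingale. This has two consequences:
\begin{itemize}
\item the cross term with the estimator error vanishes;
\item the discretization error telescopes, $\int_0^T\mathbb{E}_P\norm{\bm m(\bm u_s)-\bm m(\bm u_{\eta(s)})}_2^2\,ds\le\delta\,\mathbb{E}\norm{\bm m(\bm u_T)}_2^2\le N\delta$.
\end{itemize}
This bound uses no regularity of $\bm m$ and does not depend on $T$. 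It buys a much coarser step: $K\sim N^5/\epsilon^4$ in the paper, versus $K\sim N^7/\epsilon^6$ for you.

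\textbf{Combining the two.} Your middle term is just $\TVD(P,Q)$, which does not care about the KL direction. You can therefore plug in the paper's bound $\mathrm{KL}(P\|Q)\le\frac12(N\delta+NT\rho^2)$ and keep your direct-$\TVD$ outer argument. That gets the better dependence on both parameters: $K\sim N^3/\epsilon^4$ and $\rho\sim\epsilon^2/N^{3/2}$.

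\textbf{Constants.} Your stated choices $T=N^2/\epsilon^2$ and $\rho=\epsilon/N$ give $\epsilon/2$, not $\epsilon/3$, for the first and third terms. Adjust the constants accordingly.
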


\begin{proof}
In the following, we omit boldface notation for vectors to improve legibility. Nevertheless, do note that the mean magnetizations, tilts, and Brownian motion are all vector-valued quantities. We begin by showing convergence in Wasserstein distance and, at the end, show that we can choose values for $T,\delta,\rho$ such that we also have convergence in total variation distance.

By the triangle inequality, 
\begin{align}\label{eq:w2_triangle_ineq}
    W_2(p, \mathcal{L}(R(\hat{m}(\hat{u}_T))))
    &\leq \underbrace{W_2(p, \mathcal{L}(R(\hat{m}(u_T))))}_{\mathrm{(I)}}
    + \underbrace{W_2(\mathcal{L}(R(\hat{m}(u_T))), \mathcal{L}(R(\hat{m}(\hat{u}_T))))}_{\mathrm{(II)}}
\end{align}
where we have introduced a sample \textit{exact} SL path $(u_t)_{t=0}^{T} \sim P$. In words, $\mathcal{L}(R(\hat{m}(u_T)))$ denotes the law on $\{\pm 1\}^N$ induced by applying the rounded estimator $R(\hat{m}(\cdot))$ to the terminal tilt $u_T$ of a sample exact SL path. 

Let us first bound the second term $\mathrm{(II)}$, which measures how different this distribution is from the one obtained by applying the same map to the terminal tilt of a sample \textit{numerical} SL path $(\hat{u}_t)_{t\in[0,T]}\sim Q$ instead:
\begin{align*}
    W_2( \mathcal{L}(R(\hat{m}(u_T)), \mathcal{L}(R(\hat{m}(\hat{u}_T))).
\end{align*}
First, observe that the two laws being compared are supported on the bounded space $\{\pm 1\}^N$ in which any two points $z_1$ and $z_2$ are at most distance $\norm{z_1 - z_2}_2 = 2\sqrt{N}$ away. On  a bounded outcome space, the Wasserstein distance can be bounded in terms of total variation distance using the standard argument:
\begin{align*}
W_2(\mu, \nu)^2 &= \inf_{\pi \in \Pi (\mu,\nu)}\E_{(x,y)\sim \pi}{\norm{X-Y}_2^2}\\
&\leq  \E_{(x,y)\sim \pi}{\norm{X-Y}_2^2} \quad \text{using the standard TVD coupling $\pi$}\\
&= \PP(X \neq Y) \norm{X-Y}_2^2\ \\
&\leq 4N\TVD(\mu,\nu).
\end{align*}
We may now pass from $W_2$ to $\TVD$:
\begin{align*}
    W_2( \mathcal{L}(R(\hat{m}(u_T)), \mathcal{L}(R(\hat{m}(\hat{u}_T)))  \leq \sqrt{4N\TVD(\mathcal{L}(R(\hat{m}(u_T)), \mathcal{L}(R(\hat{m}(\hat{u}_T)))}.
\end{align*}
Next, we use the data-processing inequality to shift the $\TVD$ above to that between the laws of the paths themselves. Notice that the laws over $\{\pm 1\}^N$ are induced by a nested composition of functions which ultimately act on the paths $u\sim P$ and $\hat{u}\sim Q$. More explicitly,
\begin{align*}
\TVD(\mathcal{L}(R(\hat{m}(u_T)), \mathcal{L}(R(\hat{m}(\hat{u}_T)))=\TVD(\mathcal{L}(R(\hat{m}(\operatorname{eval}_T (u))), \mathcal{L}(R(\hat{m}(\operatorname{eval}_T(\hat{u})) ))),
\end{align*}
where $\operatorname{eval}_T(u)$ projects the path $u:\left[0,T\right]\to\mathbb{R}^N$ to its value at $T$. To invoke the data-processing inequality, we must ensure that each mapping in the composition on the right-hand side above is measurable.

We begin with the outermost function first, which is the randomized rounding step $R(\cdot)$. An equivalent formulation of this operation is the deterministic function $R_\text{det}$ defined on two variables:
\begin{align*}
    R_{\mathrm{det}} : [-1,1] \times [0,1] &\to \{\pm 1\}, \\
    (m,v) &\mapsto
    \begin{cases}
        +1, & \text{if } v < \dfrac{1+m}{2}, \\[6pt]
        -1, & \text{if } v \ge \dfrac{1+m}{2}.
    \end{cases}
\end{align*}
Then, taking $v \sim \lambda :=\operatorname{Unif}[0,1]$ as a uniformly random seed recovers the same randomized rounding as $R(\cdot)$. That is, $\mathcal{L}(R(u)) := \mathcal{L}(R_\mathrm{det}(u,v))$.
\begin{prop}
    $R_{\mathrm{det}} : [-1,1] \times [0,1] \to \{+1,-1\}$ is measurable.
\end{prop}
\begin{proof}
We prove the claim that $R_\text{det}$ is measurable by showing that the preimage of an event in $\{+1,-1\}$ is an event in $[-1,1] \times [0,1]$ \textit{i.e.} a Borel set. Consider the event $\{+1\}$. Then its preimage is
\begin{align*}
    R_\text{det}^{-1}(\{+1\}) &= \{(m,v)\in  [-1,1] \times [0,1] :2v-m <1\}\\
    &= g^{-1}((-\infty, 1))
\end{align*}
where we have denoted the constraint by $g(m,v) := 2v-m$. Since $g$ is explicitly continuous in both its arguments, $g^{-1}((-\infty, 1))$ is an open set and consequently a Borel set in $[-1,1] \times [0,1]$. As for the event $\{-1\}$, it is the case that
\begin{align*}
    R_\text{det}^{-1}(\{-1\}) &= (g^{-1}((-\infty, 1)))^c
\end{align*}
and the complement of a Borel set is also a Borel set. We have thus shown that $R_\text{det}$ is a measurable function. 
\end{proof}
It follows from the preceding claim and the data-processing inequality that (recalling $\lambda :=\operatorname{Unif}[0,1]$ is the law of $v$):
\begin{align*}
    \TVD(\mathcal{L}(R_\mathrm{det}(\hat{m}(\operatorname{eval}_T(\hat{u})),v)), \mathcal{L}(R_\mathrm{det}(\hat{m}(\operatorname{eval}_T(u)),v))) &\leq \TVD(\mathcal{L}(\hat{m}(\operatorname{eval}_T(\hat{u}))) \otimes \lambda )), \mathcal{L}(\hat{m}(\operatorname{eval}_T(u))) \otimes \lambda ))) \\
    &\leq \TVD(\mathcal{L}(\hat{m}(\operatorname{eval}_T(\hat{u}))) , \mathcal{L}(\hat{m}(\operatorname{eval}_T(u))))\\
    &\leq\TVD(\mathcal{L}(\operatorname{eval}_T(\hat{u})) , \mathcal{L}(\operatorname{eval}_T(u))),
\end{align*}
with the final line following by the assumed measurability of $\hat{m}$.

Now, we note that $\operatorname{eval}_T$ is continuous, and therefore measurable, when we equip the space of continuous paths $C([0,T], \R^N)$ with metric given by the supremum distance: 
\begin{equation}
    d_\infty (u_1, u_2) := \sup_{t\in [0,T]}\norm{u_1(t) - u_2(t)}_2.
\end{equation}
\begin{prop}
    $\operatorname{eval}_T : C([0,T], \R^N) \to \R^N $ is measurable.
\end{prop}
\begin{proof}
    Let $u_1, u_2 \in C([0,T], \R^N)$, then
    \begin{align*}
         \norm{\operatorname{eval}_T(u_1) - \operatorname{eval}_T(u_2)}_2 &= \norm{u_1(T) - u_2(T)}_2 \\
         &\leq d_\infty(u_1, u_2)
    \end{align*}
    Thus, $\operatorname{eval}_T$ is $1$-Lipschitz $\implies$ continuous $\implies$ measurable. 
\end{proof}
The data-processing inequality therefore finally gives:
\begin{align*}
    \TVD( \mathcal{L}(R(\hat{m}(u_T)), \mathcal{L}(R(\hat{m}(\hat{u}_T)))  \leq \TVD(\mathcal{L}(u),  \mathcal{L}(\hat{u})) = \TVD(P, Q).
\end{align*} 
Now, we use Pinsker's inequality to pass from $\TVD$ to the Kullback–Leibler (KL) divergence:
\begin{align*}
    \TVD(P,Q) &\leq \sqrt{\frac{1}{2}\KL(P\|Q)} = \sqrt{\frac{1}{2}\E_P\left[\ln\frac{dP}{dQ}\right]} = \sqrt{-\frac{1}{2}\E_P\left[\ln\frac{dQ}{dP}\right]},
\end{align*}
where the final equality holds when $\ln\frac{dQ}{dP}$ is well-defined~\cite[Theorem~5]{bermudez2025proofsfolkloretheoremsradonnikodym}, which we will later see to be the case.

Let us focus on the Radon-Nikodym derivative $\frac{dQ}{dP}$. Operationally, this represents how the law $P$ must be reweighted in order to obtain the law $Q$. In other words, for any event $A \in \mathcal{F}$
\begin{equation}
    Q(A)  = \int_A\frac{dQ}{dP}(\omega) P(d\omega)
\end{equation}
To evaluate this derivative, it helps to look at the laws $P$ and $Q$ in a different way. Let $u \in C([0,T], \R^N)$. By definition, 
\begin{align*}
    W_t &= u_t - \int_0^tm(u_t)ds && \text{and under $P$, $W_t$ is Brownian};\\
    \tilde{W}_t &= u_t - \int_0^t\hat{m}(u_{\eta(s)})ds && \text{and under $Q$, $\tilde{W}_t$ is Brownian}.
\end{align*}
Then, $W_t$ and $\tilde{W}_t$ are related to each other by
\begin{equation}
    \tilde{W}_t(u) = W_t(u) + \int_0^t\left(\underbrace{m(u_s) -m(u_{\eta(s)})}_{:=\Delta_s(u)}\right)ds
\end{equation}
We have reformulated the question of evaluating $\frac{dQ}{dP}$ to the following: what change of measure takes a law $P$ under which $\tilde{W}_t(u)$ is Brownian motion plus a drift term $\Delta_s(u)$, to the law $Q$, under which $\tilde{W}_t(u)$ becomes pure Brownian motion? This change of measure is exactly what is supplied by Girsanov's Theorem.
\begin{theorem}[Girsanov's Theorem with Novikov's Condition~{\cite[Corollaries~1.1 and~1.2 with $W_t\to -W_t$]{Kazamaki1994}}]\label{thm:girsanovs}
Let $W_t$ denote Brownian motion under the law $P$, and let $\theta_s$ be a progressively measurable process such that for every $t>0$:
\begin{equation}
    \int_0^t\left\lVert\theta_s\right\rVert_2^2\dd{s}<\infty
\end{equation}
almost surely. Assume further that Novikov's Condition is satisfied, i.e., that:
\begin{equation}
    \E\exp\left(\frac{1}{2}\int_0^t\left\lVert\theta_s\right\rVert_2^2\dd{s}\right)<\infty.
\end{equation}
Then the process:
\begin{equation}
    \tilde{W}_t(u) = \int_0^t\theta(s)ds+W_t(u)
\end{equation}
is Brownian motion under the law:
\begin{equation}
    dQ=Z_t dP,
\end{equation}
where:
\begin{equation}
    Z_t := \exp\left(-\int_0^t \theta_s \cdot dW_s - \frac{1}{2}\int_0^t\norm{\theta_s}_2^2 ds\right ).
\end{equation}
\end{theorem}
Assuming the hypotheses hold for our case, an application of Girsanov's Theorem with $\theta_s = \Delta_s$ tells us that $Z_T = \frac{dQ}{dP}$ is precisely the Radon-Nikodym derivative we are interested in.

We now show that the conditions of Theorem~\ref{thm:girsanovs} are satisfied with $\theta_s=\Delta_s$.
\begin{prop}
    $\Delta_s= m(u(s)) - \hat{m}(u(\eta(s))$ satisfies the conditions of Theorem~\ref{thm:girsanovs}.
\end{prop}
\begin{proof}
    Square-integrability and Novikov's Condition are trivially satisfied as $m(u(s)),\hat{m}(u(\eta(s))\in\left[-1,1\right]^N$. This leaves demonstrating $\Delta_s$ is a progressively measurable process, which is implied by $\Delta_s$ being a right-continuous adapted process. Adaptivity follows by the definition of the stochastic localization process and the fact that $\eta(s)\leq s$. Right-continuity follows by the continuity of $m$ and $u$, as well as $\eta$ (and thus $\hat{m}(u(\eta(s)))$) being piecewise constant on intervals $\left[t_k,t_{k+1}\right)$.
\end{proof}

Having satisfied the conditions of Girsanov's Theorem, we may now identify $Z_t$ with the likelihood ratio and obtain the KL divergence as
\begin{align*}
    \KL(P || Q) &= \E_P\left[\int_0^t \Delta_s \cdot dW_s +\frac{1}{2}\int_0^t\norm{\Delta_s}_2^2 ds\right]\\
    &= \E_P\left[\int_0^t \Delta_s \cdot dW_s \right] + \frac12 \E_P\left[\frac{1}{2}\int_0^t\norm{\Delta_s}_2^2 ds\right].
\end{align*}
Recall that under $P$, $W_s$ is a Brownian process; by the martingale property of Itô integrals, it thus vanishes in expectation under $P$, giving:
\begin{align*}
    \KL(P || Q) &=\frac{1}{2}\int_0^t\E_P\left[\norm{\Delta_s}_2^2 \right]ds.
\end{align*}
Next, we use the triangle inequality to expand the integrand as
\begin{equation}\label{eq:exp_integrand}
\begin{aligned}
    \E_P\left[\norm{m(u_s)-\hat{m}(u_{\eta(s)})}_2^2\right]
&=
\E_P\left[\norm{m(u_s)-m(u_{\eta(s)})}_2^2\right]
+
\E_P\left[\norm{m(u_{\eta(s)})-\hat{m}(u_{\eta(s)})}_2^2\right]\\
&\quad+
2\E_P\left[\langle m(u_{\eta(s)})-\hat{m}(u_{\eta(s)}),m(u_s)-m(u_{\eta(s)})\rangle\right].
\end{aligned}
\end{equation}
We now bound each term. We begin with the final term. First, recall from Proposition~\ref{prop:stoch_loc} that $m\left(u_t\right)$ is a martingale as it corresponds to the exact stochastic localization process. In particular, for any $k$ and $s\in\left[t_k,t_{k+1}\right)$, as $\eta(s)$ is constant on this interval,
\begin{equation}
    \dv{s}\E_P\left[\langle m(u_{\eta(s)})-\hat{m}(u_{\eta(s)}),m(u_s)-m(u_{\eta(s)})\rangle\right]=0.
\end{equation}
As $s=\eta(s)$ when $s=t_k$,
\begin{equation}
    \E_P\left[\langle m(u_{\eta(s)})-\hat{m}(u_{\eta(s)}),m(u_s)-m(u_{\eta(s)})\rangle\right]=0.
\end{equation}
We now move on to the second term on the right-hand side of Eq.~\eqref{eq:exp_integrand}, which is also simple as by assumption:
\begin{equation}
    \norm{m(u)-\hat{m}(u)}^2_2 \leq N\rho^2.
\end{equation}
This leaves only the first term on the right-hand side of Eq.~\eqref{eq:exp_integrand}. First, we have:
\begin{align*}
    0\leq \E_P\norm{m(u_s)-m(u_{\eta(s)})}_2^2
    &= \E_P\norm{m(u_s)}_2^2+\E_P\norm{m(u_{\eta(s)})}_2^2-2\E_P\langle m(u_s),m(u_{\eta(s)})\rangle\\
    &= \E_P\norm{m(u_s)}_2^2-\E_P\norm{m(u_{\eta(s)})}_2^2,
\end{align*}
where in the final line we used the martingale property once more to simplify
\begin{equation}
    \begin{aligned}
        \E_P\langle m(u_s),m(u_{\eta(s)})\rangle &= \E_P\langle m(u_{\eta(s)}),m(u_{\eta(s)})\rangle\\
        &= \E_P \norm{m(u_{\eta(s)})}_2^2.
    \end{aligned}
\end{equation}
It is easy to see by this same property that $\E_P\norm{m(u_s)}_2^2$ is a non-decreasing function of $s$, i.e., since $[\eta(s) , s] \subseteq [t_k, t_{k+1}]$ for some $k$,
\begin{align*}
    \E_P\norm{m(u_s)-m(u_{\eta(s)})}_2^2
    \leq \E_P\norm{m(u_{t_{k+1}})}_2^2-\E_P\norm{m(u_{t_{k}})}_2^2.
\end{align*}
We can then write the telescoping sum:
\begin{align*}
    \int_0^t\E_P\norm{m(u_s)-m(u_{\eta(s)})}_2^2
    &\leq \delta\sum_{k=0}^{N-1}\left(\E_P\norm{m(u_{t_{k+1}})}_2^2-\E_P\norm{m(u_{t_k})}_2^2\right)\\
    &\leq \delta\left(\E_P\norm{m(u_T)}_2^2-\E_P\norm{m(u_0)}_2^2\right)\\
    &\leq \delta N.
\end{align*}
where in the last line we have used the boundedness of $m(u_t) \in [-1,+1]^N$ implying $\norm{m(u_T)}_2^2 \leq N$.

Putting everything together, it follows that: 
\begin{equation}
    \KL(P || Q) \leq \frac{1}{2}(N\delta  + N T\rho^2),
\end{equation}
and thus returning to Eq.~\eqref{eq:w2_triangle_ineq}:
\begin{equation}
    \mathrm{(I)} \leq \sqrt{2N}(N\delta  + N T\rho^2)^{1/4}.
\end{equation}

We now bound the other term in Eq.~\eqref{eq:w2_triangle_ineq} $\mathrm{(II)}$. By applying the triangle inequality again,
\begin{align*}
   \mathrm{(II)} &\leq  W_2(p, \mathcal{L}(R(m(u_T)))) +  W_2(\mathcal{L}(R(m(u_T))), \mathcal{L}(R(\hat{m}(u_T))))  
\end{align*}
Since $p$ is already supported on $\{\pm 1\}^N$, randomized rounding leaves it invariant, namely that $X\sim p$ implies $\mathcal{L}(R(X))=p$. We now apply the inequality \cite[Lemma~VII.3]{9996629}:
\begin{equation}
    \frac{1}{\sqrt{N}}\operatorname{W}_2\left(r_1,r_2\right)\leq 2\sqrt{\frac{1}{\sqrt{N}}\operatorname{W}_2\left(p_1,p_2\right)},
\end{equation}
where $r_1$ and $r_2$ are the laws obtained from applying the randomized rounding function $R$ to samples $X\sim p_1$ and $\sim p_2$ respectively, to obtain
\begin{align*}
    \mathrm{(II)}&\leq  2\sqrt{\sqrt{N}\,W_2(p, \mathcal{L}(m(u_T)))} +  2\sqrt{\sqrt{N}\,W_2(\mathcal{L}(m(u_T)), \mathcal{L}(\hat{m}(u_T)))} \\
    & \leq 2\sqrt{N}\Big( \frac{1}{T^{1/4}}+ 2\sqrt{\rho}\Big) \\
\end{align*}
The first bound arises from the convergence guarantee of exact SL (Proposition~\ref{prop:stoch_loc_conv_rate}), and the second from the definition of $W_2$. 

Returning to Eq.~\eqref{eq:w2_triangle_ineq}, we get:
\begin{align*}
    \frac{1}{\sqrt{N}}W_2(p, \mathcal{L}(R(\hat{m}(\hat{u}_T))) \leq  \frac{2}{T^{1/4}}+ 2\sqrt{\rho} + \sqrt{2}(N\delta  + N T\rho^2)^{1/4}.
\end{align*}
Furthermore, as both $p$ and $\mathcal{L}(R(\hat{m}(\hat{u}_T))$ are distributions with support on the hypercube,
\begin{equation}
    \TVD\left(p,\mathcal{L}(R(\hat{m}(\hat{u}_T))\right)\leq W_2(p, \mathcal{L}(R(\hat{m}(\hat{u}_T)))^2\leq N\left(\frac{2}{T^{1/4}}+ 2\sqrt{\rho} + \sqrt{2}(N\delta  + N T\rho^2)^{1/4}\right)^2.
\end{equation}
For the following parameterizations,
\begin{equation}
\boxed{
T = \frac{2^8 N^2}{\epsilon^2},
\qquad
\rho = \frac{\epsilon^2}{2^{10}N^{5/2}},
\qquad
K = \left\lceil \frac{2^{20}N^5}{\epsilon^4} \right\rceil,
\qquad
\delta = \frac{T}{K} \leq \frac{\epsilon^2}{2^{12}N^3}.
}
\end{equation}
this gives:
\begin{align*}
    \TVD\left(p,\mathcal{L}(R(\hat{m}(\hat{u}_T))\right)\leq\left(\frac{\sqrt{\epsilon}}{2}+ \frac{\sqrt{\epsilon}}{16} + \frac{\sqrt{\epsilon}}{2^{9/4}}\right)^2<\epsilon.
\end{align*}
\end{proof}

\section{Barvinok's method and estimating tilted means for the quantum $p$-spin model}\label{sec:quasi_poly_est_alg}

As an application of Theorem~\ref{thm:num_guar_sl}, we consider using it as a means to Gibbs sample from the quantum $p$-spin model~\cite{swingle_bosonic_2023}. We recall the quantum $p$-spin model is an ensemble of local Hamiltonians defined in the following way.
\begin{defn}[Quantum $p$-spin model]
   \begin{equation}
    H = \sum_{I \in \binom{[N]}{q}\times {\{X,Y,Z\}^q}} J_I \hat{\sigma}_I
\end{equation}
 with disorder $J_I \sim_{\text{iid}} \mathcal{N}(0, \frac{(q-1)!}{3^qN^{q-1}}J^2)$ and constant $q \geq3$. The index $I$ is defined so that, for example, if $I = \bigl(\{1,3,5\}, \{X,Y,X\}\bigr).$ then $\hat{\sigma}_I:= \hat\sigma^{1}_{x}\hat\sigma^{3}_{y}\hat\sigma^{5}_{x}$. Denote by $\supp(\hat\sigma_I)$ the support of the corresponding Pauli string. In the earlier example, this corresponds to qubits $\{1,3,5\}$. 
    \label{dfn:qpspin_cs}
\end{defn}
The model in \cref{dfn:qpspin_cs} is normalized to ensure that each spin is subject to an effective field strength that is $\mathcal{O}(1)$ in the total number of spins $N$. This is because each spin occurs in $O\left(\binom{N-1}{q-1}\right) \approx O\left(\frac{N^{q-1}}{(q-1)!}\right) $ terms.

We now recall our task at hand, which is \emph{Gibbs sampling}.
\begin{defn}[Gibbs sampling the quantum $p$-spin model]
    Given $\bm{J}$ and inverse temperature $\beta$, the task of \emph{Gibbs sampling to error $\epsilon$} is to produce samples from a distribution $q$ approximating:
    \begin{equation}
        p\left(\bm{z}\right)=\frac{\bra{\bm{z}}e^{-\beta\bm{H}}\ket{\bm{z}}}{\Tr\left(e^{-\beta\bm{H}}\right)}
    \end{equation}
    to error $\epsilon$ total variation distance. Here, $\ket{\bm{z}}$ denotes a computational basis state, though by the rotational invariance of the quantum $p$-spin model our results hold for any choice of product state basis.
\end{defn}

In order to demonstrate an efficient sampling algorithm for the quantum $p$-spin model, by Theorem~\ref{thm:num_guar_sl} it suffices to consider estimating the mean of the tilted distribution:
\begin{equation}
    p_{\bm{y}}\left(\bm{z}\right)=\frac{\exp\left(\bm{y}^\intercal\cdot\bm{z}\right)\bra{\bm{z}}e^{-\beta\bm{H}}\ket{\bm{z}}}{\sum\limits_{\bm{z}\in\left\{-1,1\right\}^N}\exp\left(\bm{y}^\intercal\cdot\bm{z}\right)\Tr\left(e^{-\beta\bm{H}}\right)}.
\end{equation}
This, in turn, can be viewed as a quantum many-body system described by the effective \emph{tilted free energy}:
\begin{equation}
    F_{\bm{y}}:=\ln Z_{\bm{y}}:=\ln \Tr\left(\exp\left(\sum_{i=1}^N y_i\sigma_i^Z\right)\exp\left(-\beta H\right)\right),
\end{equation}
in the sense that local magnetizations of $F_{\bm{y}}$ correspond to the mean of the tilted distribution $p_{\bm{y}}$. This can be made more rigorous via the following Proposition.
\begin{prop}[Error bounding]\label{claim:mag_from_finite_difference}
    Estimating the tilted free energy $\ln Z_{\bm{y}}\;\forall \bm{y} \in \R^N $ to additive error $\epsilon$ is sufficient for estimating the tilted magnetization $m_i\left(\bm{y}\right) \;\forall i\in [N]$ to additive error $\epsilon' = O(\epsilon)$.
\end{prop}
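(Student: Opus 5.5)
The magnetization is a derivative of the tilted free energy, so the plan is to approximate that derivative by a finite difference of free-energy values and to control the resulting error using convexity in $\bm{y}$. First I would check that $m_i(\bm{y})=\partial_{y_i}\ln Z_{\bm{y}}$. Writing $p(\bm{z})\propto\bra{\bm{z}}e^{-\beta H}\ket{\bm{z}}$ and noting that $\exp(\sum_i y_i\sigma_i^Z)$ is diagonal in the computational basis, we have
\[
Z_{\bm{y}}=\sum_{\bm{z}}e^{\bm{y}^\intercal\bm{z}}\bra{\bm{z}}e^{-\beta H}\ket{\bm{z}}=\Tr\left(e^{-\beta H}\right)Z(\bm{y}),
\]
with $Z(\bm{y})$ as in Proposition~\ref{prop:stoch_loc}. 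Hence $\ln Z_{\bm{y}}$ differs from the classical log-partition function by a $\bm{y}$-independent constant. Then $\partial_{y_i}\ln Z_{\bm{y}}=m_i(\bm{y})$ and $\partial_{y_i}^2\ln Z_{\bm{y}}=1-m_i(\bm{y})^2\in[0,1]$.

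Next, fix $i$ and consider the one-variable function $f(s)=\ln Z_{\bm{y}+s\bm{e}_i}$. It is convex, since it is a cumulant generating function, and it satisfies $0\le f''(s)\le 1$. The estimator I would use is a symmetric finite difference with step $h$:
\[
\hat m_i=\frac{\hat F(\bm{y}+h\bm{e}_i)-\hat F(\bm{y}-h\bm{e}_i)}{2h},
\]
where $\hat F$ is the given $\epsilon$-accurate estimate of the free energy.

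The error splits into two terms. The truncation error is $\bigl|\frac{f(h)-f(-h)}{2h}-f'(0)\bigr|\le h/2$, by Taylor's theorem with $|f''|\le 1$. The estimation error is at most $2\epsilon/(2h)=\epsilon/h$. Choosing $h=\sqrt{2\epsilon}$ balances the two and gives a total error of $O(\sqrt{\epsilon})$.

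The main obstacle is that the statement claims $O(\epsilon)$, not $O(\sqrt{\epsilon})$. With only sup-norm free-energy error and a curvature bound of order $1$, the finite difference intrinsically loses a square root. I would try two routes to recover the stated rate.

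The first route uses higher-order smoothness. All derivatives of $f$ are bounded by constants: they are cumulants of a $\pm1$ variable, e.g.\ $|f'''|\le 2$. A higher-order centered stencil then gives error $O(h^k+\epsilon/h)$. This improves the rate to $\epsilon^{k/(k+1)}$ but still never reaches linear.

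The second route interprets "additive error $\epsilon$" through the Barvinok setting of Proposition~\ref{prop:barvinoks}, where the truncated Taylor expansion estimates $\ln Z$ as an analytic function on a complex neighborhood. There, the error bound holds uniformly on a complex disk of constant radius $r$ around each real $y_i$. Cauchy's integral formula then gives $|\hat m_i-m_i|\le\epsilon/r=O(\epsilon)$, using the analytic extension of the estimate itself as $\hat m_i$.

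I would present the real-variable finite-difference argument as the general mechanism and invoke the analytic (Cauchy-estimate) version to obtain the linear $O(\epsilon)$ dependence. If only real-valued estimates are available, I would note that the rate degrades to $O(\sqrt{\epsilon})$. Either rate suffices downstream, since Theorem~\ref{thm:num_guar_sl} only needs $\rho^{-1}=\poly(N,1/\epsilon)$.
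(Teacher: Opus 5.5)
\textbf{Gap: the real-variable argument stops at $O(\sqrt{\epsilon})$, and the claim that the square-root loss is intrinsic is false here.} Your symmetric finite difference gives only $O(\sqrt{\epsilon})$ because you treat $f(s)=\ln Z_{\bm{y}+s\bm{e}_i}$ as a generic convex function with bounded curvature. It is much more rigid than that: $f$ is determined exactly by $f(0)$ and $m_i(\bm{y})$. Since $\sigma_i^Z$ commutes with the tilt and $(\sigma_i^Z)^2=\mathbf{I}$ (equivalently $z_i^2=1$), we have $e^{s\sigma_i^Z}=\cosh s\,\mathbf{I}+\sinh s\,\sigma_i^Z$. Hence
\[
\frac{Z_{\bm{y}+s\bm{e}_i}}{Z_{\bm{y}}}=\cosh s+m_i(\bm{y})\sinh s\qquad\text{for every } s .
\]
You already used half of this fact when you wrote $\partial_{y_i}^2\ln Z_{\bm{y}}=1-m_i^2$.

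This identity is the paper's whole proof. Fix a constant step, say $t=1$, set $L=\ln Z_{\bm{y}+t\bm{e}_i}-\ln Z_{\bm{y}}$, and invert exactly: $m_i=(e^{L}-\cosh t)/\sinh t$. There is no truncation error, so the step never has to shrink with $\epsilon$. With $|\hat L-L|\le\epsilon$ one gets
\[
|\hat m_i-m_i|\le\frac{e^{L}}{|\sinh t|}\left(e^{\epsilon}-1\right)\le\frac{e^{|t|}}{|\sinh t|}\left(e^{\epsilon}-1\right)=O(\epsilon),
\]
using $e^{L}=\cosh t+m_i\sinh t\le e^{|t|}$. Your own centered stencil also works once you invert exactly instead of Taylor-expanding. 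Since $f(h)-f(-h)=2\,\mathrm{artanh}(m_i\tanh h)$, you have $m_i=\tanh\bigl(\tfrac{f(h)-f(-h)}{2}\bigr)/\tanh h$. At $h=1$, because $\tanh$ is $1$-Lipschitz, this gives error at most $\epsilon/\tanh 1$.

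\textbf{Your Cauchy-estimate route does not close the gap, because it changes the hypothesis.} The statement only grants $\epsilon$-accurate estimates of $\ln Z_{\bm{y}}$ for real $\bm{y}$. A Cauchy estimate needs an analytic estimator that is $\epsilon$-accurate on a complex disk of constant radius in $y_i$. Nothing in the paper supplies this. Barvinok's method (Proposition~\ref{prop:barvinoks}) interpolates in $\beta$, and its guarantee rests on zero-freeness of $Z_{\bm{y}}(\beta)$ for real tilts. Extending it to complex tilts would need a separate, Lee--Yang-type zero-freeness statement in $\bm{y}$, which is a genuine restriction: already at $\beta=0$, $Z_{\bm{y}}=\prod_j 2\cosh y_j$ vanishes at $y_j=i\pi/2$.

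You are right that $O(\sqrt{\epsilon})$ would suffice for Theorem~\ref{thm:num_guar_sl}. As a proof of the stated $O(\epsilon)$ rate, however, the missing ingredient is the exact $\cosh/\sinh$ identity above.
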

\begin{proof}
 Using shorthand notation $Z_{\bm{y}}:= \Tr[e^{\sum_{i=1}^{N}  y_i \sigma_z^i}e^{-\beta H}]$, consider perturbing the $i$-th spin field by some constant $t>0$:
\begin{align}
   Z_{\bm{y}+t\bm{e}_i} &=  \Tr[e^{t\sigma_z^i + \sum_{i=1}^{N}  y_i \sigma_z^i}e^{-\beta H}] \notag \\
   &=\Tr[e^{t\sigma_z^i}e^{\sum_{i=1}^{N}  y_i \sigma_z^i}e^{-\beta H}]\notag\\
   &=\Tr[e^{\sum_{i=1}^{N}  y_i \sigma_z^i}e^{-\beta H}] \cosh t +  \Tr[\sigma^i_ze^{\sum_{i=1}^{N}  y_i \sigma_z^i}e^{-\beta H}] \sinh t \notag \\
   &=Z_{\bm{y}}\cosh t  +Z_{\bm{y}} m_i \sinh t \notag \\
   \implies m_i(\bm{y}) &=\frac{e^{\ln Z_{\bm{y}+t\bm{e}_i} - \ln Z_{\bm{y}}}- \cosh t}{\sinh t} \label{eq:magnetization-from-free-energy}
\end{align}
We now prove the estimation claim. Suppose estimates $\widehat{\ln Z_{\bm{y}}}$ and $\widehat{\ln Z_{\bm{y}+t\bm{e}_i}}$ are given, each with additive error at most $\epsilon/2$. For clarity, denote the finite difference term by $L_{\bm{y}}:=\ln Z_{\bm{y}+t\bm{e}_i}-\ln Z_{\bm{y}}$ so that $|\hat{L}_{\bm{y}}-L_{\bm{y}}|\leq \epsilon$. Then,
\begin{align*}
   |\hat{m}_i(\bm{y})-m_i(\bm{y})|
   &=\frac{1}{|\sinh t|}\left|e^{\hat{L}_{\bm{y}}}-e^{L_{\bm{y}}}\right| \\
   &=\frac{e^{L_{\bm{y}}}}{|\sinh t|}\left|e^{\hat{L}_{\bm{y}}-L_{\bm{y}}}-1\right| \\
   &\leq \frac{e^{L_{\bm{y}}}}{|\sinh t|}\left |e^\epsilon-1\right |\\
   &=\frac{\cosh t+m_i(\bm{y})\sinh t}{|\sinh t|}\left |e^\epsilon-1\right| && \text{by Eq.~\eqref{eq:magnetization-from-free-energy}}\\
   &\leq \frac{e^{|t|}}{|\sinh t|} \left| e^\epsilon-1 \right| && \text{since $m_i(\bm{y}) \in [-1,1]$} \\ 
   &\leq O(\epsilon) &&  \text{using $\left| e^\epsilon-1 \right| \leq \epsilon e$ whenever $|\epsilon|< 1$}
\end{align*}
\end{proof}
It is known that if the partition function $Z(\beta)\neq 0$ on a disk $|\beta|\leq C$ in the complex $\beta$ plane, then there exists an $N^{O\left(\log\left(N/\epsilon\right)\right)}$-time classical algorithm for estimating the free energy to additive error $\epsilon$ \cite{v011a013,zlokapa2026sykthermalexpectationsclassically,zlokapa2026rigorousquasipolynomialtimeclassicalalgorithm}. It also works for other zero-free geometries, including the strip as described in the introduction; we refer to \cite{zlokapa2026sykthermalexpectationsclassically} for a detailed review, but we reproduce the main result here.
\begin{prop}[Barvinok's interpolation method for computing free energies~{\cite[Appendix~A]{zlokapa2026sykthermalexpectationsclassically}}]\label{prop:barvinoks}
    Let $Z\left(\beta\right)=\exp\left(-\beta H\right)$ be the partition function of a system with Hamiltonian $H=\sum_I J_I P_I$ for Pauli operators $\left\{P_I\right\}$. Assume $Z\left(\beta\right)$ is zero-free on either a disk of radius $\beta^\ast$ or a complex strip $-\beta^\ast\leq\Re\left\{\beta\right\}\leq\beta^\ast,-\delta\leq\Im\left\{\beta\right\}\leq\delta$ for some $N$-independent $\beta^\ast>0$ and $\delta>0$. Then, for any $\epsilon>0$ there exists a classical algorithm for estimating $\ln Z\left(\beta\right)$ to additive error $\epsilon$ in time $N^{\operatorname{O}\left(\log\left(N/\epsilon\right)\right)}$.
\end{prop}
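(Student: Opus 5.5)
This is Barvinok's interpolation method, and I would prove it in the standard three steps.

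\textbf{Reduction to a disk.} Define $f(\beta):=\ln Z(\beta)$, with $Z(\beta)=\Tr\exp(-\beta H)$. By hypothesis $Z$ has no zeros on the region $D$, which is either the disk $|\beta|\le\beta^\ast$ or the strip. Since $D$ is simply connected, $f$ has an analytic branch on $D$ with $f(0)=N\ln 2$.

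In the disk case I would rescale and write $f(\beta)=\sum_k c_k\beta^k$, expanding around $\beta=0$. In the strip case I would first compose with an explicit conformal map $\phi$ of the unit disk into the strip with $\phi(0)=0$. Take $\phi$ to be a polynomial of degree $O(1)$ whose image of a slightly smaller disk $|w|\le\gamma<1$ lies inside the strip, and which maps some interior point $w_0$ with $|w_0|\le\gamma'<\gamma$ to the target real $\beta\le\beta^\ast$. Polynomials of this kind are given in Barvinok's book. The composition $g=f\circ\phi$ is then analytic on $|w|\le\gamma$, so it suffices to approximate $g(w_0)$ by a truncated Taylor series.

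\textbf{Truncation error.} The key bound is the standard log-of-polynomial estimate. We have $|g(w)-g(0)|\le O(N)$ on $|w|\le\gamma$, because $|Z|\le e^{|\beta|\,\|H\|}2^N$ and $\|H\|=O(N)$ w.h.p. The Borel–Carathéodory inequality (equivalently, Cauchy estimates applied to $\Re g$) then gives $|c_k|\le O(N)\gamma^{-k}$. The tail beyond degree $m$ at $|w_0|\le\gamma'$ is therefore $O(N)(\gamma'/\gamma)^m/(1-\gamma'/\gamma)$. Choosing $m=O(\log(N/\epsilon))$ makes the truncation error at most $\epsilon$.

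\textbf{Computing the coefficients.} I would obtain the Taylor coefficients of $g$ up to order $m$ from the moments $\mu_j:=\Tr(H^j)/2^N$ for $j\le m$, combined with the polynomial $\phi$. Writing $Z(\beta)/2^N=\sum_j(-\beta)^j\mu_j/j!$ gives the power series of $Z\circ\phi$, and Newton's identities convert it into the logarithm's coefficients in $\operatorname{poly}(m)$ arithmetic operations. Each moment is computed exactly by expanding $H^j=\sum_{I_1,\dots,I_j}J_{I_1}\cdots J_{I_j}P_{I_1}\cdots P_{I_j}$ and keeping the products proportional to the identity; the trace of each product is a phase that can be read off. There are at most $(\#\text{terms})^j=N^{O(j)}$ tuples, so this step costs $N^{O(m)}=N^{O(\log(N/\epsilon))}$, which dominates the runtime. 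Numerical precision needs only polynomially many bits, since $|J_I|$ is polynomially bounded w.h.p.

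\textbf{Main obstacle.} The delicate step is the strip case. One must construct the polynomial map $\phi$ with constants depending only on $\beta^\ast$ and $\delta$, so that $\gamma'/\gamma$ is an $N$-independent constant less than $1$; this is exactly what keeps $m=O(\log(N/\epsilon))$. I would borrow that construction directly from Barvinok and from Appendix~A of the SYK paper, and check only that the $O(N)$ bound on $|\ln Z|$ holds over the image $\phi(\{|w|\le\gamma\})$. For the tilted version, the same argument applies with $Z_{\bm y}$ in place of $Z$. In that case $\mu_j$ becomes $\Tr(e^{\sum_i y_i\sigma^Z_i}H^j)/\Tr(e^{\sum_i y_i\sigma^Z_i})$, which is computable as a product-state expectation of Pauli strings, so the runtime is unchanged.
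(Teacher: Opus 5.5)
Your proposal is correct. It is the standard Barvinok interpolation argument: take an analytic branch of $\ln Z$ on the zero-free region, reduce the strip to a disk with Barvinok's constant-degree polynomial map, use Borel--Carath\'eodory coefficient bounds $|c_k|\le O(N)\gamma^{-k}$ to truncate at $m=O(\log(N/\epsilon))$, and compute $\operatorname{Tr}(H^j)$ for $j\le m$ by brute force in time $N^{O(m)}$. This is the argument the paper defers to, since it gives no proof of its own and only cites Appendix~A of the SYK paper. One small correction: $|Z|\le 2^N e^{|\beta|\|H\|}$ only bounds $\Re\bigl(g(w)-g(0)\bigr)$ from above, not $|g(w)-g(0)|$. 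That one-sided bound is exactly what Borel--Carath\'eodory requires, so your coefficient estimate stands.
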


While unfortunately we are unable to rigorously prove the zero-freeness of $Z_{\bm{y}}\left(\beta\right)$ down to the spin glass phase transition, in this section we give a mixture of rigorous and physics-based evidence that this is true. Namely:
\begin{enumerate}
    \item In Proposition~\ref{prop:tilt_exp_zero_free} and Theorem~\ref{thm:zero_freeness}, we show that $\E Z_{\bm{y}}\left(\beta\right)$ is zero-free in a constant-radius disk around the origin.
    \item In Theorem~\ref{thm:zero_freeness}, we show that $Z_{\bm{y}=\bm{0}}\left(\beta\right)$ concentrates, and therefore is zero-free in a constant-radius disk around the origin with high probability.
    \item In Section~\ref{sec:conj_conc} we use physics arguments to claim that $Z_{\bm{y}}\left(\beta\right)$ concentrates when $Z_{\bm{y}=\bm{0}}\left(\beta\right)$ concentrates; if this were true, then our first statement would also upgrade to a w.h.p. statement.
    \item Assuming Fisher zeroes exactly correspond to phase transitions~\cite{10.1145/3357713.3384322}, $Z_{\bm{y}}$ is indeed zero-free down to the spin glass phase transition.
\end{enumerate}

\subsection{Zero-freeness of the partition function}

We first show that the untilted partition function has, w.h.p., a constant-sized zero-free region around the origin on the complex $\beta$ plane; in the course of showing this we will also have the tools required for Proposition~\ref{prop:tilt_exp_zero_free}, where we show the expected tilted partition function also has a zero-free region around the origin.

Our arguments here closely follow those of \cite{zlokapa2026rigorousquasipolynomialtimeclassicalalgorithm}, which demonstrated zero-freeness for the SYK model on a constant-sized disk. For this reason we will refer back to the proof of zero-freeness of \cite{zlokapa2026rigorousquasipolynomialtimeclassicalalgorithm}, giving here only a proof overview of what is done there and only explicitly demonstrating here how the quantum $p$-spin model differs. We will find that the main difference is in the proof of Lemma~\ref{lemma:KP_second_moment} for the quantum $p$-spin model, which we will use to demonstrate concentration of the partition function, and thus devote the most space here to this proof.

\subsubsection{Polymer representations of the first and second moments of the quantum $p$-spin model}
Central to the zero-freeness proof of \cite{zlokapa2026rigorousquasipolynomialtimeclassicalalgorithm} is the expansion of the first and second moments of the partition function as a \emph{polymer series}. To develop this idea, we introduce some preliminary definitions.

\begin{defn}[Support-intersection graphs]\label{def:support_intersection_graph}
Consider a set of Pauli strings $I_1,\dots,I_m\in\{I,X,Y,Z\}^N$.

\begin{enumerate}
   \item \textbf{Support-intersection graph.} The graph $G(\bm{I})$ is defined on the vertex set $[m]$, with an edge between distinct vertices $i,j\in[m]$ if
   \begin{equation}
      \supp(I_i)\cap\supp(I_j)\neq\emptyset.
   \end{equation}
   That is, any pair of vertices are connected if their corresponding Pauli strings have overlapping support on at least one spin.

   \item \textbf{Separated support-intersection graph.} Suppose $m$ is even and the collection is additionally equipped with a pairing $\pi\in\Pi([m])$ and replica labeler $\mu:[m]\to\{1,2\}$. For the tuple $\tau:=(\pi,\mu,\bm{I})$, the graph $G^{\mathrm{sep}}(\tau)$ is obtained by restricting the edges of $G(\bm{I})$ to vertices with matching replica label, \textit{i.e.}
   \begin{equation}
      \mu(i)=\mu(j)
      \qquad\text{and}\qquad
      \supp(I_i)\cap\supp(I_j)\neq\emptyset.
   \end{equation}

   \item \textbf{Mixed support-intersection graph.} For the same tuple $\tau:=(\pi,\mu,\bm{I})$, the graph $G^{\mathrm{mix}}(\tau)$ is obtained from $G^{\mathrm{sep}}(\tau)$ by additionally drawing an edge between paired vertices belonging to distinct replicas, \textit{i.e.} whenever
   \begin{equation}
      \{i,j\}\in\pi
      \qquad\text{and}\qquad
      \mu(i)\neq\mu(j).
   \end{equation}
\end{enumerate}
\end{defn}

The polymer sets for both the first- and second-moments will end up consisting of specific sets of Pauli strings whose support intersection graph is connected. 

\begin{defn}[First- and second-moment Polymer sets]\label{def:moment_polymer_sets}
Fix an even integer $m$ and denote by $\Pi([m])$ the set of all possible pairings of $[m]$. A choice of $m$ $q$-local Pauli strings $I_1,\dots,I_m\in\binom{[N]}{q}\times\{X,Y,Z\}^q$ and a pairing $\pi\in\Pi([m])$ are said to be \textit{consistent} if $I_a=I_b$ for all $\{a,b\}\in\pi$. Collect all such consistent pairs into the set
\begin{align*}
   \mathcal{S}_m:=\{(\pi,\bm{I}): \text{$\bm{I}$ is consistent under the pairing $\pi$}\}.
\end{align*}
If the Pauli strings are additionally equipped with a replica labeler $\mu:[m]\to\{1,2\}$, define similarly
\begin{align*}
   \mathcal{T}_m:=\{(\pi,\mu,\bm{I}): \text{$\bm{I}$ is consistent under the pairing $\pi$}\}.
\end{align*}

\begin{enumerate}
   \item \textbf{First-moment polymer set.} For an element $\tau:=(\pi,\bm{I})\in\mathcal{S}_m$, define $G(\tau):=G(\bm{I})$ and let $\mathcal{S}_m^{\text{conn}}$ denote the set of all tuples whose support-intersection graph is connected. The \textit{first-moment polymer set} is defined by
   \begin{equation}
      \mathcal{P}_{\E[Z]}:=\bigcup_{k\geq1}\mathcal{S}_{2k}^{\text{conn}}.
   \end{equation}
   For $\tau=(\pi,\bm{I})\in\mathcal{P}_{\E[Z]}$, define
   \begin{equation}
      \supp(\tau):=\bigcup_{j}\supp(I_j).
   \end{equation}
   The polymer set is equipped with the compatibility relation $\sim$ where $\tau_1,\tau_2\in\mathcal{P}_{\E[Z]}$ satisfy
   \begin{equation}
      \tau_1\sim\tau_2
      \iff
      \supp(\tau_1)\cap\supp(\tau_2)=\emptyset.
   \end{equation}

   \item \textbf{Second-moment polymer set.} For an element $\tau:=(\pi,\mu,\bm{I})\in\mathcal{T}_m$, let $\mathcal{T}_m^{\text{conn}}$ denote the set of all tuples whose mixed support-intersection graph $G^{\mathrm{mix}}(\tau)$ is connected. The \textit{second-moment polymer set} is defined by
   \begin{equation}
      \mathcal{P}_{\E[Z^2]}:=\bigcup_{k\geq1}\mathcal{T}_{2k}^{\text{conn}}.
   \end{equation}
   For $\tau\in\mathcal{P}_{\E[Z^2]}$, define its support within replica $a\in\{1,2\}$ by
   \begin{equation}
      \supp_a(\tau):=\bigcup_{j :  \mu(j)=a}\supp(I_j).
   \end{equation}
   The polymer set is equipped with the compatibility relation $\sim$ where $\tau_1,\tau_2\in\mathcal{P}_{\E[Z^2]}$ satisfy
   \begin{equation}
      \tau_1\sim\tau_2
      \iff
      \supp_a(\tau_1)\cap\supp_a(\tau_2)=\emptyset
      \quad\text{for each }a\in\{1,2\}.
   \end{equation}
\end{enumerate}
\end{defn}

The upshot of defining these polymer sets is that the first and second moments can be exactly expressed in terms of them.

\begin{prop}[The first moment as a polymer expansion]
The first moment of the quantum $p$-spin model can be expanded as a polymer series in terms of the first-moment polymer set $\mathcal{P}_{\E[Z]}$ of \cref{def:moment_polymer_sets}. That is,
\begin{equation}
    \E[\hat{Z}] = 1+\sum_{r\geq1}^{\infty} \frac{1}{r!}\sum_{\substack{s_1, \dots , s_r\in {\mathcal{P}_{\E[Z]}}\\ s_i\sim s_j}} \prod _{u=1}^rz_{s_u}(\beta).
    \label{eq:polymer_firstmoment}
\end{equation}
where $z_{s_u}(\beta) = \frac{(-\beta \sigma)^{|s_u|}}{|s_u|!}\tr\left[\prod_{j\in [|s_u|]}\hat{\sigma}_{I_j}\right]$.  
\end{prop}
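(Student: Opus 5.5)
We expand $\hat Z:=\Tr(e^{-\beta H})/2^N$ as a power series in $\beta$, average over the Gaussian disorder with Wick's theorem, and then factorize the resulting sum over pairings into connected components. Here $\sigma^2$ denotes the common variance of the $J_I$.

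\textbf{Step 1: expand and average.} Writing $H=\sum_I J_I\hat\sigma_I$, we have
\begin{equation*}
\hat Z=\sum_{m\ge0}\frac{(-\beta)^m}{m!}\sum_{I_1,\dots,I_m}J_{I_1}\cdots J_{I_m}\,\tr\Big[\prod_j\hat\sigma_{I_j}\Big],
\end{equation*}
with $\tr$ the normalized trace. The series converges absolutely for fixed $N$, since there are finitely many terms at each order and $\|H\|$ is finite a.s. Its expectation exchanges with the sum by dominated convergence, using Gaussian moment bounds on $\E\exp(|\beta|\sum_I|J_I|)$. Wick's theorem gives
\begin{equation*}
\E[J_{I_1}\cdots J_{I_m}]=\sigma^m\sum_{\pi\in\Pi([m])}\prod_{\{a,b\}\in\pi}\mathbb 1[I_a=I_b].
\end{equation*}
Hence $\E\hat Z=\sum_m\frac{(-\beta\sigma)^m}{m!}\sum_{(\pi,\bm I)\in\mathcal S_m}\tr[\prod_j\hat\sigma_{I_j}]$, where only even $m$ contribute.

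\textbf{Step 2: factorize.} For $(\pi,\bm I)\in\mathcal S_m$, decompose $[m]$ into the vertex sets $V_1,\dots,V_r$ of the connected components of $G(\bm I)$. Paired vertices carry equal strings and so lie in the same component. The pairing therefore restricts to each $V_u$, and each restricted piece, relabelled by order within $[m]$, is an element $s_u\in\mathcal S^{\rm conn}_{|V_u|}$. Distinct components have disjoint supports, so $s_u\sim s_v$.

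For the trace we reorder the product: Pauli strings with disjoint supports commute, and the normalized trace of a product over disjoint supports factorizes. Together these give $\tr[\prod_j\hat\sigma_{I_j}]=\prod_u\tr[\prod_{j\in V_u}\hat\sigma_{I_j}]$, with the order within each $V_u$ preserved. The weight $(-\beta\sigma)^m$ splits as $\prod_u(-\beta\sigma)^{|s_u|}$.

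\textbf{Step 3: count.} This is the main bookkeeping obstacle. Conversely, given an unordered family of pairwise compatible polymers $\{s_1,\dots,s_r\}$ with sizes $m_u$ summing to $m$, the number of ways to interleave them into one tuple in $\mathcal S_m$ is $m!/\prod_u m_u!$: choose which positions each polymer occupies, keeping its internal order. Distinct polymers in a compatible family have disjoint supports and hence cannot coincide.

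Two points need checking. First, the map is a bijection: any interleaving of compatible connected polymers has a graph whose components are exactly the polymers, because compatibility forbids edges between them. Second, passing from ordered to unordered $r$-tuples gives the factor $1/r!$. Combining,
\begin{equation*}
\frac{1}{m!}\cdot\frac{m!}{\prod_u m_u!}=\prod_u\frac{1}{|s_u|!},
\end{equation*}
which produces exactly $\prod_u z_{s_u}(\beta)$. The $m=0$ term gives the leading $1$. Summing over $m$ and regrouping, which is justified by absolute convergence for fixed $N$ (or as formal power series in $\beta$), yields \eqref{eq:polymer_firstmoment}.
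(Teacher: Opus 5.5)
Your proposal is correct and follows the paper's route. You Wick-expand each order into consistent tuples $(\pi,\bm I)\in\mathcal{S}_m$, factor the normalized trace over the connected components of $G(\bm I)$ using commutation and disjoint supports, and regroup into compatible polymers. The only differences are that you carry out the interleaving count $m!/\prod_u m_u!$, the $1/r!$ ordering factor, and the convergence and interchange justifications explicitly, whereas the paper defers the bijection to Lemma~14 of the cited SYK work.
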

\begin{proof}[Proof sketch]
    It will be instructive to reproduce the beginning lines of the argument in \cite{zlokapa2026rigorousquasipolynomialtimeclassicalalgorithm} to motivate the origin of this polymer series. Consider moments
\begin{align*}
   \E[\tr[H^{m}]]
   &=\sum_{I_1,\ldots,I_{m}\in\binom{[N]}{q}\times\{X,Y,Z\}^q}
   \E[J_{I_1}\cdots J_{I_{m}}]\tr[\hat{\sigma}_{I_1}\cdots\hat{\sigma}_{I_{m}}] \\
   &=\sum_{I_1,\ldots,I_{m}}
   \sum_{\pi\in\Pi([m])}
   \prod_{\{a,b\}\in\pi}\E[J_{I_a}J_{I_b}]
   \tr[\hat{\sigma}_{I_1}\cdots\hat{\sigma}_{I_{m}}] \quad \text{By Wick's theorem} \\
   &=\sigma^{m}
   \sum_{I_1,\ldots,I_{m}}
   \sum_{\pi\in\Pi([m])}
   \prod_{\{a,b\}\in\pi}\delta_{I_a,I_b}
   \tr[\hat{\sigma}_{I_1}\cdots\hat{\sigma}_{I_{m}}] \\
   &=\sigma^{m}
   \sum_{\pi\in\Pi([m])}
   \sum_{\substack{I_1,\ldots,I_{m}\\
   I_a=I_b\;\text{for all }\{a,b\}\in\pi}}
   \tr[\hat{\sigma}_{I_1}\cdots\hat{\sigma}_{I_{m}}] \\
   &=\sigma^{m}\sum_{(\pi,\bm{I})\in\mathcal{S}_m}
   \tr[\hat{\sigma}_{I_1}\cdots\hat{\sigma}_{I_{m}}].
\end{align*}
Then it follows that 
\begin{equation}
    \E[\hat{Z}] =1+ \sum_{m\geq 1}\frac{1}{m!}\sum_{\tau \in \mathcal{S}_m } (-\beta\sigma)^m\tr[\hat{\sigma}_{I_1}\cdots\hat{\sigma}_{I_{m}}].
\end{equation}
The final observation comes from realizing that $\tau$ can be broken down into its connected components. In particular, suppose each tuple $\tau = (\pi, \textbf{I})$ has a support-intersection graph $G(\tau)$ with $r$ connected components $C_u, u\in [r]$. Pauli strings belonging to different components commute and the trace operator distributes over them, so we can decompose moments further:
\begin{equation}
    \E[\hat{Z}]= 1+ \sum_{m\geq 1}\frac{1}{m!}\sum_{\tau \in \mathcal{S}_m } \prod_{u=1}^{r}(-\beta\sigma)^{|\{j : j\in C_u\}|}
   \tr\left[\prod_{j\in C_u}\hat{\sigma}_{I_j}\right].
   \label{eq:closed_form_first_moment_untilted}
\end{equation}
This decomposition motivates the polymer representation: each connected component of $G(\tau)$ defines an irreducible contribution, which we identify with a polymer $s \in \mathcal{P}_{\E[Z]}$.  We refer to \cite[Lemma~14]{zlokapa2026rigorousquasipolynomialtimeclassicalalgorithm} where the formal bijection is derived between the tuple $\tau$ and a consistent tuple of $r$ polymers, up to a permutation of labels.  
\end{proof}

\begin{prop}[The second moment as a polymer expansion]\label{claim:polymer_set_second_moment}
The second moment of the quantum $p$-spin model can be expanded as a polymer series in terms of the second-moment polymer set $\mathcal{P}_{\E[Z^2]}$ of \cref{def:moment_polymer_sets}. That is,
\begin{equation}
    \E[\hat{Z}(\beta)\hat{Z}(\gamma)]
    =1+\sum_{r\geq1}^{\infty}\frac{1}{r!}
    \sum_{\substack{s_1,\dots,s_r\in\mathcal{P}_{\E[Z^2]}\\ s_i\sim s_j}}
    \prod_{u=1}^{r}z_{s_u}(\beta, \gamma),
    \label{eq:polymer_secondmoment}
\end{equation}
where
\begin{equation}
    z_{s_u}(\beta)
    =\frac{(-\beta\sigma)^{|\{j\in[|s_u|]\;:\;\mu(j)=1\}|}(-\gamma\sigma)^{|\{j\in[|s_u|]\;:\;\mu(j)=2\}|}}{|s_u|!}
    \tr\left[\prod_{\substack{j\in[|s_u|]\\ \mu(j)=1}}\hat{\sigma}_{I_j}\right]
    \tr\left[\prod_{\substack{j\in[|s_u|]\\ \mu(j)=2}}\hat{\sigma}_{I_j}\right].
\end{equation}
\end{prop}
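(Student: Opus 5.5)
We mirror the proof sketch for the first moment, now with two replicas. We write $\hat Z(\beta)=\tr[e^{-\beta H}]/2^N$ and expand each factor as a Taylor series. This gives
\begin{equation*}
\E[\hat Z(\beta)\hat Z(\gamma)]=\sum_{m_1,m_2\geq0}\frac{(-\beta)^{m_1}(-\gamma)^{m_2}}{m_1!\,m_2!}\,\E\big[\tr[H^{m_1}]\,\tr[H^{m_2}]\big].
\end{equation*}
We then merge the two sums by labelling the $m=m_1+m_2$ Hamiltonian insertions with a replica labeler $\mu:[m]\to\{1,2\}$. Summing over all labelers with $|\mu^{-1}(1)|=m_1$ produces $\binom{m}{m_1}$ copies of each term, so $\frac{1}{m_1!m_2!}=\frac{1}{m!}\binom{m}{m_1}$ lets us write the whole series as $\sum_m\frac{1}{m!}\sum_\mu(\cdots)$. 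Expanding each $H$ in Pauli strings and applying Wick's theorem to $\E[J_{I_1}\cdots J_{I_m}]$ gives a sum over pairings $\pi\in\Pi([m])$ with $\prod_{\{a,b\}\in\pi}\sigma^2\delta_{I_a,I_b}$. The moment therefore becomes $1+\sum_m\frac{1}{m!}\sum_{\tau\in\mathcal T_m}(-\beta\sigma)^{m_1}(-\gamma\sigma)^{m_2}\tr[\prod_{\mu(j)=1}\hat\sigma_{I_j}]\tr[\prod_{\mu(j)=2}\hat\sigma_{I_j}]$, where each product is ordered by the index order within its replica. Pairings may connect the two replicas; this is exactly why $G^{\mathrm{mix}}$ adds the cross-replica pair edges.

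Next we factorize each term over the connected components $C_1,\dots,C_r$ of $G^{\mathrm{mix}}(\tau)$. Every pair of $\pi$ lies within one component: a same-replica pair has identical Pauli strings and hence overlapping support, so it is an edge of $G^{\mathrm{sep}}$, while a cross-replica pair is an edge by construction. Each component therefore inherits a restricted pairing and is itself a consistent tuple. Within replica $a$, strings from different components have disjoint support, because overlap within a replica would give an edge of $G^{\mathrm{sep}}$. They therefore commute, so the trace of the replica-$a$ product splits into a product over components, using $\tr[AB]=\tr[A]\tr[B]$ for normalized traces on disjoint supports. The powers of $\beta\sigma,\gamma\sigma$ split in the same way. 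Each component, relabelled to $[|C_u|]$ in order-preserving fashion, is a polymer in $\mathcal P_{\E[Z^2]}$. Distinct components are compatible under $\sim$: if some replica-$a$ string of one component overlapped a replica-$a$ string of another, those two strings would be joined by an edge of $G^{\mathrm{mix}}$.

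The remaining step, and the main obstacle, is the combinatorial bijection that converts $\frac{1}{m!}\sum_{\tau\in\mathcal T_m}\prod_u(\cdot)$ into $\sum_r\frac{1}{r!}\sum_{s_1\sim\cdots\sim s_r}\prod_u z_{s_u}$ with the weight $\frac{1}{|s_u|!}$ inside each $z_{s_u}$. We would adapt \cite[Lemma~14]{zlokapa2026rigorousquasipolynomialtimeclassicalalgorithm} as follows. Given an ordered $r$-tuple of pairwise compatible polymers of sizes $m^{(1)},\dots,m^{(r)}$, consider the ways to interleave their index sets into $[m]$. There are $\binom{m}{m^{(1)},\dots,m^{(r)}}$ such interleavings, and each yields a $\tau\in\mathcal T_m$ whose mixed-graph components are exactly these polymers. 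Each $\tau$ arises from exactly $r!$ ordered tuples, one for each ordering of its components. Dividing by $m!$ leaves $\frac{1}{r!}\prod_u\frac{1}{m^{(u)}!}$, which is exactly the claimed weight.

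The delicate point, absent in the first-moment case, is the operator ordering inside each replica trace. Interleaving changes the relative positions of strings from different components within a replica, and it also splits the replica-1 and replica-2 orders. Both effects are harmless, for two reasons. First, strings of different components in the same replica have disjoint support and commute. Second, each replica's trace depends only on the induced order of that replica's own indices, and this induced order is preserved by order-preserving interleaving. Finally, we check that compatibility ($\sim$ replica-wise disjointness) is exactly the condition that no $G^{\mathrm{mix}}$ edges cross between components. Cross-replica pair edges can never join two distinct polymers, since pairings are internal to each polymer. This makes the correspondence a genuine bijection and gives Eq.~\eqref{eq:polymer_secondmoment}.
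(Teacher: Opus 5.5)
Your proposal is correct and follows essentially the paper's route. The paper gives no independent argument for this statement: it mirrors its first-moment sketch (Taylor expansion, Wick pairing, factorization over connected components) and defers the component-to-polymer bijection to Lemma~20 of \cite{zlokapa2026rigorousquasipolynomialtimeclassicalalgorithm}. Your two-replica version supplies the details the paper leaves to that reference: the labeler $\mu$ absorbs the factor $1/(m_1!\,m_2!)$, and you verify that order-preserving interleaving respects the operator ordering inside each replica's trace.
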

We refer the reader to \cite[Lemma 20]{ zlokapa2026rigorousquasipolynomialtimeclassicalalgorithm} where a derivation of the analogous decomposition is given for the SYK model.

\subsubsection{Zero-freeness from polymer expansions}
The approach taken in \cite{zlokapa2026rigorousquasipolynomialtimeclassicalalgorithm} to show zero-freeness with high probability consists of three steps that make use of the polymer representations established above.

\paragraph{1. First-moment zero-freeness.}
First, it is shown that the mean $\E[Z]$ is nonzero. They prove this by showing that its polymer set satisfies the Kotecký--Preiss condition. We restate this condition for completeness. 
\begin{prop}[Kotecký--Preiss: Zero-freeness of the partition function]\label{prop:KP_first_moment}
 Let $\mathcal{P}$ be a polymer set with incompatibility relation $\nsim$ such that every polymer $s\in\mathcal{P}$ satisfies $s\nsim s$. For each finite $\Lambda\subset\mathcal{P}$ and vector $(z_s)_{s\in\mathcal{P}}\in\mathbb{C}^{\mathcal{P}}$, define the polymer partition function
   \begin{align}
      \Xi_\Lambda(z)
      &=1+\sum_{r\geq 1}\frac{1}{r!}
      \sum_{\substack{(s_1,\ldots,s_r)\in\Lambda^r\\
      s_j\sim s_k\ \text{for all }1\leq j<k\leq r}}
      \prod_{j=1}^{r}z_{s_j}. \label{eq:polymer-partition-function}
   \end{align}
Let $a:\mathcal{P}\to[0,\infty)$ be a function such that
   \begin{align}
      \sum_{s'\in\mathcal{P}:s'\nsim s}|z_{s'}|e^{a(s')}
      &\leq a(s)
   \end{align}
for all $s\in\mathcal{P}$. Then for every finite $\Lambda\subset\mathcal{P}$, one has
   \begin{align}
      \Xi_\Lambda(z) &\neq 0.
   \end{align}
\end{prop}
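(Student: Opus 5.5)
Since this is the classical Kotecký--Preiss theorem, the plan is to reproduce its standard proof via the \emph{cluster expansion}. The goal is to show that $\ln\Xi_\Lambda(z)$ is given by an absolutely convergent power series in the activities $(z_s)_{s\in\Lambda}$. Then $\Xi_\Lambda(z)=\exp(\text{finite number})\neq0$.

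First, consider $\Lambda$ finite and treat $\Xi_\Lambda$ as a polynomial in finitely many variables with $\Xi_\Lambda(0)=1$. Near $z=0$ it has a formal logarithm
\begin{equation*}
\ln\Xi_\Lambda(z)=\sum_{n\geq1}\frac{1}{n!}\sum_{(s_1,\dots,s_n)\in\Lambda^n}\phi(s_1,\dots,s_n)\prod_{j=1}^n z_{s_j},
\end{equation*}
where $\phi$ is the Ursell function. Concretely, $\phi=\sum_{G}(-1)^{|E(G)|}$, summing over connected spanning subgraphs $G$ of the incompatibility graph on $\{s_1,\dots,s_n\}$ (edges join $i\neq j$ with $s_i\nsim s_j$). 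The only combinatorial input is the Penrose tree-graph inequality: $|\phi(s_1,\dots,s_n)|$ is at most the number of spanning trees of that incompatibility graph. This uses only that $\nsim$ is symmetric and reflexive ($s\nsim s$), which the hypothesis provides.

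Second, I will prove the key estimate by induction on tree depth. For every polymer $s$,
\begin{equation*}
\sum_{n\geq1}\frac{1}{(n-1)!}\sum_{\substack{(s_1,\dots,s_n)\in\mathcal P^n\\ s_1=s}}|\phi(s_1,\dots,s_n)|\prod_{j=2}^{n}|z_{s_j}|\leq e^{a(s)}.
\end{equation*}
I will bound $|\phi|$ by the count of spanning trees rooted at $s_1=s$ and organise the sum by rooted trees. Let $T_k(s)$ be the contribution of trees of depth at most $k$. The recursion is that the children $s'$ of the root are attached independently, each with $s'\nsim s$. This gives
\begin{equation*}
T_{k+1}(s)\leq\exp\Big(\sum_{s'\nsim s}|z_{s'}|\,T_k(s')\Big).
\end{equation*}
Assuming inductively $T_k(s')\leq e^{a(s')}$ for all $s'$, the Kotecký--Preiss hypothesis $\sum_{s'\nsim s}|z_{s'}|e^{a(s')}\leq a(s)$ yields $T_{k+1}(s)\leq e^{a(s)}$. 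The base case $T_0\equiv1\leq e^{a}$ holds since $a\geq0$. Monotone convergence in $k$ then gives the estimate.

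Finally, summing the estimate weighted by $|z_s|$ over the finitely many $s\in\Lambda$ bounds $\sum_n\frac1{n!}\sum_{\Lambda^n}|\phi||z_{s_1}\cdots z_{s_n}|$ by $\sum_{s\in\Lambda}|z_s|e^{a(s)}<\infty$. So the cluster series converges absolutely at the given $z$. The same bound holds for $\lambda z$ with $\lambda$ in the closed unit disk, since the condition is monotone in $|z|$. Hence the series defines an analytic function $F(\lambda)$ on a neighbourhood of that disk, and $e^{F(\lambda)}=\Xi_\Lambda(\lambda z)$ as formal power series near $\lambda=0$. By the identity theorem they agree on the whole disk, and at $\lambda=1$ we get $\Xi_\Lambda(z)=e^{F(1)}\neq0$.

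The main obstacle is the Penrose tree-graph bound together with the bookkeeping in the tree recursion. One must show that the exponential of the sum over children correctly dominates the sum over labelled rooted trees, including the factorials and the fact that children may repeat polymers. I would handle this exactly as in the standard treatments (e.g.\ via Penrose's partition scheme mapping connected graphs to trees). A direct induction on $|\Lambda|$ using $\Xi_\Lambda=\Xi_{\Lambda\setminus\{s\}}+z_s\Xi_{\Lambda\setminus N[s]}$ is tempting but only closes under the stronger Dobrushin-type condition, which is why I commit to the cluster-expansion route.
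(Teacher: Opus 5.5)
The paper does not prove this statement. It restates the classical Kotecký--Preiss theorem ``for completeness'' and defers to the literature (via \cite{zlokapa2026rigorousquasipolynomialtimeclassicalalgorithm}), so there is no argument in the paper to match. Your cluster-expansion proof is the standard one and is correct. It combines the formal Mayer/Ursell expansion, Penrose's tree-graph bound, and a depth-truncated tree recursion. The bookkeeping you flag as the main obstacle is exactly the exponential formula: a rooted tree is a root plus an unordered set of rooted subtrees, and $1/(n-1)!$ is the normalisation over non-root labels. So $T_{k+1}(s)=\exp\bigl(\sum_{s'\nsim s}|z_{s'}|T_k(s')\bigr)$ holds with equality, repeated polymers included.

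Your reason for rejecting the direct induction is backwards. Kotecký--Preiss implies the Dobrushin condition, not conversely, and the induction closes under your hypothesis. Set $c(s):=|z_s|e^{a(s)}$, so the hypothesis reads $\sum_{s'\nsim s}c(s')\le a(s)$. Because $s\nsim s$, this also gives $c(s)\le a(s)$. Prove by induction on $|\Lambda|$ that $\Xi_\Lambda(z)\neq0$ and $|\Xi_{\Lambda\setminus\{s\}}(z)/\Xi_\Lambda(z)|\le e^{c(s)}$ for every $s\in\Lambda$.

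Write $\Xi_\Lambda=\Xi_{\Lambda\setminus\{s\}}+z_s\Xi_{\Lambda\setminus N_\Lambda[s]}$, where $N_\Lambda[s]=\{s'\in\Lambda: s'\nsim s\}$. Telescope the ratio by removing the neighbours $s'\neq s$ one at a time, applying the induction hypothesis to each factor. This gives
\[
\left|z_s\,\frac{\Xi_{\Lambda\setminus N_\Lambda[s]}}{\Xi_{\Lambda\setminus\{s\}}}\right|\le |z_s|\exp\Bigl(\sum_{s'\nsim s,\ s'\neq s}c(s')\Bigr)\le |z_s|\,e^{a(s)-c(s)}=c(s)\,e^{-c(s)}.
\]
The inductive step then reduces to $1-c\,e^{-c}\ge e^{-c}$, which is just $e^{c}\ge 1+c$.

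This gives a short, purely algebraic proof with no Penrose inequality and no analytic continuation. It also yields Corollary~\ref{cor:pinned_KP} with the same constant. Indeed, $|\log(1+w)|\le-\log(1-|w|)\le c(s)$ whenever $|w|\le c(s)e^{-c(s)}$, and you can telescope over $\Lambda\setminus\Lambda_0$. What your route adds is the absolutely convergent cluster expansion of $\log\Xi_\Lambda$ itself, with the rooted bound $e^{a(s)}$. That is useful for truncation and tail estimates, but costs importing Penrose's bound.

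There is also a small analytic slip. Absolute convergence of the cluster series at $\lambda=1$ gives analyticity of $F$ only on the open unit disk and continuity on the closed disk. It does not give analyticity on a neighbourhood of the closed disk: the hypothesis may hold with equality, so nothing beyond $|\lambda|=1$ is controlled. The slip is harmless. The identity theorem gives $\Xi_\Lambda(\lambda z)=e^{F(\lambda)}$ for $|\lambda|<1$. Letting $\lambda\to1$, with both sides continuous, gives $\Xi_\Lambda(z)=e^{F(1)}\neq0$.
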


Next, it is shown that $Z$ concentrates about its mean so that typical instances $Z$ are bounded away from zero by a constant. To show concentration, the variance is shown to be small compared to the squared mean, $\Var[Z] \ll |\E Z|^2$ or more formally $\left |\frac{\E|Z|^2}{|\E Z|^2}-1\right |= o_N(1)$. It suffices to show that $\log\frac{\E[|Z|^2]}{|\E Z|^2}=o_N(1)$. The approach taken to establish this fact once more uses Kotecký--Preiss and is outlined in the next two steps.

\paragraph{2. Second-moment zero-freeness.}
The \textit{second-moment} polymer is also shown to satisfy the Kotecký--Preiss condition and be zero-free within a constant disk.

A consequence of this step is that a corollary of Kotecký--Preiss can be used:
\begin{corollary}[Pinned Kotecký--Preiss]\label{cor:pinned_KP}
   Let $\Lambda\subset\mathcal{P}$ be finite and assume $\Xi_\Lambda(z)$ is a polymer partition function satisfying \cref{prop:KP_first_moment}. Then for any $\Lambda_0\subseteq\Lambda$,
   \begin{align}
      \left|\log\frac{\Xi_\Lambda(z)}{\Xi_{\Lambda_0}(z)}\right|
      &\leq \sum_{s\in\Lambda\setminus\Lambda_0}|z_s|e^{a(s)}.
   \end{align}
\end{corollary}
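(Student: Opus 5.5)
The plan is to derive the bound directly from the cluster expansion that the Kotecký--Preiss criterion (Proposition~\ref{prop:KP_first_moment}) provides. The standard proof of Kotecký--Preiss gives more than zero-freeness. By induction on $|\Lambda|$, it shows that for every finite $\Lambda$ and every polymer $s\in\Lambda$,
\begin{align}
   \left|\log\frac{\Xi_\Lambda(z)}{\Xi_{\Lambda\setminus\{s\}}(z)}\right|\leq |z_s|e^{a(s)}.
\end{align}
Here the logarithm is the branch defined by continuity along the path $t\mapsto tz$, $t\in[0,1]$. This branch is well defined because every $\Xi$ involved is nonzero along the path: the hypothesis of Proposition~\ref{prop:KP_first_moment} is monotone under $|z_s|\mapsto t|z_s|$.

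I will first recall or re-derive this single-polymer inequality. Writing $\Lambda_s^{\nsim}=\{s'\in\Lambda\setminus\{s\}: s'\nsim s\}$, the recursion
\begin{align}
   \Xi_\Lambda=\Xi_{\Lambda\setminus\{s\}}+z_s\,\Xi_{\Lambda\setminus(\{s\}\cup\Lambda_s^{\nsim})}
\end{align}
gives
\begin{align}
   \frac{\Xi_\Lambda}{\Xi_{\Lambda\setminus\{s\}}}=1+z_s\frac{\Xi_{\Lambda\setminus(\{s\}\cup\Lambda_s^{\nsim})}}{\Xi_{\Lambda\setminus\{s\}}}.
\end{align}
The ratio on the right telescopes into a product of single-polymer removals over the elements of $\Lambda_s^{\nsim}$. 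The induction hypothesis bounds that product by $\exp\bigl(\sum_{s'\nsim s}|z_{s'}|e^{a(s')}\bigr)\leq e^{a(s)}$, using the Kotecký--Preiss condition. Together with $|\log(1+w)|\leq -\log(1-|w|)$, this closes the induction, which is the usual argument. I will be careful with one detail: the strong induction must carry the weaker statement $\bigl|\log(\Xi_\Lambda/\Xi_{\Lambda\setminus\{s\}})\bigr|\leq a(s)$ alongside the sharper bound $|z_s|e^{a(s)}$. This is the step I expect to be most delicate. In particular, the branch of the logarithm has to be handled consistently, for which I will use the path argument above.

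Given the single-polymer bound, the corollary follows by telescoping. Enumerate $\Lambda\setminus\Lambda_0=\{s_1,\dots,s_k\}$ and set $\Lambda_j=\Lambda_0\cup\{s_1,\dots,s_j\}$. Then
\begin{align}
   \log\frac{\Xi_\Lambda}{\Xi_{\Lambda_0}}=\sum_{j=1}^{k}\log\frac{\Xi_{\Lambda_j}}{\Xi_{\Lambda_{j-1}}}.
\end{align}
Each $\Lambda_j$ is finite and satisfies the Kotecký--Preiss hypothesis, since that hypothesis concerns all of $\mathcal{P}$ and is inherited by subsets. So each summand is bounded in absolute value by $|z_{s_j}|e^{a(s_j)}$, and the triangle inequality gives the claim. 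Finally, because the branches are chosen continuously from $z=0$, the telescoped sum of logarithms equals the logarithm of the ratio, so no $2\pi i$ ambiguity arises.
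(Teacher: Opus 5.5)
Your overall plan is sound: prove a single-polymer removal bound by induction on $|\Lambda|$, then telescope over $\Lambda\setminus\Lambda_0$. It also differs from the paper, which gives no proof. The paper invokes the corollary as a standard consequence of Kotecký--Preiss in its cluster-expansion form. There, $\log\Xi_\Lambda-\log\Xi_{\Lambda_0}$ is the sum over clusters using at least one polymer of $\Lambda\setminus\Lambda_0$, and the pinned cluster sum at each $s$ is bounded by $|z_s|e^{a(s)}$. Your route avoids cluster expansions, which is useful because Proposition~\ref{prop:KP_first_moment} as stated only gives zero-freeness. However, the induction step as you describe it does not close.

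\textbf{Why it fails as written.} You bound the telescoped ratio by $\exp\bigl(\sum_{s'\in\Lambda_s^{\nsim}}|z_{s'}|e^{a(s')}\bigr)\le e^{a(s)}$. This gives only $|w|\le x:=|z_s|e^{a(s)}$. Then $-\log(1-|w|)\le-\log(1-x)$, which is strictly larger than the target $x$ whenever $z_s\neq0$. Worse, nothing forces $x<1$: the hypothesis only yields $x\le a(s)$, and in the paper's application $a(s)$ grows with $|\supp(s)|$.

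\textbf{The missing idea.} Use $s\nsim s$. The polymer $s$ itself appears in the Kotecký--Preiss sum, while $\Lambda_s^{\nsim}$ excludes it, so
\[
\sum_{s'\in\Lambda_s^{\nsim}}|z_{s'}|e^{a(s')}\le a(s)-x .
\]
Hence $|w|\le xe^{-x}\le e^{-1}<1$. Then $|\log(1+w)|\le-\log(1-xe^{-x})\le x$, where the last inequality is equivalent to $e^{x}\ge 1+x$. This also gives $\Xi_\Lambda\neq0$. Since $|w|\le e^{-1}$ uniformly along $t\mapsto tz$, the principal branch of each $\log(1+w)$ agrees with your continuous branch.

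Carrying the weaker statement $\bigl|\log(\Xi_\Lambda/\Xi_{\Lambda\setminus\{s\}})\bigr|\le a(s)$ is not the right remedy. It is already implied by the sharper bound, since $x\le a(s)$. Feeding it into the telescoped product would give $\sum_{s'}a(s')$, which the hypothesis does not control. With the self-term accounted for, the rest of your argument goes through, including the final telescoping.
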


\paragraph{3. Decay of the pinned tail.}
The final task involves showing the pinned KP tail decays with $N$, thereby completing the desired claim $\log\frac{\E{Z^2}}{\E|Z|^2}=o_N(1)$. 

We now prove that the partition function $Z = \Tr[e^{-\beta H}]$ of the quantum $p$-spin model is zero-free with high probability in a sufficiently small constant-size disk in the complex $\beta$ plane.

\begin{theorem}[Zero-free disk of the quantum $p$-spin partition function]\label{thm:zero_freeness}
With probability $1- o(n^{2-q})$ over the disorder of the quantum $p$-spin model, the partition function of the quantum $p$-spin model is zero-free $Z(\beta)\neq 0$ for all complex $\beta$ in the disk 
\begin{equation}
   J|\beta|\leq \frac{0.2256}{\sqrt{q}}.
\end{equation}
\end{theorem}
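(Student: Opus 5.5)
Following the three-step outline above (adapted from \cite{zlokapa2026rigorousquasipolynomialtimeclassicalalgorithm}), the plan is to prove zero-freeness of the normalized partition function $\hat Z(\beta)=2^{-N}\Tr[e^{-\beta H}]$ by showing that it concentrates around a nonvanishing mean.

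\textbf{First moment.} We first verify the Kotecký--Preiss condition of Proposition~\ref{prop:KP_first_moment} for the first-moment polymer set $\mathcal{P}_{\E[Z]}$, using weights $z_s(\beta)$ from Eq.~\eqref{eq:polymer_firstmoment} and the choice $a(s)=c\,|s|$ for a constant $c$ tuned at the end. Fix a polymer $s$ and a spin $i\in\supp(s)$. We bound $\sum_{s'\ni i}|z_{s'}|e^{a(s')}$ by counting consistent pairings whose support-intersection graph is connected and touches spin $i$. Two estimates do the work. First, the normalized trace of a Pauli product has modulus at most $1$. Second, a connected pair-structure with $2k$ strings can be enumerated by growing a spanning tree: each new paired string must overlap the current support, which contributes roughly $qk\cdot\binom{N-1}{q-1}3^q$ choices. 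This factor cancels against the variance $\sigma^2=\frac{(q-1)!}{3^qN^{q-1}}J^2$ per pair. The result is a per-pair factor of order $q\,(J|\beta|)^2$ times combinatorial constants from the $1/(2k)!$ and the $(2k-1)!!$ pairings. Summing over $i\in\supp(s)$, with $|\supp(s)|\le q|s|$, gives the condition $\sum_{s'\nsim s}|z_{s'}|e^{a(s')}\le a(s)$ provided $J|\beta|\sqrt q$ is below an explicit threshold. The constant $0.2256$ should come out of optimizing $c$ in an inequality of the form $x e^{c}\cdot(\text{tree-growth constant})\le c$.

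\textbf{Second moment and pinned tail.} Next we apply the same argument to $\mathcal{P}_{\E[Z^2]}$ in Proposition~\ref{claim:polymer_set_second_moment}, at the point $(\beta,\bar\beta)$. The only changes are a factor $2$ from the replica labels and the use of the mixed graph $G^{\mathrm{mix}}$, which should cost only a constant in the threshold. We then use Corollary~\ref{cor:pinned_KP} with $\Lambda$ the full second-moment polymer set and $\Lambda_0$ the polymers that lie entirely within a single replica. For $\Lambda_0$, the polymer partition function factorizes as $\E[\hat Z(\beta)]\,\E[\hat Z(\bar\beta)]=|\E\hat Z|^2$. Consequently $\left|\log\frac{\E|\hat Z|^2}{|\E\hat Z|^2}\right|\le\sum_{s\in\Lambda\setminus\Lambda_0}|z_s|e^{a(s)}$.

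\textbf{Main obstacle.} The hardest step is to show that this tail is $o(N^{2-q})$, which is the analogue of Lemma~\ref{lemma:KP_second_moment}. A polymer in $\Lambda\setminus\Lambda_0$ must contain a pair $\{a,b\}\in\pi$ with $\mu(a)\neq\mu(b)$. Moreover, both traces must be nonzero, so each replica's product of Paulis multiplies to the identity. The minimal such polymers consist of one string shared across the two replicas together with within-replica strings that cancel it on each replica. A single cross-replica pair, taken alone, has trace zero on each replica. The minimal nonvanishing cross-replica configuration therefore seems to be two cross pairs with the same support, or a cross pair plus a within-replica pair overlapping it. I would enumerate these low-order configurations explicitly and count the free supports. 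A cross-replica polymer is pinned to share structure between the replicas, which forfeits at least one factor of $N^{q-1}$ relative to the bulk terms; this should give a leading contribution of $O(N\cdot N^{1-q})=O(N^{2-q})$. Higher orders are then controlled geometrically by the Kotecký--Preiss slack.

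\textbf{Conclusion.} The bound from the second moment is $\Var\hat Z/|\E\hat Z|^2=O(N^{2-q})$. By Chebyshev's inequality, $|\hat Z-\E\hat Z|<|\E\hat Z|$ pointwise in $\beta$ with the stated probability. To pass from pointwise bounds to the whole disk, I would apply the argument on a slightly larger disk and use analyticity: control $\E\sup|\hat Z-\E\hat Z|^2$ through a Cauchy-integral/contour average of the second moment. This is where the precise form $1-o(n^{2-q})$ would need care, possibly by carrying extra slack in the $O(N^{2-q})$ estimate.
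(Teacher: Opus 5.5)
Your architecture matches the paper's: Kotecký--Preiss for both moments, then the pinned corollary with $\Lambda_0$ the single-replica polymers so that $\Xi_{\Lambda_0}=|\mathbb{E}\hat Z|^2$, then concentration and uniformization over the disk, which the paper defers to Zlokapa et al. Checking KP for the whole mixed set directly by tree counting, instead of splitting it into $\mathcal{P}_{\mathrm{sep}}$ and $\mathcal{P}_{\mathrm{mix}}\setminus\mathcal{P}_{\mathrm{sep}}$, is a harmless variation.

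The gap is in the step that carries the theorem: the bound on $\sum_{s\in\Lambda\setminus\Lambda_0}|z_s|e^{a(s)}$. This is an \emph{unpinned} sum over all cross-replica polymers. KP slack, by contrast, is a per-site estimate with traces bounded by $1$, so on its own it gives only $O(N)$. Hence ``higher orders controlled geometrically by the Kotecký--Preiss slack'' cannot work. Everything rests on vanishing traces, and your list of minimal configurations does not yet use this correctly. A cross pair plus a same-replica pair overlapping it has \emph{zero} trace. Modulo phases, Pauli strings commute and square to the identity, so same-replica pairs drop out. Each replica's product is then proportional to the product of its cross-replica halves, and every polymer with exactly one cross pair vanishes, whatever the number of same-replica pairs. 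This matters quantitatively. Were that configuration nonzero, it would contribute about $3^q\binom{N}{q}\cdot q\,3^q\binom{N-1}{q-1}\cdot\sigma^4|\beta|^4=\Theta(N)$, not $O(N^{2-q})$. The true minimal configuration is two cross pairs carrying the \emph{same} Pauli string, not merely the same support.

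What is missing is a way to use the trace constraint at every order. With $s$ cross pairs, a nonvanishing trace forces $P_1\cdots P_s\propto I$ for the cross halves. So each site of their joint support, of size $v$, is covered at least twice, giving $q\le v\le sq/2$ and at most $\binom{N}{v}\bigl(3^q\binom{v}{q}\bigr)^s$ cross structures. Against $\sigma^{2s}=\Theta(N^{-s(q-1)})$ this leaves $N^{-s(q/2-1)}$ up to combinatorial factors, and this is where $q\ge 3$ is used. Only after that are same-replica pairs attached to this core by tree counting and resummed into a tree function; that is where KP-type counting legitimately enters the tail. Even then, $\binom{v}{q}^s/s!$ with $v$ up to $sq/2$ grows superexponentially in $s$. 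The ratio test $M_{s+1}/M_s\le 1/2$ is therefore available only for $2\le s<2N/q$, where it gives $2M_2=O(N^{2-q})$. The paper treats $s\ge 2N/q$ separately with the crude count $\bigl(3^q\binom{N}{q}\bigr)^s$, which yields $e^{-\Omega(N)}$. Enumerating a few low-order configurations cannot replace this two-regime argument.

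Two smaller points:
\begin{itemize}
\item The paper attaches $0.2256$ to this mixed-tail ratio condition. The first moment needs only $J|\beta|\le 1/\sqrt{eq}$.
\item Your target shifts between $o(N^{2-q})$ and $O(N^{2-q})$. Chebyshev with an $O(N^{2-q})$ variance ratio gives failure probability $O(N^{2-q})$, not the stated $o(n^{2-q})$.
\end{itemize}
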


\subsubsection{Proof of \cref{thm:zero_freeness}}

We now apply the prescription for showing zero-freeness with high probability to our case of the quantum $p$-spin model.
\begin{lemma}
    $\E[Z] \neq 0$ in the disk $|\beta|\leq \frac{1}{J\sqrt{eq}}$
\end{lemma}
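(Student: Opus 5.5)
The plan is to apply the Kotecký--Preiss criterion (Proposition~\ref{prop:KP_first_moment}) to the first-moment polymer representation Eq.~\eqref{eq:polymer_firstmoment}. The normalized expectation $\E[\hat Z]$ (with $\hat Z=Z/2^N$ and normalized trace) is exactly a polymer partition function $\Xi_\Lambda(z)$ over the finite polymer set $\Lambda=\mathcal P_{\E[Z]}$ at fixed $N$. So it suffices to find a function $a:\mathcal P_{\E[Z]}\to[0,\infty)$ satisfying
\[
\sum_{s'\nsim s}|z_{s'}(\beta)|e^{a(s')}\le a(s)
\]
for every polymer $s$ whenever $|\beta|\le 1/(J\sqrt{eq})$. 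I will take $a(s)=\alpha\,|\supp(s)|$ for a constant $\alpha>0$ to be tuned; a multiple of the number of strings is a fallback choice.

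\textbf{Step 1: reduce to polymers through one site.} Since $s'\nsim s$ means $\supp(s')\cap\supp(s)\neq\emptyset$, a union bound over sites gives
\[
\sum_{s'\nsim s}|z_{s'}|e^{a(s')}\le |\supp(s)|\max_{i\in[N]}\sum_{s'\ni i}|z_{s'}|e^{a(s')}.
\]
It therefore suffices to show that the sum over polymers containing a fixed site $i$ is at most $\alpha$.

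\textbf{Step 2: bound the weight of a single polymer.} For a polymer built from $m=2k$ strings, the normalized trace of a product of Pauli strings is $\pm1$ or $0$, so
\[
|z_{s'}|\le \frac{(|\beta|\sigma)^{2k}}{(2k)!},
\qquad \sigma^2=\frac{(q-1)!\,J^2}{3^qN^{q-1}},
\]
and $|\supp(s')|\le qk$, so $e^{a(s')}\le e^{\alpha qk}$.

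\textbf{Step 3: count connected polymers containing site $i$.} A polymer with $2k$ strings is a pairing $\pi$ of the $2k$ positions together with $k$ pair-labels (Pauli strings) whose support-intersection graph is connected. I will explore a spanning tree of that graph in order. The first string must contain $i$: at most $3^q\binom{N-1}{q-1}$ choices. Each later string must meet the support accumulated so far, which has at most $qk$ sites: at most $qk\cdot 3^q\binom{N-1}{q-1}$ choices. Each such choice contributes a factor $\sigma^2\cdot 3^q N^{q-1}/(q-1)!\le J^2$, so the $N$-dependence cancels exactly. This is where the normalization of the model enters.

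\textbf{Step 4: combine with the pairing count.} The number of pairings is $(2k)!/(2^kk!)$, which cancels against the $1/(2k)!$ from Step 2. Any overcounting from the ordering of pair-labels only helps. The remaining factor $(qk)^{k-1}/k!\le (eq)^{k}/(qk)$ by Stirling, and in any case is at most $(eq)^k$. Altogether the sum over polymers through site $i$ with $k$ pairs is bounded by a geometric term
\[
\bigl(e\,q\,J^2|\beta|^2 e^{\alpha q}\bigr)^k
\]
up to polynomial prefactors. This term converges to something $\le\alpha$ once $eqJ^2|\beta|^2<1$ and $\alpha$ is chosen small enough.

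\textbf{Main obstacle.} I expect the hard part to be the bookkeeping in Steps 3 and 4, and in particular the boundary of the disk. The threshold $|\beta|=1/(J\sqrt{eq})$ is exactly where the naive ratio $eqJ^2|\beta|^2$ equals $1$, so there is no room for the factor $e^{\alpha q}$. To get the closed disk I would use the sharper bound $(qk)^{k-1}/k!\le (eq)^k/(eqk)$, which gives an extra $1/k$ that makes the series summable at ratio $1$. I would then choose $\alpha$ of order $1/q$ and verify numerically that the resulting sum is $\le\alpha$. If that fails, I would state the lemma for $|\beta|\le c/(J\sqrt{eq})$ with $c<1$, which is all that Theorem~\ref{thm:zero_freeness}, with its radius $0.2256/\sqrt q$, actually needs.
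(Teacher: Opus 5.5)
Your route is the paper's: Koteck\'y--Preiss with $a(s)\propto|\supp(s)|$, a union bound over the sites of $\supp(s)$, a spanning-tree count, and the Wick pairing count. However, the bookkeeping in Step~4 discards a factor that the stated radius needs. You let $(2k)!/(2^kk!)$ cancel the $1/(2k)!$ and then carry only $1/k!$, so $2^{-k}$ is lost. With that loss, the per-site sum is $\frac1q T\bigl(qJ^2|\beta|^2e^{\alpha q}\bigr)$, where $T(x)=\sum_{k\ge1}k^{k-1}x^k/k!$ is the tree function. At $J|\beta|=1/\sqrt{eq}$ the ratio of consecutive terms tends to $e^{\alpha q}>1$, so the series diverges there for every $\alpha>0$. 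Your proposed rescue also fails on its own terms. An extra factor $1/k$ does not make a series summable at ratio $1$, since the harmonic series diverges, and no polynomial factor can offset geometric growth. If you optimize $\alpha$ in your accounting, you need $T(ye^{t})\le t$ with $y=qJ^2|\beta|^2$ and $t=\alpha q$, i.e.\ $y\le te^{-2t}\le 1/(2e)$. That gives only $J|\beta|\le 1/\sqrt{2eq}$, a factor $\sqrt2$ short of the lemma. This weaker radius would still suffice for Theorem~\ref{thm:zero_freeness}, since $1/\sqrt{2e}>0.2256$, but it is not the statement.

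The missing step is to keep the $2^{-k}$ and sum exactly with the tree function rather than bounding termwise. Take $a(s)=|\supp(s)|/(2q)$, so that $e^{a(s')}\le e^{k/2}$. The sum over polymers through a fixed site is then at most $\frac1q T(x)$ with $x=\frac{q\sqrt e}{2}J^2|\beta|^2$. Since $T(x)=xe^{T(x)}$, one has $T(x)\le\frac12$ iff $x\le\frac{1}{2\sqrt e}$, i.e.\ iff $J|\beta|\le 1/\sqrt{eq}$. This gives $\sum_{s'\nsim s}|z_{s'}|e^{a(s')}\le |\supp(s)|/(2q)=a(s)$ on the closed disk. That is exactly the paper's computation, and $\alpha=1/(2q)$ is the optimal choice in this scheme.

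Separately, your justification in Step~3 is off. Requiring the first string to contain $i$ and each later string to meet the accumulated support only enumerates tuples that are already in exploration order. Because the strings sit on distinguishable pairs of $\pi$, this is an undercount, not an overcount, and patching it with a factor $k!$ would destroy the $1/k!$ you need. The bound $(qk)^{k-1}\bigl(3^q\binom{N-1}{q-1}\bigr)^k$ is nonetheless correct for all labeled tuples, via Cayley's formula. There are $k^{k-2}$ spanning trees on the $k$ pairs, $k$ choices of a root containing $i$, and at most $q$ attachment sites per tree edge. This is the count of \cite[Lemma~12.2]{zlokapa2026rigorousquasipolynomialtimeclassicalalgorithm} that the paper invokes.
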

\begin{proof}[Proof sketch] 
The proof of this claim is largely adapted from \cite[Lemma~17]{zlokapa2026rigorousquasipolynomialtimeclassicalalgorithm}, with the following substitutions:
\begin{enumerate}
    \item $r = 0$: There are no perturbation insertions since we are considering the unperturbed partition function.  
    \item $|\tr[\prod_{j=1}^{2k}\hat{\sigma}_{I_j}]| =1$: The Wick-paired Pauli strings multiply to the identity $I$ up to a sign. 
    \item $\# \{\text{admissible } (K_1, \cdots K_k):\;x\in \bigcup_i K_i ) \} \leq 3^{qk}\binom{n-1}{q-1}^k(kq)^{k-1}$: For each node of the spanning tree in  part (1) of \cite[Lemma~12.2]{zlokapa2026rigorousquasipolynomialtimeclassicalalgorithm}, there exists an additional multiplicative factor of $3^q$ due to the additional Pauli terms. 
\end{enumerate}
Considered together, these substitutions produce 
\begin{align*}
\sum_{s'\in\mathcal{P}:s'\nsim s}|z_s(\beta)|e^{\frac{\supp(s')}{2q}}&\leq  \frac{\supp(s)}{q} T\left(
\frac{q \sqrt{e}}{2} 3^q
\binom{n-1}{q-1}\sigma^2 |\beta|^2
\right).\\
&\leq \frac{\supp(s)}{q} T\left(
\frac{q \sqrt{e}}{2} J^2 |\beta|^2
\right). \\
&\leq \frac{\supp(s)}{2q}\quad \text{for }J|\beta| \leq \frac{1}{\sqrt{eq}}
\end{align*}
Thus, the first-moment polymer set satisfies \cref{prop:KP_first_moment} for $\rho_s = |z_s(t)|$ and $\alpha(s) = \supp(s)/2q$, and the claim follows.
\end{proof}
Next, we show the analogous result for the second moment. The proof outlined here will be somewhat different than that in \cite[Section~5]{zlokapa2026rigorousquasipolynomialtimeclassicalalgorithm} since it is model-specific. 
\begin{lemma}\label{lemma:KP_second_moment}
$E[Z^2]\neq 0$ in the disk $|\beta| \leq\frac{0.2256}{J\sqrt{q}}$.
\end{lemma}
\begin{proof}
Using the polymer representation of $\E[Z^2]$ from \cref{claim:polymer_set_second_moment}, it suffices to establish \cref{prop:KP_first_moment}:
\begin{align}
    \sum_{s'\in\mathcal{P}_\text{mix}:s'\nsim s}|z_{s'}(t,\gamma)|e^{\frac{\supp(s')}{4q}}\leq\frac{\supp(s)}{4q}.
    \label{eq:KP_tail_second_moment}
\end{align}
First, we decompose the sum. Consider the tuple $\tau :=(\pi, \bm{I} , \mu)$ whose separated support-intersection graph $G^\text{sep}(\tau)$ is connected \textit{i.e.} $\tau \in \mathcal{P}^\text{sep}$. This forces $\mu$ to be constant because unless $\bm{I}$ lies entirely in either replica $1$ or $2$, $G^\text{sep}(\tau)$ would not be connected. Such a tuple automatically has a connected mixed support-intersection graph and also belongs to $\mathcal{P}^\text{mix}$. Thus, $\mathcal{P}^\text{sep} \subseteq \mathcal{P}^{\text{mix}}$. So it is admissible to write 

\begin{align*}
\sum_{s'\in\mathcal{P}_\text{mix}:s'\nsim s}|z_{s'}(t,\gamma)|e^{\frac{\supp(s')}{4q}} = \sum_{s'\in\mathcal{P}_\text{sep}:s'\nsim s}|z_{s'}(t,\gamma)|e^{\frac{\supp(s')}{4q}} + \sum_{s'\in\mathcal{P}_\text{mix}/ \mathcal{P}_\text{sep}:s'\nsim s}|z_{s'}(t,\gamma)|e^{\frac{\supp(s')}{4q}}.
\end{align*}

Implicitly, we have already upper-bounded the first term which corresponds to the \textit{square} of the first-moment KP tails since the sum factorizes over the two replicas. As in \cite[Corollary~22]{zlokapa2026rigorousquasipolynomialtimeclassicalalgorithm}, we take the (smaller) disk $\max(|\beta|, |\gamma|) \leq \frac{1}{\sqrt{2q\sqrt{e}}}$ to ensure the term is $\leq \frac{1}{2} \frac{\supp(s)}{2q}$.

Now we focus on the second term. We will show the stronger bound
\begin{equation}
    \sum_{s'\in\mathcal{P}_\text{mix}/ \mathcal{P}_\text{sep}}|z_{s'}(t,\gamma)|e^{\frac{\supp(s')}{4q}}\leq o_N(1).
\end{equation}
A couple of observations about the polymers $\tau=(\pi, \bm{I}, \mu) \in \mathcal{P}_\text{mix}/ \mathcal{P}_\text{sep}$ are in order. Notice that a Wick-pair $(a,b) \in \pi$ can be either \textit{same-replica} ($\mu(a) =\mu(b)$) or \textit{cross-replica} ($\mu(a)\neq \mu(b)$).  Let $k$ denote the total number of Wick pairs of which $s$ are cross-replica and $m = k - s$ same-replica. Then, we may reparameterize the sum as
\begin{align*}
    \sum_{\tau\in\mathcal{P}_\text{mix}/ \mathcal{P}_\text{sep}}|z_\tau(t,\gamma)|e^{\frac{\supp(\tau)}{4q}} &= \sum _{s=1}^{\infty}\sum_{m=0}^{\infty}M_{m,s}\qquad\text{where }M_{m,s}
:=
\sum_{\substack{
\tau\in\mathcal{P}_{\mathrm{mix}}\setminus\mathcal{P}_{\mathrm{sep}}\\
\tau\text{ with $m$ same-replica} \\
e\text{ cross-replica}
}}
\end{align*}
The polymers contributing to the sum above must satisfy 
\begin{equation}
    |z_{\tau}(t,\gamma)| \propto 
    \tr\left[\prod_{\substack{j\in[|s'_u|]\\ \mu(j)=1}}\hat{\sigma}_{I_j}\right]
    \tr\left[\prod_{\substack{j\in[|s'_u|]\\ \mu(j)=2}}\hat{\sigma}_{I_j}\right] \neq 0.
    \label{eq:nonvanishing_untilted}
\end{equation} These are precisely polymers for which the product of Pauli strings within each replica is proportional to the identity. Within each replica, a Pauli string either belongs to a same-replica pair, in which case its partner occurs also in the same replica and together they multiply to the identity, \textit{or} it is one half of a cross-replica pair. The condition that the traces be nonvanishing imposes a constraint within each replica that remaining cross-replica halves  $\mathcal{W} := \{\hat{\sigma}_1, \dots, \hat{\sigma}_s\}$ also multiply to the identity. This immediately rules out $s=1$ since a single Pauli string is traceless.
Denote by $v$ the size of the support of this set, $v:= |\bigcup_{\hat{\sigma}_I \in \mathcal{W}} \supp(\hat{\sigma}_I)|$. The sum above can be further parameterized in terms of $v$. It is clear that $q\leq v$ which is the size of any one Pauli string. We also have the trivial upper bound $v\leq n$. But this can be refined further by noting that the multiply-to-identity constraint on $\mathcal{W}$ means that each of the $v$ sites must be touched at least twice by the Pauli strings in $\mathcal{W}$. Thus, $v\leq \frac{sq}{2}$.  So, 
    \begin{align*}
    \sum_{s\in\mathcal{P}_\text{mix}/ \mathcal{P}_\text{sep}}|z_s(\beta,\gamma)|e^{\frac{\supp(s)}{4q}} &= \sum_{s=1}^{\infty}\sum_{m=0}^{\infty}\sum_{v=q}^{\min(n, sq/2)}M_{m,s, v}
    \end{align*}
We focus on upper-bounding the summand $M_{m,s,v}$.  
An arbitrary polymer $\tau \in $  with Wick-pair structure $(m,s,v)$ satisfies
\begin{align*}
    \supp(\tau) \leq \underbrace{2v}_{\text{cross-replica }} + \underbrace{mq}_{\text{same-replica}}
\end{align*}
and thus the Kotecký--Preiss weight term is upper-bounded by 
\begin{equation}
    |z_s(\beta,\gamma)|e^{\frac{\supp(s)}{4q}}\leq \frac{|\beta\sigma|^{2k}}{(2k)!}e^{2v + mq}
\end{equation}
assuming that without loss of generality, $|\beta| \geq |\gamma|$. Next we bound the number of polymers with Wick-pair structure $(m,s,v)$.  The total number of labels consistent with this structure is
\begin{align*}
    \Omega_{\text{consistent labelings}} 
    &\leq \text{\#pairings of $2k$ elements}\\
    &\quad\times \text{choice of $s$ cross-replica pairs}\\
    &\quad\times \text{same-replica assignments}\\
    &\quad\times \text{cross-replica endpoint assignments}\\
    &= (2k-1)!! \binom{k}{e} 2^e 2^m \\
    &= \frac{(2k)!}{e!m!}
\end{align*}
Next, we do count the possible sets $\mathcal{W}$ consisting of $s$ cross-replica Pauli strings whose joint support is of size $v$:
\begin{align}
    \Omega_{\text{cross-replica set}}\leq  \binom{n}{v}\left(3^q \binom{v}{q}\right)^e .
\end{align}
For a fixed cross-replica set $\mathcal{W}$, we count the choices of $m$ same-replica Pauli strings $\{\sigma_{I_1},\ldots,\sigma_{I_m}\}$ for which the resulting mixed-support intersection graph $G^{\mathrm{mix}}(\tau)$ is connected. This factor is a version of \cite[Lemma~12.2]{zlokapa2026rigorousquasipolynomialtimeclassicalalgorithm}, adapted to the case of $\mathcal{W}$:
\begin{align*}
    \Omega_{\text{consistent same-replica sets}} \leq v \bigg(3^q \binom{n-1}{q-1}\bigg)^m(v+mq)^{m-1}, \quad m\geq1.
\end{align*}
and for $m=0$, this count is trivially one. 
Putting things together, we obtain 
\begin{align*}
M_{m,s,v}&= 
\sum_{\substack{
s\in\mathcal{P}_{\mathrm{mix}}\setminus\mathcal{P}_{\mathrm{sep}}\\ \text{ consistent with $(m,s,v)$}}} 
|z_s(\beta,\gamma)|e^{\frac{\supp(s)}{4q}}\\
&\leq    \frac{|\beta\sigma|^{2(s+m)}}{s!m!}\Omega_{\text{cross-replica set}}
    v\left(3^q\binom{n-1}{q-1}\right)^m(v+mq)^{m-1}
    e^{\frac{2v+mq}{4q}}\\
    &\leq \frac{\Omega_{\text{cross-replica set}}}{s!}(\sigma^2|\beta|^2)^s e^{v/(2q)}
    \frac{v(v+mq)^{m-1}}{m!}
    \left[\underbrace{3^q\binom{n-1}{q-1}\sigma^2}_{J^2}|\beta|^2e^{1/4}\right]^m.
\end{align*}
Note that the $m=0$ case simply has $\M_{(0,s,v)}\leq\frac{\Omega_{\text{cross-replica set}}}{s!}(\sigma^2|\beta|^2)^s e^{v/(2q)}$.
Next, we sum over $m$. For clarity, set 
$a:= \frac{v}{q}$ and $x:=qJ^2|\beta|^2e^{\frac14}$. Then,
\begin{align*}
    \sum_{m\geq0}M_{m,s,v} &\leq   \frac{\Omega_{\text{cross-replica set}}}{s!}(\sigma^2|\beta|^2)^s s^{v/(2q)}\left(1+ \sum_{m\geq1}
    \frac{a(a+m)^{m-1}}{m!}
    x^m\right).\\
    &\leq  \frac{\Omega_{\text{cross-replica set}}}{s!}(\sigma^2|\beta|^2)^s \exp\left[\frac{v}{2q} + \frac{v}{q}T(x)\right]\qquad \text{$T(x)$ is the standard Tree function. }\\
    &\leq \frac{\Omega_{\text{cross-replica set}}}{s!}(\sigma^2\underbrace{|\beta|^2\exp\left[\frac{1}{4} + \frac{1}{2}T(x)\right]}_{B(\beta)})^s.
\end{align*}
We furthermore define $u=\sigma^2 B(\beta)$. For reasons that will become clear later, we consider two regimes.
\paragraph{Case 1: $2\leq s<\frac{2n}{q}$:}We will use the following upper-bound on $\Omega_{\text{cross-replica set}}\leq \binom{n}{v}\left(3^q \binom{v}{q}\right)^s$. Thus,
\begin{equation}
        M_{s,v} \leq \frac{u^s}{s!}\binom{n}{v}\left(3^q \binom{v}{q}\right)^s .
\end{equation}
Using standard binomial bounds
\begin{align*}
        M_{s} &\leq \frac{u^s}{s!}\sum_{v\leq \min(eq/2,n)}\left(\frac{en}{v}\right)^{v}\left(\frac{3ev}{q}\right)^{qs} .
\end{align*}
The summand is an increasing function of $v$ as it has a positive derivative $\forall v\leq n$. Therefore, we evaluate it at $v=sq/2$ to obtain
\begin{align*}
        M_{s} &\leq \frac{u^s}{s!}\frac{sq}{2}\left(\underbrace{\left(\frac{2e}{q}\right)^{q/2}\left(\frac{3e}{2}\right)^{q}}_{:=C}n^{q/2}s^{q/2}\right) ^s.\\
        &\leq \frac{sq}{2}\frac{\left(Cun^{q/2}s^{q/2}\right) ^s}{s!}
\end{align*}
Using $1/s! \leq (s/e)^s$,
\begin{align}
     M_s &\leq \frac{sq}{2}\left(eCun^{q/2}s^{q/2-1}\right) ^s
\end{align}
Since $\sigma^2 \leq \frac{q^{q-1}}{3^qn^{q-1}}$, 
\begin{align}
     M_s &\leq \frac{sq}{2}\left(eC\frac{q^{q-1}}{3^q}B(\beta)\left(\frac{s}{n}\right)^{q/2-1}\right) ^s
\end{align}
Notice that the ratio
\begin{align*}
\frac{M_{s+1}}{M_{s}} &\leq 2e^{q/2-1}\left(eC\frac{q^{q-1}}{3^q}B(\beta)\right)\left(\frac{s+1}{n}\right)^{q/2 -1}\\
&\leq  2e^{q/2-1}\left(eC\frac{q^{q-1}}{3^q}B(\beta)\right)\left(\frac{2}{q}\right)^{q/2 -1}\\
&\leq e^{2q} B(\beta)\\
&\leq \frac12 \quad \text{for $|\beta|\leq \frac{0.2256}{J\sqrt{q}}$} \
\end{align*}
Then the target sum is geometric:
\begin{align*}
    \sum_{s=2}^{2n/q}M_s &\leq 2M_2\\
    &\leq q2^{q-1}\left(eC\frac{q^{q-1}}{3^q}B(\beta)\right)^2n^{2-q} = O(n^{2-q})
\end{align*}
\paragraph{Case 2: $s\geq2n/q$}
In this regime, it will suffice to use the looser bound $\Omega_{\text{cross-replica set}}\leq \left(3^q \binom{n}{q}\right)^s$ so that
\begin{align*}
    M_s &\leq \frac{1}{s!} \left(\sigma^2 B(\beta) 3^q \binom{n}{q}\right)^2\\
    &\leq \frac{1}{s!}\left(\frac{n}{q}B(\beta)\right)^s\\
    &\leq \left(\frac{ne}{sq}B(\beta)\right)^s\\
    &<\left(\frac{2n}{sq}\right)^s
\end{align*}
Where in the last line we have used that under the same temperature constraint as in the previous case $eB(\beta)/2 <1$. Now consider the ratio
\begin{align*}
    \frac{M_{s+1}}{M_s} = B(\beta)\frac{n/q}{s+1}\leq \frac{B(\beta)}{2} \quad \text{since }s\geq\frac{2n}{q}
\end{align*}
Since $M_{2n/q} <1$, the sum is geometric:
\begin{align*}
    \sum_{s\geq 2n/q}M_s  \leq \frac{(eB(\beta)/2)^{2n/q}}{1-B(\beta)/2} \leq \exp({-\Omega(n)})
\end{align*}
Putting it all together, 
\begin{equation}
    M= \sum_s M_s \leq \exp({-\Omega(n)}) + O(n^{2-q}) = o(1)
    \label{eq:decaying_KP}
\end{equation}
The statement then follows by applying \cite[Lemma~25]{zlokapa2026rigorousquasipolynomialtimeclassicalalgorithm} to the tail in \eqref{eq:KP_tail_second_moment}.
\end{proof}

 \cref{cor:pinned_KP} now applies to the second-moment polymer series. We choose $\Lambda = \mathcal{P}_\text{mix}$ and $\Lambda_0 = \mathcal{P}_\text{sep}$. Then, 
\begin{align*}
      \left|\log\frac{\E|Z(\beta)|^2}{|\E Z(\beta)|^2} \right|&= \left|\log\frac{\Xi_\Lambda(z)}{\Xi_{\Lambda_0}(z)}\right|\leq  \sum_{s\in\Lambda\setminus\Lambda_0}|z_s(\beta, \bar{\beta})|e^{a(s)} = O(n^{2-q}).
\end{align*}
where the last line follows from the \cref{eq:decaying_KP}. Then applying \cite[Section~5.3]{zlokapa2026rigorousquasipolynomialtimeclassicalalgorithm}, gives the desired probability of success.


\subsection{Conjectured concentration of the tilted partition function}\label{sec:conj_conc}

In the previous section, we showed zero-freeness for the \textit{untilted} partition function of the quantum $p$-spin model $\tr[e^{-\beta H}]$. This result is not sufficient for estimating the magnetizations of the untilted model as the finite difference method in \cref{claim:mag_from_finite_difference}  additionally requires the zero-freeness of the tilted partition function  $\tr[e^{-t\sigma^z_i }e^{-\beta H}]$. In fact, since ASL requires an efficient magnetization estimator at any tilt along its path, in general we require zero-freeness for any tilted $\hat{Z}_y :=\tr[e^{-\sum_{i=1}^{N} y_i \sigma_k^z} e^{-\beta H}]$. 

It is easy to see that the first-moment results for the untilted case port over to the tilted case because of the following Proposition:
\begin{prop}\label{prop:tilt_exp_zero_free}
    $E[\hat{Z_y}] \neq 0 \iff E[\hat{Z}]\ne 0$
\end{prop}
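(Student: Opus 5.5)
The plan is to show something stronger than the stated equivalence: the disorder average of the Gibbs operator is a scalar multiple of the identity, so that $\E[\hat{Z}_{\bm{y}}]=\E[\hat{Z}]\prod_{i=1}^N\cosh(y_i)$ for all real tilts $\bm{y}$. Since $\cosh(y_i)\geq 1$, the two expectations then vanish simultaneously, for every complex $\beta$.

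\textbf{Step 1 (interchange expectation and trace).} Write $\E[\hat{Z}_{\bm{y}}]=\Tr\bigl(e^{-\sum_i y_i\sigma^z_i}\,\E[e^{-\beta H}]\bigr)$. To justify moving $\E$ inside the trace and the exponential series, expand $e^{-\beta H}=\sum_m(-\beta)^m H^m/m!$. The operator norm satisfies $\norm{H}\leq\sum_I|J_I|$, a finite sum of Gaussians at fixed $N$, so $\E\, e^{|\beta|\norm{H}}<\infty$ for every complex $\beta$. Dominated convergence then gives $\E[e^{-\beta H}]=\sum_m(-\beta)^m\E[H^m]/m!$. The sign convention on the tilt is irrelevant to the argument.

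\textbf{Step 2 (symmetry forces a scalar).} Fix any $N$-qubit Pauli string $P$. Conjugation gives $PH^mP=\bigl(\sum_I \epsilon_I J_I\hat\sigma_I\bigr)^m$ with signs $\epsilon_I\in\{\pm1\}$, since every Pauli string commutes or anticommutes with $P$. The $J_I$ are i.i.d.\ centered Gaussians, so $(\epsilon_IJ_I)_I\overset{d}{=}(J_I)_I$. Hence $P\,\E[H^m]\,P=\E[H^m]$, and after summing the series, $P\,\E[e^{-\beta H}]\,P=\E[e^{-\beta H}]$ for every Pauli $P$.

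Now expand $\E[e^{-\beta H}]$ in the Pauli basis. Any non-identity Pauli string anticommutes with some single-qubit Pauli, and conjugation by that Pauli flips the sign of its coefficient. Invariance therefore forces every non-identity coefficient to vanish, so $\E[e^{-\beta H}]=c(\beta)\mathbf{I}$. Taking the trace identifies $c(\beta)=2^{-N}\E[\hat{Z}]$.

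\textbf{Step 3 (conclude).} Substituting into Step 1 gives
\begin{equation*}
\E[\hat{Z}_{\bm{y}}]=2^{-N}\E[\hat{Z}]\,\Tr\Bigl(e^{-\sum_i y_i\sigma^z_i}\Bigr)=\E[\hat{Z}]\prod_{i=1}^N\cosh(y_i),
\end{equation*}
and the product is at least $1$ for real $\bm{y}$, which proves the equivalence.

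The only step needing care is the analytic justification in Step 1, namely exchanging expectation with the infinite series at complex $\beta$. The Gaussian exponential-moment bound on $\norm{H}$ settles it. The symmetry argument in Step 2 is the conceptual core but is routine.
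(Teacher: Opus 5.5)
Your proof is correct. Its skeleton matches the paper's: show that $\mathbb{E}[e^{-\beta H}]$ is a scalar multiple of $\mathbf{I}$, then trace it against the product tilt to pull out $\prod_i\cosh(y_i)$. The difference is in how you establish that the expectation is scalar.

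The paper expands $\mathbb{E}[e^{-\beta H}]$ in moments and applies Wick's theorem. Every surviving term is a product of Pauli strings occurring in Wick pairs, which can be reordered (at the cost of a sign $\omega_\tau$) into $\pm\mathbf{I}$. This gives $\alpha(\beta)\mathbf{I}$, with $\alpha$ written as the same series that underlies the first-moment polymer expansion.

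You instead use an invariance argument: conjugating by a Pauli flips the signs of a subset of the independent symmetric couplings, so $\mathbb{E}[e^{-\beta H}]$ commutes with every Pauli and must be scalar.

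Both arguments rest on the same fact, namely the sign symmetry of the $J_I$, which is exactly what kills unpaired Wick terms. Your version has several advantages:
\begin{itemize}
\item It avoids the combinatorics entirely.
\item It applies verbatim to any ensemble of Pauli Hamiltonians with independent sign-symmetric coefficients.
\item It does not actually need the series expansion, since $Pe^{-\beta H}P=e^{-\beta PHP}$ directly.
\item It supplies the integrability justification for exchanging $\mathbb{E}$ with the trace and exponential, which the paper leaves implicit.
\end{itemize}
The paper's route has the advantage of producing $\alpha(\beta)$ explicitly in the polymer language used for the Kotecký--Preiss argument, but that is not needed for the stated equivalence.

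Your prefactor $\prod_i\cosh(y_i)$ is also the right one. The paper's $\prod_i 2\cosh(y_i)$ carries a stray $2^N$ from a trace-normalization slip, which is harmless for the zero-freeness conclusion.
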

\begin{proof}
    Notice that by the Wick-pairing condition, the Pauli strings in \cref{eq:closed_form_first_moment_untilted} must multiply to the identity up to a constant $\omega_{\tau}$:
    \begin{align*}
        \E[e^{-\beta H}] &=\left(\sum_{m\geq 0}\frac{1}{m!}\sum_{\tau \in \mathcal{S}_m } (-\beta\sigma)^m\omega_\tau \right)I = \alpha(\beta)I\\
    \end{align*}  
    Where $\alpha(\beta)$ is some scalar function. Thus, $\E[\hat{Z}] = \alpha(\beta)$. Now consider the tilted partition function
    \begin{align*}
        \E[\hat{Z_y}]&=\tr[e^{-\sum_{i=1}^{N} y_i \sigma_k^z} \E[e^{-\beta H}]]\\
        &=\left(\prod_{i=1}^{N}2\cosh(y_i)\right)\E[\hat{Z}]
    \end{align*}
    The $\cosh$ prefactor is nonzero $\forall \bm{y}\in \R^n$ and thus the claim follows. 
    \end{proof}

Showing concentration of the tilted partition function is, however,  more complicated. As in the untilted case, we may similarly derive a polymer expansion.
\begin{prop}[The tilted first moment as a polymer expansion]
Given a tilt $\bm{y}\in\R^n$, the first moment of the tilted quantum $p$-spin partition function can be expanded as a polymer series in terms of a tilted first-moment polymer set $\mathcal{P}_{\E[Z_{\bm{y}}]}$ with compatibility relation $\sim$. That is,
\begin{align}
    \E[Z_{\bm{y}}(\beta)] &= \prod_{i=1}^{n}\cosh(y_i)\left(1+\sum_{r\geq1}^{\infty}\frac{1}{r!}\sum_{\substack{s_1,\dots,s_r\in\mathcal{P}_{\E[Z_{\bm{y}}]}\\s_i\sim s_j}}\prod_{u=1}^r z_{s_u}(\beta,\bm{y})\right),
    \label{eq:polymer_tilted_firstmoment}
\end{align}
where, for a polymer $s=(\pi,\vec{I})\in\mathcal{P}_{\E[Z_{\bm{y}}]}$, we define
\begin{align}
    z_s(\beta,\bm{y}) &= \frac{(-\beta\sigma)^{|s|}}{|s|!}\tr\left[\prod_{k\in\supp(s)}\frac{e^{-y_k\hat{\sigma}_k^z}}{\cosh(y_k)}\prod_{j=1}^{|s|}\hat{\sigma}_{I_j}\right].
\end{align}
\end{prop}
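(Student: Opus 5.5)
The plan is to rerun the untilted derivation of the first-moment polymer expansion, keeping the tilt operator inside the trace. The point to exploit is that the tilt operator is a product operator, so it factorizes across the disjoint supports of the polymers.

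Write $D_{\bm{y}}:=\bigotimes_{k=1}^n e^{-y_k\hat{\sigma}_k^z}$, and use the normalized trace $\tr$ throughout, as in the untilted computation, so that $\tr[e^{-y_k\hat\sigma^z_k}]=\cosh(y_k)$ on a single site.

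\textbf{Step 1: Taylor expansion and Wick pairing.} First, expand $e^{-\beta H}$ as a power series. Since $n$ is finite, the sum over Pauli strings is finite and the series is absolutely convergent for every $\beta$, so expectation, Taylor sum and trace may be exchanged. Exactly as in the untilted proof sketch, Wick's theorem then gives
\begin{equation*}
\E[Z_{\bm{y}}(\beta)]=1\cdot\tr[D_{\bm{y}}]+\sum_{m\geq1}\frac{(-\beta\sigma)^m}{m!}\sum_{(\pi,\bm{I})\in\mathcal{S}_m}\tr\Big[D_{\bm{y}}\,\hat\sigma_{I_1}\cdots\hat\sigma_{I_m}\Big].
\end{equation*}

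\textbf{Step 2: Factorization over connected components.} Next, fix $\tau=(\pi,\bm{I})$ and split $G(\bm{I})$ into connected components $C_1,\dots,C_r$. These have pairwise disjoint supports $\supp(C_u)$.
\begin{itemize}
\item Pauli strings in different components commute, so the product $\hat\sigma_{I_1}\cdots\hat\sigma_{I_m}$ can be reordered into component blocks, each keeping its internal order.
\item The block for $C_u$ acts only on $\supp(C_u)$.
\item $D_{\bm{y}}$ is a tensor product over sites, so the operator inside the trace is a tensor product over the sets $\supp(C_1),\dots,\supp(C_r)$ and the complement of their union.
\end{itemize}
The normalized trace therefore factorizes as
\begin{equation*}
\prod_{k\notin\cup_u\supp(C_u)}\cosh(y_k)\;\prod_{u=1}^r\tr\Big[\prod_{k\in\supp(C_u)}e^{-y_k\hat\sigma^z_k}\prod_{j\in C_u}\hat\sigma_{I_j}\Big].
\end{equation*}
Multiplying and dividing each site in $\supp(C_u)$ by $\cosh(y_k)$, this equals
\begin{equation*}
\prod_{k=1}^n\cosh(y_k)\prod_{u=1}^r\tr\Big[\prod_{k\in\supp(C_u)}\frac{e^{-y_k\hat\sigma^z_k}}{\cosh(y_k)}\prod_{j\in C_u}\hat\sigma_{I_j}\Big].
\end{equation*}
The global prefactor $\prod_k\cosh(y_k)$ is pulled out of the whole series. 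The $m=0$ term contributes $\prod_k\cosh(y_k)$, which gives the leading $1$.

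\textbf{Step 3: Reorganizing into polymers.} Finally, each component, restricted together with its pairing, is an element of $\mathcal{S}^{\mathrm{conn}}_{2k}$, and components are pairwise compatible. The powers $(-\beta\sigma)^m$ split as $\prod_u(-\beta\sigma)^{|C_u|}$. It remains to turn the sum over $\tau\in\mathcal{S}_m$ with prefactor $1/m!$ into a sum over unordered compatible tuples of polymers with weights $1/|s_u|!$ and the overall $1/r!$. This is the same bijection as in \cite[Lemma~14]{zlokapa2026rigorousquasipolynomialtimeclassicalalgorithm}: choosing which labels in $[m]$ go to each component contributes a multinomial $m!/\prod_u|C_u|!$, and relabeling the components contributes $1/r!$. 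That bijection depends only on $\pi$, $\bm{I}$ and the support graph, not on what is inside the trace, so it carries over verbatim. The result is exactly \eqref{eq:polymer_tilted_firstmoment} with the stated $z_s(\beta,\bm{y})$.

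\textbf{Main obstacle.} The delicate step is Step 2. Unlike the untilted case, the product of Paulis in a component no longer needs to reduce to the identity to contribute; $\hat\sigma^z$ factors can survive against the tilt. So one must check that the factorization uses only commutation across disjoint supports and the product structure of $D_{\bm{y}}$, with no trace-vanishing argument. One must also make sure the order of Paulis within each component is preserved, since the within-component product is order-sensitive while the trace is not symmetric under arbitrary reorderings once $D_{\bm{y}}$ is inserted. Both points follow from the tensor-product decomposition above, and the combinatorial bookkeeping in Step 3 is unchanged from the untilted case.
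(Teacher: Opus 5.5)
Your proposal is correct and follows the same route as the paper: write the tilt as a tensor product of single-site factors, move each factor onto the unique connected component whose support contains that site, and use that the normalized trace factorizes over disjoint supports. You are more explicit than the paper's short sketch about the $\cosh(y_k)$ normalization, about preserving the order of Paulis within each component, and about reusing the untilted bijection with its $m!/\prod_u|C_u|!$ and $1/r!$ factors, but the argument is the same.
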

\begin{proof}
Note that the tilt operator can be expressed as a product of single site operators, i.e. 
\begin{equation}
    e^{-\sum_{k=1}^{n} y_k \sigma_k^z} = \otimes_{k=1}^{n} e^{-y_k \sigma^z_k}
\end{equation} 
We then commute every factor through each connected component until it hits the unique connected component whose support it is a member of. In this way, each connected component has appended to it all the factors of the tilted operator that comprise its support. Since the components are still mutually disjoint, the trace still distributes over the disjoint subspaces and we obtain the formula above.

\end{proof}
However, the Kotecký--Preiss condition of the previous section used to show concentration breaks down for the polymer series above. To see why, notice that the nonvanishing trace condition of \cref{eq:nonvanishing_untilted} is weakened by the presence of the tilt. In the untilted case, a site which appears only once in a replica necessarily leaves a nonidentity Pauli operator at that site and hence gives zero trace. By contrast,
\begin{align*}
    \frac{e^{-y_i\sigma_i^z}}{\cosh y_i} &= I-\tanh(y_i)\sigma_i^z,\\
    \tr\left[\frac{e^{-y_i\sigma_i^z}}{\cosh y_i}\sigma_i^z\right] &= -\tanh(y_i),
\end{align*}
which is generically nonzero. Thus, a site may now appear only once, provided the corresponding Pauli is $\sigma_i^z$.

In particular, in the mixed second-moment expansion, a polymer containing only a single mixed Wick pair (\textit{i.e.} $s=1$) need no longer vanish. If the common Pauli string is a $Z$-string supported on some $I\in\binom{[n]}{q}$, then each replica contains precisely one copy of this string and nevertheless has nonzero tilted trace. There are $\binom{n}{q}$ such choices, while the mixed Wick pair contributes only a factor $\sigma^2=\Theta(n^{-(q-1)})$. Hence the total contribution of these single-mixed-pair terms scales as
\begin{align*}
    \binom{n}{q}\sigma^2 &= \Theta(n).
\end{align*} 
Thus, the mixed-polymer Kotecký--Preiss tail is no longer $o(1)$ as its lowest-order contribution already diverges with $n$.

We thus proceed by conjecturing that if the untilted partition function $Z$ admits a constant-sized zero-free disk with high probability, then tilting $Z$ by any field $\bm{y}$ will preserve this zero-free disk. 

\begin{conj}[Stability of zero-freeness under product tilts]\label{claim:para_zero_free}
    Let $H$ be an $n$-qubit Hamiltonian, and suppose that its partition function
    \begin{align*}
        Z_H(\beta) &= \tr[e^{-\beta H}]
    \end{align*}
    is zero-free on a disk $D\subset\C$ containing the origin. Then, for every $\bm y\in\R^n$, the tilted partition function
    \begin{align*}
        Z_{H,\bm y}(\beta) &= \tr[e^{-\bm y\cdot\bm\sigma^z}e^{-\beta H}]
    \end{align*}
    is zero-free on the same disk $D$. Equivalently,
    \begin{align*}
        Z_H(\beta)\neq 0\ \text{for all }\beta\in D
        \quad\Longrightarrow\quad
        Z_{H,\bm y}(\beta)\neq 0\ \text{for all }\beta\in D\text{ and }\bm y\in\R^n.
    \end{align*}
\end{conj}
We motivate this conjecture by recalling Lee-Yang theory, which claims that a thermodynamic phase transition at an inverse temperature $\beta^*$ is signaled by the complex zeros of $Z(\beta)$ approaching the real axis at exactly $\beta^*$ in the thermodynamic limit \cite{blythe_lee-yang_2003}. Thus, a zero-free disk (or generally, any region containing the origin) extending to some $\beta$ is consistent with the absence of a thermodynamic phase transition up to that temperature. In this picture, \cref{thm:zero_freeness} implies that the untilted system remains in the replica-symmetric phase at least for $\beta < \frac{0.2256}{J\sqrt{q}}$. If we view the tilted system as the untilted system subject to an external field $\bm{y}$, it remains to argue that
an external field only maintains the replica-symmetric phase. Indeed, for classical spin glasses this is known to be true. Standard results on the phase diagram of the classical Sherrington--Kirkpatrick model in the presence of an external field \cite{guerra_replica_2006, bena_statistical_2005} reveal that the spin-glass phase is bounded by the de Almeida--Thouless line in the temperature--field plane. This phase diagram indicates that an external field only suppresses the spin-glass phase and eventually drives the system into the replica-symmetric phase. In this sense, an external field acts to remove, rather than induce, spin-glass order. This provides heuristic support for the conjecture that introducing a tilt should not shrink a zero-free disk already present in the untilted model. 

\printbibliography

\end{document}